\newtheorem{lemma}{Lemma}
\newtheorem{corollary}{Corollary}
\newtheorem{algorithm}{Algorithm}
\theoremstyle{definition}
\newtheorem{definition}{Definition}
\theoremstyle{remark}
\newtheorem{assumption}{Assumption}
\newtheorem{remark}{Remark}
\def\inprob{\stackrel{p}{\rightarrow}}
\def\indist{\rightsquigarrow}
\newcommand\ind{\protect\mathpalette{\protect\independenT}{\perp}}
\def\independenT#1#2{\mathrel{\rlap{$#1#2$}\mkern4mu{#1#2}}}
\def\logit{\text{logit}}
\def\expit{\text{expit}}
\DeclareSymbolFont{bbold}{U}{bbold}{m}{n}
\DeclareSymbolFontAlphabet{\mathbbold}{bbold}
\newcommand{\one}{\mathbbold{1}}
\def\bbE{\mathbb{E}}
\def\bbP{\mathbb{P}}
\def\bbR{\mathbb{R}}
\def\bbV{\mathbb{V}}
\def\cov{\hspace{0.2em}\text{cov}\hspace{0.1em}}
\title{Nonparametric Estimation of Conditional Incremental Effects}
\author{\\ Alec McClean, Zach Branson, Edward H. Kennedy \\ \\
    Department of Statistics \& Data Science \\
    Carnegie Mellon University \\ \\ 
    \texttt{\{alec, zach, edward\} @ stat.cmu.edu} \\
\date{}
    }
\begin{document}
\maketitle

\begin{abstract}
Conditional effect estimation has great scientific and policy importance because interventions may impact subjects differently depending on their characteristics. Most research has focused on estimating the conditional average treatment effect (CATE).  However, identification of the CATE requires all subjects have a non-zero probability of receiving treatment, or positivity, which may be unrealistic in practice. Instead, we propose conditional effects based on incremental propensity score interventions, which are stochastic interventions where the odds of treatment are multiplied by some factor. These effects do not require positivity for identification and can be better suited for modeling scenarios in which people cannot be forced into treatment. We develop a projection estimator and a flexible nonparametric estimator that can each estimate all the conditional effects we propose and derive model-agnostic error guarantees showing both estimators satisfy a form of double robustness. Further, we propose a summary of treatment effect heterogeneity and a test for any effect heterogeneity based on the variance of a conditional derivative effect and derive a nonparametric estimator that also satisfies a form of double robustness. Finally, we demonstrate our estimators by analyzing the effect of intensive care unit admission on mortality using a dataset from the (SPOT)light study.   
\end{abstract}


\section{Introduction} \label{sec:intro}

Estimating causal effects has great scientific and policy importance, and often there is interest in understanding if the effectiveness of a treatment depends on subjects' characteristics.  Conditional, or `heterogeneous', effects describe how a treatment effect varies with subjects' characteristics, and can illustrate qualitatively important phenomena that would be disguised by average effects.  Previous work has focused on estimating the conditional average treatment effect (CATE), which considers the difference between counterfactual mean outcomes when all subjects at some covariate level receive treatment and all subjects receive control (e.g., \cite{kennedy2020towards, kunzel2019metalearners, semenova2020debiased, athey2016recursive, foster2019orthogonal, shalit2017estimating, nie2017quasi}, among others). However, in many contexts researchers cannot force subjects to receive treatment or prevent them from receiving treatment, thereby making the counterfactual interventions behind the CATE unrealistic in practice.  As a concrete example, we will consider the effect of intensive care unit (ICU) admission on mortality for emergency room entrants \cite{keele2019does}.  Typically, the counterfactual interventions where everyone is admitted to the ICU and no one is admitted to the ICU are both practically infeasible because there are a finite number of ICU beds and because hospitals have a duty of care towards sick patients.  Instead, we may be interested in assessing the causal effect of an intervention that could more realistically be implemented in practice, such as an intervention that moderately increases or decreases the probability of admission to the ICU. For example, increasing or decreasing the number of ICU beds would likely increase or decrease the probability of admission for all patients.  Generally, these interventions can best be described with stochastic interventions, which characterize counterfactual outcomes under a shift in the treatment distribution \cite{diaz2012population, haneuse2013estimation, kennedy2019incremental, moore2012causal, young2014identification, zhou2022marginal, diaz2020causal}.  With a binary treatment, this shift can be characterized by an incremental propensity score intervention (``incremental intervention''), which multiplies the odds of treatment by a user-specified factor $\delta$ \cite{bonvini2021incremental, kennedy2019incremental}.  

\medskip

Recent research on stochastic interventions generally and incremental interventions specifically has focused on average effects \cite{kennedy2019incremental, diaz2020causal, wen2021intervention}. In this paper, we consider estimating conditional incremental effects, where we assess to what extent an incremental effect depends on subjects' characteristics, which can uncover treatment effect heterogeneity that is obscured by average effects. Furthermore, as well as corresponding to more realistic interventions, there are two additional advantages to considering conditional incremental effects instead of the CATE. First, when some subjects are estimated to be very likely or unlikely to receive treatment, then, without strong parametric modeling assumptions, it can be difficult to estimate the average treatment effect or the CATE, in the sense that variance estimates are large and confidence intervals are wide \cite{westreich2010positivity}. However, in this situation, incremental effects can still be estimated with narrow confidence intervals that provide precise results.  The reason is that identification and estimation of incremental effects does not rely on positivity, because the magnitude of the counterfactual intervention is allowed to vary for subjects with different probabilities of receiving treatment.

\medskip

A second advantage of using conditional incremental effects instead of the CATE is the ability to describe a continuum of policies between treating all subjects and treating none, where the interventions behind the CATE are special cases at each end of the continuum. A researcher might presume that stochastic effects follow a roughly linear relationship from one end of the continuum to the other, with the slope of the line matching the sign of the CATE.  As discussed in Remark \ref{rem:vary} in Section \ref{sec:setup}, this assumption is reasonable when conditioning on all the covariates, as the conditional incremental effect curve must be monotonic in the incremental parameter $\delta$ and so its slope will match the sign of the CATE.  However, most analyses, including our ICU data analysis in Section \ref{sec:data_analysis}, condition on only a few covariates of interest, allowing for the possibility of other incremental effect curves.  For example, consider Figure \ref{fig:motivation} -- a preview of the real data analysis in Section \ref{sec:data_analysis} -- which shows conditional incremental effect curves for several ICNARC scores (a measure of mortality risk).  The x-axis represents the incremental intervention parameter, where $\delta = 1$ corresponds to no intervention while $\delta > 1$ and $\delta < 1$ correspond to increasing and decreasing the likelihood, respectively, that patients are admitted to the ICU, while the y-axis shows estimated mortality rate.  The curves illustrate that the estimated counterfactual mortality rate is higher at $\delta = 5$ than at $\delta = 0.2$, in agreement with prior research indicating that admitting everyone to the ICU is harmful compared to admitting no one \cite{keele2019does}.  However, the full curves suggest a different practical implication, since $\delta = 1$ corresponds to the lowest estimated mortality rate, suggesting that maintaining the status quo is preferable to sending no one to the ICU.  Without considering stochastic effects that can be evaluated over a continuum of interventions, we would be unaware of these nuances.

\begin{figure}[h]
	\centering
	\includegraphics[width=.7\linewidth]{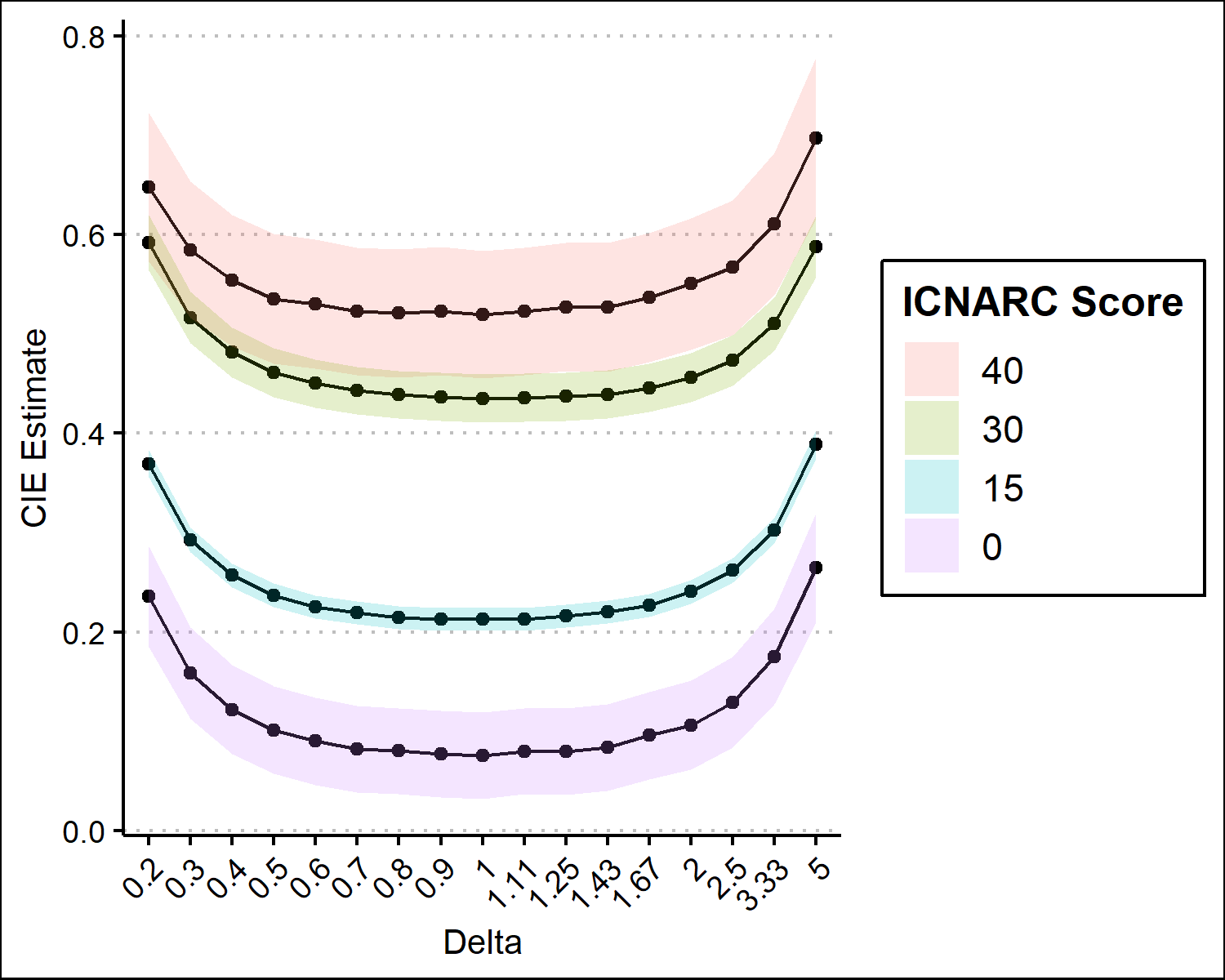}
	\caption{Conditional incremental effect curves for select ICNARC scores. The x-axis represents the incremental intervention parameter $\delta$, where $\delta = 1$ corresponds to no intervention, and $\delta > 1$ and $\delta < 1$ correspond to increasing and decreasing the likelihood of admission to the ICU, respectively. The y-axis shows estimated mortality rate. The curves depict the estimated conditional incremental effect for different ICNARC scores, which measure mortality risk. Our analysis shows that sending very few people to the ICU ($\delta = 0.2$) is preferable to sending many people to the ICU ($\delta = 5$) for each ICNARC score. However, we also estimate that the status quo ($\delta =1$) has the lowest mortality rate for each ICNARC score.}
	\label{fig:motivation}
\end{figure}


\subsection{Contribution and Structure}

Motivated by these observations, in this paper we describe how to estimate conditional causal effects for incremental interventions and illustrate how these effects facilitate a more nuanced understanding of treatment effect heterogeneity than the usual CATE.  We focus on incremental interventions for two reasons.  First, the incremental intervention has an intuitive parameterization for binary treatment since it corresponds to multiplying the odds of treatment by some factor.  Second, the intervention demonstrates favorable properties because it is anchored at the observed treatment distribution and considers a smooth shift from the observed distribution.  For example, identifying effects with this intervention does not require the positivity assumption that the probability of treatment is bounded between zero and one for all subjects, which is required for identifying the CATE.  This allows estimation of conditional incremental effects to still be precise even in the face of positivity violations, unlike estimation of the CATE.

\medskip

We consider three conditional effects in this paper.  First, we describe the conditional incremental effect (CIE), which is the conditional analog to the average incremental effect. As shown in Figure \ref{fig:motivation}, the CIE is described by a curve for each covariate value; this makes quantifying treatment effect heterogeneity challenging, because we have to consider how much these curves vary across covariate values.  As a preliminary extension of the CIE, we describe the conditional incremental \emph{contrast} effect (CICE), which considers a contrast between two incremental interventions, and is the incremental analog to the CATE.  The CICE can enable better understanding of treatment effect heterogeneity than the CIE, but it requires specifying two incremental $\delta$ parameters, and it may not immediately be clear which parameter values would be of most interest in a particular application.  Therefore, we propose the conditional incremental \emph{derivative} effect (CIDE), which corresponds to the change in the CIE under an infinitesimal shift of the treatment distribution.  We find that the CIDE is particularly useful for quantifying treatment effect heterogeneity for incremental interventions because it allows the researcher to examine the spectrum of interventions like in Figure \ref{fig:motivation} and also construct estimators and tests to quantify treatment effect heterogeneity, as we discuss further below and in Section \ref{sec:heterogeneity}.

\medskip

For the three conditional effects, we propose two estimators.  Our first estimator, the Projection-Learner, estimates the projection of the true conditional effect onto a finite dimensional model.  This added structure allows us to re-frame the estimator as the solution to a moment condition, and derive an efficient influence function.  Utilizing the properties of efficient influence functions, we provide double robust style error guarantees for the Projection-Learner, and show that its bias scales as a product of errors of the nuisance function estimators (in this paper, the nuisance functions are the propensity score and the outcome regression, and are defined in Section \ref{sec:setup}).  As a result, the Projection-Learner can achieve parametric efficiency even when the nuisance functions are estimated nonparametrically.  Our second conditional effect estimator, the I-DR-Learner, is a two-stage meta-learner that extends the DR-Learner from \cite{kennedy2020towards} to incremental effects.  For the I-DR-Learner, the first stage estimates the efficient influence function values for the relevant average effect and the second stage regresses those values against the conditioning covariates. We establish when the I-DR-Learner exhibits double robust style guarantees; in particular, the conditional effect must lie in a certain infinite dimensional function class, and the second stage regression must satisfy a form of stability.  In this case, we demonstrate that the I-DR-Learner can attain oracle efficiency when the nuisance functions are estimated nonparametrically.  Therefore, the I-DR-Learner cannot obtain parametric efficiency like the Projection-Learner, but it can estimate a larger class of true conditional effect curves with oracle efficiency.

\medskip

Both the Projection-Learner and the I-DR-Learner can be used to estimate conditional effect curves across variables of interest.  A natural question is whether there is any treatment effect heterogeneity across the curve. Thus, researchers may also be interested in a one-dimensional summary of effect heterogeneity and a corresponding test for any effect heterogeneity.  Therefore, we also propose a fourth effect, the variance of the conditional incremental derivative effect (V-CIDE), which can be used to estimate the degree of effect heterogeneity and test for any effect heterogeneity.  For the V-CIDE, we derive a novel double robust style estimator based on its efficient influence function, illustrate that our estimator attains parametric efficiency under weak conditions on the nuisance function estimators, and derive a corresponding test for any effect heterogeneity.

\medskip

The structure of the paper is as follows.  In Section \ref{sec:notation} we define relevant notation.  In Section \ref{sec:setup}, we define the data setup and different estimands of interest, state the causal assumptions required for identification, and establish identification results for our conditional effects.  In Section \ref{sec:conditional_effects}, we outline the Projection-Learner and I-DR-Learner and demonstrate their convergence properties in Sections \ref{sec:proj-learner} and \ref{sec:i-dr-learner} respectively.  In Section \ref{sec:heterogeneity}, we outline a nonparametric estimator for the V-CIDE, demonstrate its convergence properties, and describe methods for inference. In Section \ref{sec:data_analysis}, we analyze data on ICU admission from the (SPOT)light prospective cohort study. We estimate that increasing or decreasing subjects' odds of attending the ICU would adversely affect mortality rates, suggesting that the status quo is preferable. Importantly, this differs from what would be concluded for CATE estimation, which for this application would not be reliable, because there are positivity violations.  Using our test, we do not find evidence that there is treatment effect heterogeneity.  Finally, in Section \ref{sec:discussion} we conclude and discuss future extensions of this research.

\subsection{Notation} \label{sec:notation}

We use $\bbE$ for expectation and $\bbV$ for variance.  We use $\bbP_n(f) = \bbP_n \{ f(Z) \} = \frac1n \sum_{i=1}^{n} f (Z_i)$ as a shorthand for sample averages and $\bbV_n \{ f(Z) \} = \frac{1}{n-1} \sum_{i=1}^{n} \left\{ f(Z_i) - \frac1n \sum_{j=1}^{n} f(Z_j) \right\}^2$ as shorthand for the sample variance.  When $x \in \bbR^d$ we let $\lVert x \rVert^2 = \sum_{j=1}^{d} x_j^2$ denote the squared Euclidean norm, and for generic possibly random functions $f$ we let $\lVert f \rVert^2 = \int_{\mathcal{Z}} f(z)^2 d\bbP(z)$ denote the squared $\ell_2 (\bbP)$ norm.  We use the notation $a \lesssim b$ to mean $a \leq Cb$ for some constant $C$, and $a \asymp b$ to mean $cb \leq a \leq Cb$ for some constants $c$ and $C$, so that $a \lesssim b$ and $b \lesssim a$.  We use $\indist$ to denote convergence in distribution and $\inprob$ for convergence in probability.  We use $\widehat \bbE_n$ to denote the predicted regression function estimate from $n$ samples (e.g., if we considered a regression of $Y$ against $X$, then $\widehat \bbE_n (Y \mid X = x)$ is the estimated regression function of $Y$ against $X$ at $X = x$ using $n$ data points $\{ (X_i, Y_i) \}_{i=1}^{n}$).   We use the set notation $A \setminus B$ to indicate ``$A$ and not $B$''.


\section{Estimands and Identification Results for Conditional Incremental Effects} \label{sec:setup}

In this section we describe estimands for incremental effects, and establish assumptions for identifying these effects.  Assume we observe $\{Z_1, ..., Z_n \}$ with $Z_i \stackrel{iid}{\sim} \mathcal{P}$ where $Z = (X, A, Y)$, $X \in \bbR^d$ are covariates, $A \in \{0, 1\}$ is treatment status, and $Y \in \bbR$ is an outcome.  We define potential outcomes $Y^a$ as the outcome that would be observed when treatment $A = a$. 

\medskip

Much of the causal literature has focused on estimating the average treatment effect (ATE) and conditional ATE (CATE), defined as 
\begin{align}
	\text{ATE: }&\psi_{ate} = \bbE (Y^1 - Y^0),\text{ and} \\
	\text{CATE: } &\tau_{cate}(x) = \bbE (Y^1 - Y^0 \mid X = x).
\end{align}
\noindent  To identify the ATE and the CATE, three causal assumptions are required:
\begin{assumption}\label{asmp:cons} 
	\textbf{(Consistency)}, $Y = Y^{a}$ if $A = a$
\end{assumption} 

\begin{assumption}\label{asmp:exch} 
	\textbf{ (Exchangeability)}, $A \ind Y^{a} \mid X$
\end{assumption} 

\begin{assumption} \label{asmp:positivity} 
	\textbf{(Positivity)}, There exists $\varepsilon > 0$ such that $\bbP \left\{ \varepsilon \leq \bbP \left( A = a \mid X \right) \leq 1 - \varepsilon \right\} = 1$ for $a \in \{0, 1\}$ and all $X$
\end{assumption}

Consistency says that if an individual takes treatment $a$, we observe the potential outcome under that treatment regime.  By contrast, consistency would be violated if, for example, there were interference between subjects, such that one subject's treatment status affected another's outcome. Exchangeability says that treatment is effectively randomized within covariate strata, in the sense that treatment is independent of subjects’ potential outcomes after conditioning on covariates.  Positivity says that all subjects have a non-zero chance of receiving treatment or control, and positivity may be unrealistic in practice.  Although positivity is required to identify the ATE and the CATE, as we show next, only Assumptions \ref{asmp:cons} and \ref{asmp:exch} are be required to identify conditional incremental effects.

\subsection{Incremental Propensity Score Interventions}

The incremental intervention corresponds to multiplying each individual's odds of treatment by a user-specified parameter $\delta$.  We define the propensity score, the probability that an individual receives treatment, as $\pi (X) = \bbP(A = 1 \mid X)$, and then the shifted propensity score under an incremental intervention is defined as
\begin{equation} \label{eq:inc_int}
	q \{ \pi(X); \delta \} = \frac{\delta \pi(X)}{\delta \pi(X) + 1 - \pi(X)}.
\end{equation}
Then, the average incremental effect is 
\begin{equation} \label{eq:inc_int_gen}
	\bbE \left( Y^{Q_\delta} \right)
\end{equation}
where $Q_\delta \text{ is drawn from a Bernoulli distribution with parameter } q \{ \pi(X); \delta \}.$ Unlike ATE-style interventions, the incremental intervention is stochastic because it does not deterministically assign subjects to treatment or control - rather, it shifts their propensity score. The incremental intervention also corresponds to multiplying the odds of treatment by $\delta$ since $\delta = \frac{q \{ \pi(X); \delta \} / [ 1 - q \{ \pi(X); \delta \} ]}{ \pi(X) / \{ 1 - \pi(X) \}}$.  Incremental interventions were first proposed in \cite{kennedy2019incremental}, with double robust style estimators for average (possibly time-varying) incremental effects.  The analysis of average effects has been extended to censored data \cite{kim2021incremental} and used for estimating the effect of aspirin on the incidence of pregancy \cite{rudolph2022estimation}; a review is provided in \cite{bonvini2021incremental}.  

\medskip

While this intervention is not prescriptive - it is unlikely a hospital would seriously consider an intervention where patients are admitted to the ICU by draws from a Bernoulli distribution - it can be useful for describing interventions that could be implemented in practice.  For example, a researcher may want to know what would happen if a hospital changed its admission criteria to make it slightly more likely that emergency room entrants were admitted to the ICU. This cannot be described by the CATE, whereas an incremental intervention with $\delta > 1$ could appropriately describe this counterfactual question. Further, a spectrum of $\delta$ could appropriately describe the range of admission criteria changes that a hospital may implement in practice.

\medskip

The incremental intervention is also dynamic in the sense that the intervention changes with $X$ if $\pi(X)$ changes with $X$. This occurs because the intervention is constant on the odds ratio scale rather than the unit scale.  For example, if $\delta = 2$ and the propensity scores for two covariate values are $\{ \pi(X = x_1), \pi(X = x_2) \} = \left\{ 0.25, 0.5 \right\}$, then the intervention propensity scores are $\{ q \{ \pi(X = x_1) = 0.25; \delta = 2 \}, q\{ \pi(X = x_2) = 0.5; \delta = 2  \} \} = \{ 0.4, 0.66 \}$.  Therefore, the propensity score increases by $0.15$ when $\pi(x_1) = 0.25$ and increases by $0.16$ when $\pi(x_2) = 0.5$.  However, although the interventions are dynamic in the sense just outlined, they are not dynamic in the sense that the user-specified parameter $\delta$ changes with $X$.  We leave this as an avenue for future exploration. If $\delta$ were allowed to vary with $X$, a natural question then might be: what is the ``optimal'' choice of $\delta$ at a particular value $X = x$?  As in the deterministic intervention literature, finding an optimal intervention could fruitfully build on the conditional effect estimators proposed in this paper \cite{murphy2003optimal, chakraborty2014dynamic}.

\medskip

Other stochastic interventions have also been considered in the literature, such as modified treatment policies, which shift a continuous treatment by a specified amount \cite{haneuse2013estimation, diaz2012population}; dynamic interventions that depend on some time-varying information about subjects \cite{young2014identification, taubman2009intervening}; and exponential tilts, which shift a discrete but possibly multi-valued treatment distribution \cite{diaz2020causal}.  The incremental effect can also be interpreted as an exponential tilt.  \cite{wen2021intervention} recently proposed a similar intervention to the incremental intervention, but their intervention is parameterized as a shift of the risk ratio $\frac{q\{ \pi(X); \delta \}}{\pi(X)}$, rather than the odds ratio.

\subsection{Conditional Incremental Effects}

Now we'll consider conditional incremental effects.  We denote $V \subseteq X$ as either one or a set of covariates, and define the conditional incremental effect (CIE) as the counterfactual mean under an incremental intervention conditional on covariates $V$,
\begin{equation} \label{eq:cie}
	\text{CIE: } \tau_{cie} (v; \delta) = \bbE \left( Y^{Q_\delta} \mid V = v \right).
\end{equation}
The following proposition establishes that the CIE is identifiable as a function of the observed data distribution.
\begin{restatable}{proposition}{proponetime}\label{prop:id_t1}
	Let $Q_\delta$ denote the incremental intervention defined in eq. \eqref{eq:inc_int_gen}.   Under Assumptions \ref{asmp:cons}-\ref{asmp:exch}, then the mean counterfactual outcome given covariates $V = v$ is identified by
	\begin{equation} \label{eq:cie_id}
		\bbE \left( Y^{Q_\delta} \mid V = v \right) = \bbE \left[ \frac{ \delta \pi(X) \mu(1, X) + \{ 1 - \pi(X) \} \mu(0, X)}{\delta \pi(X) + 1 - \pi(X)} \ \bigg| \ V = v \right]
	\end{equation}
	where $\mu(a, x) = \bbE (Y \mid A = a, X = x)$.
\end{restatable}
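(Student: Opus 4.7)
The plan is to establish the identification formula by iterating expectations, peeling off the stochastic intervention $Q_\delta$ first, and then invoking the standard identification of the conditional mean $\mathbb{E}(Y^a \mid X)$ under Assumptions \ref{asmp:cons} and \ref{asmp:exch}.

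First, since $V \subseteq X$, I would apply the tower property to write
\[
\mathbb{E}\left( Y^{Q_\delta} \mid V = v \right) = \mathbb{E}\left[ \mathbb{E}\left( Y^{Q_\delta} \mid X \right) \mid V = v \right].
\]
This reduces the problem to identifying the inner conditional expectation $\mathbb{E}(Y^{Q_\delta} \mid X)$. Next, I would use the fact that $Q_\delta \mid X$ is a Bernoulli draw with parameter $q\{\pi(X); \delta\}$ that depends only on $X$, so $Q_\delta$ is independent of $(Y^0, Y^1)$ given $X$. Conditioning on $Q_\delta$ and integrating it out yields
\[
\mathbb{E}\left( Y^{Q_\delta} \mid X \right) = q\{\pi(X); \delta\} \, \mathbb{E}(Y^1 \mid X) + \bigl[ 1 - q\{\pi(X); \delta\} \bigr] \, \mathbb{E}(Y^0 \mid X).
\]

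Then I would identify each conditional mean $\mathbb{E}(Y^a \mid X)$ with the observable regression $\mu(a, X)$ using exchangeability to replace conditioning on $X$ with conditioning on $(A = a, X)$, and consistency to replace $Y^a$ with $Y$:
\[
\mathbb{E}(Y^a \mid X) = \mathbb{E}(Y^a \mid A = a, X) = \mathbb{E}(Y \mid A = a, X) = \mu(a, X).
\]
Substituting the explicit form $q\{\pi(X); \delta\} = \delta \pi(X) / \{ \delta \pi(X) + 1 - \pi(X) \}$ from eq. \eqref{eq:inc_int} gives
\[
\mathbb{E}\left( Y^{Q_\delta} \mid X \right) = \frac{\delta \pi(X) \mu(1, X) + \{ 1 - \pi(X) \} \mu(0, X)}{\delta \pi(X) + 1 - \pi(X)},
\]
and plugging this back into the outer expectation over $X \mid V = v$ delivers the claimed formula.

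The only mildly delicate step is the independence argument justifying that $\mathbb{E}(Y^{Q_\delta} \mid X)$ factors as a mixture of $\mathbb{E}(Y^a \mid X)$ weighted by $q\{\pi(X);\delta\}$; this requires making explicit that the stochastic intervention draws $Q_\delta$ from a distribution determined entirely by $X$ and independent of the potential outcomes given $X$, so that conditioning on $Q_\delta = a$ does not alter the conditional law of $Y^a$ given $X$. Once that is stated cleanly, the rest is a direct application of the tower property, exchangeability, and consistency, and no positivity assumption is invoked because the weights $q\{\pi(X); \delta\}$ and $1 - q\{\pi(X); \delta\}$ are well-defined even when $\pi(X) \in \{0, 1\}$.
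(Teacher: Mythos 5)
Your proof is correct. The paper itself does not write out a standalone argument for this proposition: it simply observes that the result is a straightforward corollary of Corollary 1 in \cite{kennedy2019incremental} (the average-effect identification under an incremental intervention), applied after conditioning on $V$; the same formula is then used directly, citing Assumptions \ref{asmp:cons} and \ref{asmp:exch}, in the proof of Proposition \ref{prop:deriv}. Your derivation is the self-contained version of that same underlying argument: tower property over $X \mid V = v$, factoring $\bbE ( Y^{Q_\delta} \mid X )$ as a $q\{\pi(X);\delta\}$-mixture of $\bbE(Y^a \mid X)$ using that the intervention draw depends only on $X$ and is exogenous to the potential outcomes, and then exchangeability plus consistency to replace $\bbE(Y^a \mid X)$ with $\mu(a,X)$. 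What your write-up buys is transparency about exactly where each assumption enters and why positivity is dispensable; what the paper's citation buys is brevity. One small refinement worth making explicit: on the event $\pi(X) \in \{0,1\}$, the regression $\mu(a,X)$ for the never-observed arm is not defined by $\bbE(Y \mid A=a, X)$, so the step $\bbE(Y^a \mid X) = \bbE(Y \mid A=a, X)$ is only meaningful where $\bbP(A=a \mid X) > 0$; the identification formula is nonetheless unaffected because the corresponding weight ($\delta\pi(X)$ or $1-\pi(X)$) vanishes there, which is the precise sense in which positivity is not needed --- slightly sharper than your closing remark that the weights are merely well-defined.
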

We leave all proofs to the appendix.  Proposition \ref{prop:id_t1} is a straightforward corollary of Corollary 1 in \cite{kennedy2019incremental}, and shows that the CIE is identified by a linear combination of the regression functions $\mu(A, X)$ where the weights depend on the probabilities of receiving treatment and control under the incremental intervention.

\medskip

The CIE does not consider a contrast between two interventions, and so it does not immediately describe treatment effect heterogeneity.  In this sense, it is similar to the conditional counterfactual mean under treatment, $\bbE (Y^1 \mid V = v)$.  As a first approach to understanding treatment effect heterogeneity, we define a second estimand, the conditional incremental constrast effect (CICE), which considers the difference between two incremental effects,
\begin{equation} \label{eq:cice}
	\text{CICE: } \tau_{cice} (v; \delta_u, \delta_l) \equiv \bbE \left( Y^{Q_{\delta_u}} - Y^{Q_{\delta_l}} \mid V = v \right).
\end{equation}
The CICE tells us the difference (conditional at $V = v$) between the average outcomes if we multiply the odds of treatment by $\delta_u$ and if we multiply the odds of treatment by $\delta_l$.  We can understand treatment effect heterogeneity by looking at how the CICE changes with $V$.  The CICE is readily comparable to the CATE since both consider contrasts between two interventions.  In fact, if positivity is satisfied in Assumption \ref{asmp:positivity}, then the CICE approaches the CATE as $\delta_u \to \infty$ and $\delta_l \to 0$ since $\lim_{\delta_u \to \infty, \delta_l \to 0} \tau_{cice} (v; \delta_u, \delta_l) = \bbE \left(Y^1 - Y^0 \mid V = v \right).$  Identification for the CICE follows by Proposition \ref{prop:id_t1} and linearity of expectation since $\tau_{cice} (v; \delta_u, \delta_l) = \tau_{cie} (v; \delta_u) - \tau_{cie} (v; \delta_l)$.   

\subsection{Derivative Effects}

A limitation of the CICE is that it requires specifying two parameters, $\delta_u$ and $\delta_l$, and it may not immediately be clear which parameter values would be of most interest in a particular application.  Instead, we can consider a derivative effect, which describes the change in counterfactual outcomes with an infinitesimally small change in the treatment distribution.  To ease exposition, we re-parametrize the average incremental effect with $t$ instead of $\delta$, and define the average derivative effect with respect to $t$, and evaluated at $\delta$, as
$$
\frac{\partial}{\partial t} \bbE \left( Y^{Q_t} \right) \Big|_{t = \delta}
$$ 
and the associated conditional incremental derivative effect as
\begin{equation} \label{eq:cide}
	\text{CIDE: } \tau_{cide} (v; \delta) = \frac{\partial}{\partial t} \tau_{cie} (v; t) \Big|_{t = \delta}.
\end{equation}
The CIDE demonstrates treatment effect heterogeneity if it varies across $v$.  Thus, it can illustrate effect heterogeneity across a continuum of policies if it is evaluated at several values for $\delta$.  Under suitable regularity conditions such that the Leibniz integral rule to exchange differentiation and integration applies, the CIDE is identified according to the following result.
\begin{restatable}{proposition}{propderiv}\label{prop:deriv}
	Let $Q_\delta$ denote the incremental intervention defined in eq. \eqref{eq:inc_int_gen}.   Under Assumptions \ref{asmp:cons} and \ref{asmp:exch}, the CIDE is identified by
	\begin{equation} \label{eq:cide_id}
		\tau_{cide} (v; \delta) = \bbE \left( \left[ \frac{\pi(X) \{ 1 - \pi(X) \}}{ \{ \delta \pi(X) + 1 - \pi(X) \}^2} \right] \cdot \{ \mu(1, X) - \mu(0, X) \} \ \bigg| \ V = v \right)
	\end{equation}
	where $\mu(a, x) = \bbE (Y \mid A = a, X = x)$.
\end{restatable}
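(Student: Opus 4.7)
The plan is to reduce the derivative estimand directly to the identification formula for $\tau_{cie}(v;t)$ supplied by Proposition \ref{prop:id_t1}, and then differentiate under the expectation. Specifically, by definition $\tau_{cide}(v;\delta) = \tfrac{\partial}{\partial t}\tau_{cie}(v;t)\big|_{t=\delta}$, and Proposition \ref{prop:id_t1} already gives
\[
\tau_{cie}(v;t) = \bbE\left[\frac{t\,\pi(X)\,\mu(1,X) + \{1-\pi(X)\}\,\mu(0,X)}{t\,\pi(X) + 1 - \pi(X)} \ \bigg| \ V = v\right]
\]
under Assumptions \ref{asmp:cons}--\ref{asmp:exch} alone. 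So the whole content of Proposition \ref{prop:deriv} is to pass $\partial/\partial t$ inside the conditional expectation and compute a one-variable derivative.

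First, I would invoke the Leibniz rule (flagged in the statement as holding under ``suitable regularity conditions'') to write
\[
\tau_{cide}(v;\delta) = \bbE\left[ \frac{\partial}{\partial t}\left\{\frac{t\,\pi(X)\,\mu(1,X) + \{1-\pi(X)\}\,\mu(0,X)}{t\,\pi(X) + 1 - \pi(X)}\right\} \bigg|_{t=\delta} \ \bigg| \ V = v\right].
\]
A sufficient regularity condition is that the integrand and its derivative in $t$ be dominated by an integrable envelope in a neighborhood of $\delta$; since $\pi(X) \in [0,1]$ and $t$ ranges in a compact neighborhood of $\delta > 0$, the denominator $t\pi(X) + 1 - \pi(X)$ is bounded away from $0$, so a simple integrability condition on $\mu(0,X)$ and $\mu(1,X)$ is enough.

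Next, with the shorthand $\pi = \pi(X)$, $\mu_a = \mu(a,X)$, I would apply the quotient rule to the integrand. The derivative of the numerator $t\pi\mu_1 + (1-\pi)\mu_0$ with respect to $t$ is $\pi\mu_1$, and the derivative of the denominator $t\pi + 1 - \pi$ is $\pi$. Hence
\[
\frac{\partial}{\partial t}\frac{t\pi\mu_1 + (1-\pi)\mu_0}{t\pi + 1 - \pi}
= \frac{\pi\mu_1(t\pi + 1 - \pi) - \pi\{t\pi\mu_1 + (1-\pi)\mu_0\}}{(t\pi + 1 - \pi)^2}
= \frac{\pi(1-\pi)(\mu_1 - \mu_0)}{(t\pi + 1 - \pi)^2},
\]
where the $t\pi^2\mu_1$ terms cancel. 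Evaluating at $t = \delta$ and substituting back yields exactly the displayed expression in \eqref{eq:cide_id}.

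I do not expect any serious obstacle: the only nontrivial step is the interchange of differentiation and integration, and this is already hedged in the statement by appeal to regularity. The algebraic simplification giving the clean factor $\pi(1-\pi)(\mu_1 - \mu_0)$ is routine quotient-rule bookkeeping, and the result then follows immediately from Proposition \ref{prop:id_t1}, so no further causal assumptions beyond Assumptions \ref{asmp:cons}--\ref{asmp:exch} are needed.
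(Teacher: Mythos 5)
Your proposal is correct and follows essentially the same route as the paper's proof: identify $\tau_{cie}(v;t)$ via Proposition \ref{prop:id_t1} (using Assumptions \ref{asmp:cons} and \ref{asmp:exch}), exchange the derivative with the conditional expectation under regularity conditions, and compute the quotient-rule derivative, which simplifies to $\pi(X)\{1-\pi(X)\}\{\mu(1,X)-\mu(0,X)\}/\{\delta\pi(X)+1-\pi(X)\}^2$. The only difference is that you spell out the domination condition and the algebra explicitly, which the paper leaves implicit.
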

Proposition \ref{prop:deriv} shows that the CIDE is a weighted average of the difference in mean outcomes under treatment and control, where the weights depend on the propensity scores and the incremental propensity scores.  

\begin{remark} \label{rem:vary}
	When $V = X$, the CIDE (and, by extension, the CIE) must be monotonic across $\delta$.  This is clear because $\frac{\pi(x) \{ 1 - \pi(x) \}}{\{ \delta \pi(x) + 1 - \pi(x) \}^2}$ is always non-negative, while $\mu(1, x) - \mu(0, x)$ does not change with $\delta$.  However, when $V \subset X$, the CIDE and the CIE need not be monotonic across $\delta$.  If they are not monotonic, this indicates that $\mu(1, X) - \mu(0, X)$ changes sign across $X \setminus V$.
\end{remark}

We also propose a one-dimensional functional to assess treatment effect heterogeneity.  We consider the variance of the conditional incremental derivative effect (V-CIDE), defined as 
\begin{equation} \label{eq:vcide}
	\text{V-CIDE: } \bbV \left\{  \tau_{cide} (V; \delta) \right\}.
\end{equation}
When this variance equals zero, it implies that the CIDE is constant over $V$, and thus there is no treatment effect heterogeneity.  As  before, the V-CIDE depends on $\delta$, so it can be estimated over a grid of $\delta$ to evaluate treatment effect heterogeneity over a continuum of policies.  By Proposition \ref{prop:deriv}, the V-CIDE is identified by
$$
\bbV \left\{ \tau_{cide}(V; \delta) \right\} = \bbV \left\{ \bbE \left( \left[ \frac{\pi(X) \{ 1 - \pi(X) \}}{ \{ \delta \pi(X) + 1 - \pi(X) \}^2} \right] \cdot \{\mu(1, X) - \mu(0, X) \ \bigg| \ V = v \right) \right\}
$$
and when $V = X$, this simplifies to
$$
\bbV \left\{ \tau_{cide}(X; \delta) \right\} = \bbV \left(  \left[ \frac{\pi(X) \{ 1 - \pi(X) \}}{ \{ \delta \pi(X) + 1 - \pi(X) \}^2} \right] \cdot \{\mu(1, X) - \mu(0, X) \} \right).
$$
In Section \ref{sec:heterogeneity}, we will derive an efficient estimator for the V-CIDE and a propose a test for whether there is any effect heterogeneity at all.  In the next section, we will derive efficient estimators for the CIE, the CICE, and the CIDE.

\section{Estimating conditional incremental effects} \label{sec:conditional_effects}

The identification results in Propositions \ref{prop:id_t1} and \ref{prop:deriv} suggest straightforward ``plug-in'' estimators for the conditional effects.  Given estimates for $\widehat \pi(X), \widehat \mu(a, X)$ and $\widehat \bbP(X \mid V = v)$, an estimator can be constructed by plugging these estimates into the identification formulae in equations \eqref{eq:cie_id} and \eqref{eq:cide_id}.  If models for the nuisance functions are parametric and correctly specified, this approach can be optimal as the plug-in estimator will converge to a normal distribution at a $n^{-1/2}$-rate.  However, if the parametric models are misspecified, then the plug-in estimator will be biased \citep{vansteelandt2012model}.   Given this, it is tempting to use flexible nonparametric models to estimate the nuisance functions, in order to alleviate issues of model misspecification.  However, in this case, typically the plug-in estimator will inherit the slow rate of convergence for the nonparametric models.

\medskip

This motivates estimators based on semiparametric efficiency theory \citep{bickel1993efficient, van2000asymptotic, tsiatis2006semiparametric, van2002semiparametric, van2003unified}.  The first-order bias of the nonparametric plug-in can be characterized by the efficient influence function of the estimand, which can be thought of as the first derivative in a Von Mises expansion of the estimand \citep{von1947asymptotic}.  Thus, a natural approach is to estimate the efficient influence function and subtract this estimate from the nonparametric plug-in estimate in order to ``de-bias'' the plug-in.  A benefit of estimators based on the efficient influence function is that their bias is a second-order product of errors of the nuisance function estimators, such that the estimator can achieve $n^{-1/2}$ efficiency even when the nuisance functions are estimated at slower nonparametric rates \citep{van2003unified, chernozhukov2018double, kennedy2022semiparametric}.  We consider two estimators that utilize efficient influence functions; as a result, they both exhibit double robust style error guarantees.

\medskip

Our first estimator, the Projection-Learner, targets the projection of the true conditional effect onto a finite dimensional working model. Projection estimators have a long history in statistics \citep{huber1967the, white1980using, buja2019modelsi, buja2019modelsii} and causal inference \citep{ neugebauer2007nonparametric, NBERw24678, semenova2020debiased, kennedy2021semiparametric, cuellar2020non}.  This added structure  us to re-frame the estimator as the solution to a moment condition, and derive an efficient influence
function.  We show that the Projection-Learner exhibits a version of double robustness, and attains parametric efficiency under weak model-agnostic $n^{-1/4}$ conditions on the nuisance function estimators, which are achievable for nonparametric estimators under suitable smoothness or sparsity. 

\medskip

Our second estimator, the I-DR-Learner (inspired by the ``DR-Learner'' in \cite{kennedy2020towards}), instead targets the true conditional effect.   The I-DR-Learner is an estimation procedure that, like many recent CATE estimation approaches, tries to estimate the true conditional effect as flexibly as possible \citep{athey2016recursive, foster2019orthogonal, hahn2020bayesian, kunzel2019metalearners, nie2017quasi, shalit2017estimating, zimmert2019nonparametric, kennedy2020towards}.  Without any further assumptions, no efficient influence function exists for the true conditional effect because it is not pathwise differentiable \citep{hines2021demystifying}.  So, it is not possible to construct an estimator directly from an efficient influence function for the conditional effect.  Instead, the I-DR-Learner is a two stage meta-learner, which estimates the efficient influence function values for the relevant average effect (e.g., the average incremental effect for the CIE) in the first stage, and then regresses these values against the conditioning covariates in the second stage.  We show that if the second stage regression satisfies a generalization of the classic stochastic equicontinuity-type condition, the I-DR-Learner exhibits a form of double robustness and achieves oracle efficiency under weak model-agnostic conditions ($n^{-1/4}$ or slower convergence rates) on the nuisance function estimators.

\subsection{The Projection-Learner} \label{sec:proj-learner}

In this subsection, we illustrate the Projection-Learner.  We first define the finite dimensional working model 
$$
g(v; \delta, \beta) \equiv g(v; \beta), \beta \in \bbR^p,
$$
for incremental intervention parameter $\delta$ and model parameter $\beta$.  This model could be for the CIE, the CICE, or the CIDE, in which case we would use $\delta$ for the CIDE and the CIE, and $\delta_u$ and $\delta_l$ for the CICE.  For ease of exposition, we suppress the dependence of $g(v; \delta, \beta)$ on $\delta$ (or $\delta_u$ and $\delta_l$).  A simple example might be $g(v; \beta) = \beta_1 v$, where the covariate effect modification depends linearly on the value of the covariate.  But, the working model can be complex if needed, and should be informed by subject-specific knowledge if possible.   In what follows, we present results in terms of the CIDE, but the results also apply to the CIE and the CICE.

\medskip

We define the projection of the CIDE onto $g(v; \beta)$ as the $g(v; \beta)$ closest to $\tau_{cide} (v; \delta)$ over weighted $\ell_2$ distance.  Specifically, we define $\beta^\ast$ as the coefficients corresponding to the least-squares projection, and $g(v; \beta^\ast)$ as the projection. Mathematically, $\beta^\ast$ is 
\begin{equation} \label{eq:wm_min}
    \beta^\ast = \arg \min_{\beta} \int_{\mathcal{V}} \big\{ \tau_{cide} (v; \delta) - g(v; \beta) \big\}^2 d\bbP(v).
\end{equation}
One could also incorporate a weight function and use a different distance metric \citep{kennedy2021semiparametric}.  We set the weights to $1$ and focus on $\ell_2$ distance for ease of exposition, but all our results follow with other weights, and could be extended to other distance metrics.

\medskip

As long as $g(v; \beta)$ is differentiable with respect to $\beta$, $\beta^\ast$ is the solution of a moment condition.  The moment condition corresponds to the first derivative with respect to $\beta$,
\begin{equation} \label{eq:moment_condition}
    m(\beta) \equiv 2 \int_{\mathcal{V}} \frac{\partial g(v; \beta)}{\partial \beta} \{ \tau_{cide} (v; \delta) - g(v; \beta) \} d\bbP(v).
\end{equation}
Then, the solution $\beta^\ast$ in \eqref{eq:wm_min} satisfies $m(\beta^\ast) = 0$ in \eqref{eq:moment_condition}, and the projection of the CIDE onto the working model is $g(v; \beta^\ast)$.  

\begin{remark}
This setup is different from the proper semiparametric approach, since the definition of $\beta^\ast$ in eq. \eqref{eq:wm_min} does not assume anything about the true conditional effect curve.  By contrast, a proper semiparametric approach assumes a finite dimensional model is correctly specified for the conditional effect curve \citep{robins1992estimating, robinson1988root, vansteelandt2014structural, robins1994correcting}.
\end{remark}

It it is possible to derive an efficient influence function and thus a semiparametrically efficient estimator for the moment condition $m(\beta)$, and by extension for $\beta^\ast$ and $g(v; \beta^\ast)$.  We use this efficient influence function to construct the Projection-Learner.  The primary building block for the efficient influence function of the moment condition is the un-centered efficient influence function for the relevant average effect.  The efficient influence functions for the average incremental effect and the average incremental contrast effect were derived in \cite{kennedy2019incremental} Corollary 2, and are stated in equations \eqref{eq:ie_eif} and \eqref{eq:ice_eif} in the appendix.  Meanwhile, Lemma \ref{lem:cide_eif} establishes the efficient influence function for the average incremental derivative.

\begin{restatable}{lemma}{lemcideeif} \label{lem:cide_eif}
Under Assumptions \ref{asmp:cons} and \ref{asmp:exch}, the un-centered efficient influence function for the average incremental derivative effect, $\bbE \big\{ \tau_{cide} (V; \delta) \big\}$, is
\begin{align}
    \xi(Z; \delta) &= \left[ \frac{\pi(X) \{ 1 - \pi(X) \}}{ \{ \delta \pi(X) + 1 - \pi(X) \}^2} \right] \cdot \left[ \frac{A}{\pi(X)} \Big\{ Y - \mu(1, X) \Big\} - \frac{1-A}{1-\pi(X)} \Big\{ Y - \mu(0, X) \Big\} \right]  \nonumber \\
    &\hspace{0.2in} + \left[ \frac{1}{\{ \delta \pi(X) + 1 - \pi(X) \}^2} - \frac{2\delta\pi (X)}{\{ \delta \pi(X) + 1 - \pi(X) \}^3} \right] \cdot \Big\{ A - \pi(X) \Big\} \cdot \{ \mu(1, X) - \mu(0, X) \} \nonumber \\
    &\hspace{0.2in} + \left[ \frac{\pi(X) \{ 1 - \pi(X) \}}{ \{ \delta \pi(X) + 1 -  \pi(X) \}^2} \right] \cdot \{ \mu(1, X) - \mu(0, X) \}. \label{eq:xi}
\end{align}
\end{restatable}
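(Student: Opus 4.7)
The plan is to exploit the fact that $\bbE\{\tau_{cide}(V;\delta)\} = \partial_\delta \bbE(Y^{Q_\delta})$ (which follows by exchanging differentiation and expectation in Proposition \ref{prop:deriv} and using iterated expectations to absorb $V \subseteq X$), together with the standard semiparametric principle that when $\delta$ is a fixed scalar parameter that does not depend on $P$, the uncentered efficient influence function of $\partial_\delta \varphi(\delta)$ equals $\partial_\delta$ of the uncentered EIF of $\varphi(\delta)$. Kennedy (2019), Corollary 2, recorded as equation \eqref{eq:ie_eif} in the appendix, supplies the uncentered EIF $\phi(Z;\delta)$ for $\varphi(\delta) = \bbE(Y^{Q_\delta})$. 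I would therefore set $\xi(Z;\delta) = \partial_\delta \phi(Z;\delta)$ and verify by direct algebra that this equals the right-hand side of \eqref{eq:xi}.

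Writing $\omega \equiv \omega(X;\delta) = \delta \pi(X) + 1 - \pi(X)$, so that $\partial_\delta \omega = \pi(X)$, the EIF $\phi(Z;\delta)$ decomposes into three pieces: (i) IPW-style residuals with weights $\delta/\omega$ on $A\{Y-\mu(1,X)\}$ and $1/\omega$ on $(1-A)\{Y-\mu(0,X)\}$; (ii) a propensity-score correction $\delta\{\mu(1,X)-\mu(0,X)\}(A-\pi)/\omega^2$; and (iii) the plug-in piece $\mu_q(X) = \{\delta\pi\mu(1,X)+(1-\pi)\mu(0,X)\}/\omega$. Differentiating via the quotient rule gives $\partial_\delta(\delta/\omega) = (1-\pi)/\omega^2$ and $\partial_\delta(1/\omega) = -\pi/\omega^2$, so (i) contributes $\{(1-\pi)A(Y-\mu_1) - \pi(1-A)(Y-\mu_0)\}/\omega^2$, which after pulling out $\pi(1-\pi)/\omega^2$ is exactly the first line of \eqref{eq:xi}. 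Next, $\partial_\delta(\delta/\omega^2) = 1/\omega^2 - 2\delta\pi/\omega^3$, so (ii) produces the second line verbatim. Finally, for (iii), the quotient rule gives numerator derivative $\pi\mu_1$ and denominator derivative $\pi$, and simplification via $\omega - \delta\pi = 1 - \pi$ yields $\partial_\delta \mu_q(X) = \pi(1-\pi)\{\mu(1,X)-\mu(0,X)\}/\omega^2$, matching the third line.

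The main obstacle is justifying the interchange of $\partial_\delta$ with the pathwise derivative defining the EIF. Formally, this requires that along any regular one-dimensional submodel $P_\epsilon$ the map $(\delta,\epsilon)\mapsto \varphi(\delta; P_\epsilon)$ admits continuous mixed partials, so Schwarz's theorem applies, or equivalently a dominated convergence argument on both derivatives. For any $\delta$ in a compact subinterval of $(0,\infty)$, $\omega \geq \min(\delta,1) > 0$ independently of $\pi$, so the integrands and all their $\delta$-derivatives appearing in the three pieces above are uniformly bounded in $\delta$, validating the exchange. If one wished to avoid this commutation argument entirely, an alternative is to compute the pathwise derivative of $\bbE\{\tau_{cide}(V;\delta)\}$ from scratch, perturbing $\pi$, $\mu(0,\cdot)$, $\mu(1,\cdot)$, and the marginal law of $X$ along a submodel with score $s$, using $\dot\pi(x) = \bbE[(A-\pi(x))s \mid X=x]$ and $\dot\mu(a,x) = \bbE[(Y-\mu(a,x))s \mid A=a,X=x]$, and converting conditional expectations into unconditional ones via $A/\pi$ and $(1-A)/(1-\pi)$ weights. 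The three resulting contributions align exactly with the three lines of \eqref{eq:xi}, providing the same answer by an independent route.
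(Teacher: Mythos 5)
Your proposal is correct, but it takes a genuinely different route from the paper's proof. The paper proceeds by guess-and-verify: it posits $\xi(Z;\delta)$, writes a von Mises expansion of $\bbE\{\tau_{cide}(V;\delta)\}$, computes the remainder $R_2(\overline{\bbP},\bbP)$ explicitly (eq. \eqref{eq:r2_deriv}), shows it is a second-order product of errors in $(\pi,\mu)$, and then concludes via pathwise differentiability and the fact that the nonparametric tangent space admits a unique influence function. You instead differentiate the known un-centered EIF $\xi_{ie}(Z;\delta)$ of $\bbE(Y^{Q_\delta})$ from \cite{kennedy2019incremental} with respect to $\delta$, justify commuting $\partial_\delta$ with the pathwise ($\epsilon$-)derivative by a Schwarz/dominated-convergence argument (using that $\delta\pi(X)+1-\pi(X)\geq \min(\delta,1)>0$ uniformly on compact $\delta$-intervals), and verify by quotient-rule algebra that $\partial_\delta\xi_{ie}=\xi$; I checked each of the three contributions, $\partial_\delta(\delta/\omega)=(1-\pi)/\omega^2$, $\partial_\delta(1/\omega)=-\pi/\omega^2$, $\partial_\delta(\delta/\omega^2)=1/\omega^2-2\delta\pi/\omega^3$, and $\partial_\delta\{(\delta\pi\mu_1+(1-\pi)\mu_0)/\omega\}=\pi(1-\pi)(\mu_1-\mu_0)/\omega^2$, and they match the three lines of \eqref{eq:xi}. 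Your route is shorter and leverages an existing result, and your fallback (a from-scratch pathwise derivative with scores $\dot\pi$ and $\dot\mu$) is also sound; what it does not deliver is the explicit second-order remainder that the paper's proof produces as a byproduct and then reuses in the rate analyses of Theorem \ref{thm:fixed_mod}, Theorem \ref{thm:i-dr-learner}, and Theorem \ref{thm:vcide_conv}, where the precise product-of-errors form (in particular the $\lVert\widehat\pi-\pi\rVert^2$ term) is what drives the doubly robust guarantees, so in the context of this paper the remainder would still need to be computed separately. One purely notational caution: your $\omega$ is the denominator $\delta\pi(X)+1-\pi(X)$, whereas the paper's $\omega(X;\delta)$ in eq. \eqref{eq:omega} denotes $\pi(X)\{1-\pi(X)\}/\{\delta\pi(X)+1-\pi(X)\}^2$.
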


The un-centered efficient influence function, $\xi(Z; \delta)$, depends only on the nuisance functions $\mu(A, X)$ and $\pi(X)$, and consists of three terms.  The first term is a product of the weight term, $\frac{\pi(X) \{ 1 - \pi(X) \}}{ \{ \delta \pi(X) + 1 - \pi(X) \}^2}$, and an inverse weighted residual for the outcome model.  The second term is a product of the difference in means, $\mu(1, X) - \mu(0, X)$, and an inverse weighted residual for the propensity score.  And, the third term is the ``plug-in'' for the CIDE.  

\begin{remark}
Throughout, we invoke Assumptions \ref{asmp:cons} and \ref{asmp:exch} so that the target of estimation is some counterfactual quantity (e.g., the CIDE).  If these assumptions do not hold, the results still apply if the targets of estimation are the observed data functionals on the right hand side of the identification results in Propositions \ref{prop:id_t1} and \ref{prop:deriv}.
\end{remark}

From Lemma \ref{lem:cide_eif} above, and Corollary 2 in \cite{kennedy2019incremental}, we can derive the efficient influence function for the moment condition $m(\beta)$ for estimating the projection of the CIDE, the CIE, or the CICE.

\begin{restatable}{corollary}{cormomenteif} \label{cor:m_eif}
Let $\xi(Z; \delta)$ denote the true influence function values of the relevant average effect, where $\xi(Z; \delta)$ is defined in \eqref{eq:xi} if the estimand is a projection of $\tau_{cide} (v; \delta)$, and is defined analogously, as shown in equations \eqref{eq:ie_eif} and \eqref{eq:ice_eif} in the appendix, if the estimand is a projection of $\tau_{cie}(v; \delta)$ or $\tau_{cice} (v; \delta_u, \delta_l)$.  Under Assumptions \ref{asmp:cons} and \ref{asmp:exch}, the un-centered efficient influence function for $m(\beta)$ under a nonparametric model with unknown propensity scores and a uniform weight function constructed over $\ell_2$ distance is
$$
\phi(Z; \delta, \beta) = \frac{\partial g(V; \beta)}{\partial \beta} \{ \xi (Z; \delta) - g(V; \beta) \},
$$
where $g(v; \beta)$ is the working model.
\end{restatable}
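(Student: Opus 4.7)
The plan is to derive the un-centered efficient influence function (EIF) of the functional $\mathbb{P}\mapsto m(\beta;\mathbb{P})$ at fixed $\beta$ by reducing it to weighted versions of EIFs that are already available. The key observation is that at fixed $\beta$, both $g(v;\beta)$ and $h(v) := \partial g(v;\beta)/\partial\beta$ are deterministic functions of $v$ that do not depend on $\mathbb{P}$, so $m(\beta;\cdot)$ is linear in just two simpler functionals of $\mathbb{P}$.

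First, I would split (absorbing the irrelevant factor $2$, which does not affect the zero set of $m$):
\[
\tfrac{1}{2}m(\beta) = \mathbb{E}\{h(V)\tau_{cide}(V;\delta)\} - \mathbb{E}\{h(V)g(V;\beta)\}.
\]
The second expectation has un-centered EIF equal to $h(V)g(V;\beta)$, since $g(V;\beta)$ is a known function of observed covariates. For the first, Proposition \ref{prop:deriv} and iterated expectations give
\[
\mathbb{E}\{h(V)\tau_{cide}(V;\delta)\} = \mathbb{E}\!\left[h(V)\cdot\frac{\pi(X)\{1-\pi(X)\}}{\{\delta\pi(X)+1-\pi(X)\}^2}\{\mu(1,X)-\mu(0,X)\}\right],
\]
a weighted version of the average incremental derivative effect. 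Since $h(V)$ is a fixed measurable function of $X$ (as $V\subseteq X$), rerunning the pathwise-derivative derivation behind Lemma \ref{lem:cide_eif} with this extra factor retained throughout yields un-centered EIF equal to $h(V)\xi(Z;\delta)$. Combining the two pieces produces the stated formula $\phi(Z;\delta,\beta) = \partial g(V;\beta)/\partial\beta\cdot\{\xi(Z;\delta)-g(V;\beta)\}$. The argument carries through verbatim for projections of $\tau_{cie}$ and $\tau_{cice}$ by substituting the corresponding un-centered EIFs from \cite{kennedy2019incremental} (with $\delta$ replaced by $(\delta_u,\delta_l)$ in the contrast case).

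The main technical step is justifying the pass-through of $h(V)$ through the EIF derivation. I would make this rigorous via the tangent-space decomposition $s(Z) = s(Y\mid A,X) + s(A\mid X) + s(X)$ of a regular one-dimensional submodel score, and verify term-by-term that each of the three summands of $\xi(Z;\delta)$ in Lemma \ref{lem:cide_eif} acquires exactly one factor of $h(V)$ when differentiating $\mathbb{E}_\epsilon\{h(V)\tau_{cide,\epsilon}(V;\delta)\}$ at $\epsilon=0$. A useful sanity check along the way is $\mathbb{E}\{\xi(Z;\delta)\mid V\} = \tau_{cide}(V;\delta)$, which follows because the residual terms in $\xi$ vanish in conditional mean given $X$; this confirms $\mathbb{E}\{h(V)\xi(Z;\delta)\} = \mathbb{E}\{h(V)\tau_{cide}(V;\delta)\}$, so the proposed $\phi$ integrates correctly. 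No assumptions beyond Assumptions \ref{asmp:cons}--\ref{asmp:exch} and differentiability of $g(v;\beta)$ in $\beta$ are needed.
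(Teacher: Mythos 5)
Your derivation is correct, but it is organized differently from the paper's. You decompose $\tfrac12 m(\beta)$ at fixed $\beta$ into the two linear functionals $\bbE\{h(V)\tau_{cide}(V;\delta)\}$ and $\bbE\{h(V)g(V;\beta)\}$ with $h(v)=\partial g(v;\beta)/\partial\beta$, note the second has un-centered influence function $h(V)g(V;\beta)$ because $h$ and $g$ are fixed functions of observed covariates, and obtain the first by carrying the weight $h(V)$ (a fixed function of $X$, since $V\subseteq X$) through the pathwise-derivative computation behind Lemma \ref{lem:cide_eif}, verified term-by-term against the score decomposition $s(Y\mid A,X)+s(A\mid X)+s(X)$; the sanity check $\bbE\{\xi(Z;\delta)\mid V\}=\tau_{cide}(V;\delta)$ is a nice touch. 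The paper instead verifies the candidate $\varphi$ directly via a von Mises expansion of $m(\beta)$, writing $R_2(\overline\bbP,\bbP)=\int\{\varphi(\overline\bbP)-\varphi(\bbP)\}\,d\bbP$ and showing, by the same algebra as in Lemma \ref{lem:cide_eif} but with the extra factor $\partial g(V;\beta)/\partial\beta$, that $R_2$ is a second-order product of nuisance errors, then invoking the nonparametric tangent space to conclude efficiency. The two routes rest on the same underlying computation, but they buy different things: your constructive linearity-plus-weighting argument yields the formula more transparently, whereas the paper's remainder calculation explicitly exhibits the second-order bias term that is reused as $R_n$ in the proof of Theorem \ref{thm:fixed_mod}, so you would still need that computation later for the double-robustness guarantee. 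One small point to make explicit: to upgrade ``an influence function'' to ``the efficient influence function'' you should state, as the paper does, that the model is nonparametric so the tangent space is all mean-zero finite-variance functions and the pathwise-derivative representer is unique; your plan of checking against arbitrary submodel scores contains this, but it deserves a sentence.
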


Corollary \ref{cor:m_eif} motivates estimators for $\beta^\ast$ and $g(v; \beta^\ast)$. The first step estimates the un-centered efficient influence function values for the relevant average effect; for example, when estimating the projection of the CIDE, we have
\begin{align} 
    \widehat \xi(Z; \delta) &= \left[ \frac{\widehat \pi(X) \{ 1 - \widehat \pi(X) \}}{ \{ \delta \widehat \pi(X) + 1 - \widehat \pi(X) \}^2} \right] \cdot \left[ \frac{A}{\widehat \pi(X)} \Big\{ Y - \widehat \mu(1, X) \Big\} - \frac{1-A}{1-\widehat \pi(X)} \Big\{ Y - \widehat \mu(0, X) \Big\} \right]  \nonumber \\
    &\hspace{0.2in} + \left[ \frac{1}{\{ \delta \widehat \pi(X) + 1 - \widehat \pi(X) \}^2} - \frac{2\delta \widehat \pi (X)}{\{ \delta \widehat \pi(X) + 1 - \widehat \pi(X) \}^3} \right] \cdot \Big\{ A - \widehat \pi(X) \Big\} \cdot \{ \widehat \mu(1, X) - \widehat \mu(0, X) \} \nonumber \\
    &\hspace{0.2in} + \left[ \frac{\widehat \pi(X) \{ 1 - \widehat \pi(X) \}}{ \{ \delta \widehat \pi(X) + 1 - \widehat \pi(X) \}^2} \right] \cdot \{ \widehat \mu(1, X) - \widehat \mu(0, X) \} \label{eq:xihat}
\end{align}
where $\widehat \mu(0, X), \widehat \mu(1, X)$, and $\widehat \pi(X)$ are (possibly nonparametric) estimates of the nuisance functions.  The second step estimates the population moment condition by solving the empirical moment condition using the estimated un-centered efficient influence function values for $m(\beta)$,
$$
\bbP_n \left[ \frac{\partial g(V; \widehat \beta)}{\partial \beta} \left\{ \widehat \xi(Z; \delta) - g (V; \widehat \beta) \right\} \right] = 0.
$$
We state the Projection-Learner formally in the following algorithm.
\begin{algorithm} \label{alg:working_mod} \emph{(Projection-Learner)} Assume as inputs $(D_{1}, D_2)$, which denote two independent samples of $n$ observations of $Z_i = (X_i, A_i, Y_i)$. Then:
\begin{enumerate}
    \item On the training data $D_1$, estimate the nuisance functions $\widehat \mu (0, X)$, $\widehat \mu(1, X)$ and $\widehat \pi(X)$.
    \item  On the estimation data $D_2$, estimate the un-centered influence function values $\widehat \xi (Z; \delta)$ using the models $\widehat \mu(0, X)$, $\widehat \mu(1, X)$ and $\widehat \pi(X)$ from step 1, where $\widehat \xi (Z; \delta)$ is defined in \eqref{eq:xihat} if the conditional effect of interest is $\tau_{cide}$, and analogously for $\tau_{cie}$ and $\tau_{cice}$ in equations \eqref{eq:ie_eif} and \eqref{eq:ice_eif} in the appendix.
    \item On the estimation data $D_2$, estimate $\widehat \beta$ by solving the empirical moment condition
    $$
    \bbP_n \left[ \frac{\partial g(V; \beta)}{\partial \beta} \left\{ \widehat \xi (Z; \delta) - g(V; \beta)  \right\} \right] = 0
    $$
    \end{enumerate}
\end{algorithm}

Algorithm \ref{alg:working_mod} is relatively straightforward. For example, if the working model is $g(v; \beta) = \beta_1 + \beta_2 \cdot v + \beta_3 \cdot v^2$, then Algorithm \ref{alg:working_mod} solves the empirical moment condition
$$
\bbP_n \left[ \begin{pmatrix} 1 \\ V \\ V^2 \end{pmatrix}  \left\{ \widehat \xi (Z; \delta) - (\beta_1 + \beta_2 V + \beta_3 V^2)  \right\} \right] = 0
$$
which can be achieved in \texttt{R} by running the regression 
\begin{lstlisting}[language=R]
    model <- lm(formula = xihat ~ V + I(V^2))
\end{lstlisting}
where \texttt{xihat} is calculated from estimated nuisance functions $\widehat \mu(A, X)$ and $\widehat \pi(X)$.

\begin{remark} \label{rem:twostage}
The structure of Algorithm \ref{alg:working_mod} and the example code also illustrate that the Projection-Learner uses estimated un-centered efficient influence functions values for $\bbE \{ \tau_{cide} (V; \delta) \}$ as pseudo-outcomes in a parametric second stage regression.  In Section \ref{sec:i-dr-learner}, we show that the I-DR-Learner follows the same form, but with a nonparametric second stage regression.
\end{remark}

To guarantee the convergence rates demonstrated in Theorem \ref{thm:fixed_mod} below, we could assume Donsker-type or low-entropy conditions for the nuisance functions $\mu(A, X)$ and $\pi(X)$, which restricts what types of flexible estimators we can use \citep{van1996weak, van2000asymptotic}.  Instead, we use sample splitting in step 1 of Algorithm \ref{alg:working_mod} to estimate the nuisance functions; i.e., we split our sample in two, and estimate the nuisance functions on the training data, $D_1$, and calculate $\widehat \xi(Z; \delta)$ and solve the empirical moment condition on the estimation data, $D_2$.  Sample splitting allows us to condition on the training sample and treat the estimated nuisance functions as fixed functions, which expands the class of estimators possible for estimating the nuisance functions. A concern one might then have with Algorithm \ref{alg:working_mod} is that it only estimates $\widehat \beta$ on half the sample.  To utilize the whole sample for inference, we can improve on Algorithm \ref{alg:working_mod} with cross-fitting by estimating the nuisance functions on both folds ($D_1$ and $D_2$), constructing $\widehat \xi(Z; \delta)$ values on the opposite fold (i.e., by estimating $\widehat \xi(Z; \delta)$ in $D_1$ using nuisance functions constructed on $D_2$, and vice versa), and solving the empirical moment condition on the whole dataset ($D_1$ and $D_2$ together) \citep{chernozhukov2018double, zheng2010asymptotic, robins2008higher}.  This cross-fitting approach is also compatible with more folds (``k-fold cross-fitting''), which can be more stable than two-fold cross-fitting.

\medskip

The following theorem shows that the estimator $\widehat \beta$ for $\beta^\ast$ outlined in Algorithm \ref{alg:working_mod} converges to an asymptotically linear expansion around $\beta^\ast$ where the bias is expressed as a product of errors from estimating the nuisance functions $\widehat \mu(0, X), \widehat \mu(1, X)$ and $\widehat \pi(X)$.  For this result, and the rest of Section \ref{sec:proj-learner}, we let $\mu = \{ \mu(0, X), \mu(1, X) \}$ and $\pi = \pi(X)$ denote generic nuisance functions, $\widehat \mu$ and $\widehat \pi$ denote the nuisance function estimators, and define $\mu^\ast$ and $\pi^\ast$ as the true nuisance functions (consistent with the projection notation $\beta^\ast$).

\begin{restatable}{theorem}{thmfixedmod} \label{thm:fixed_mod} Let $\varphi(Z; \beta, \mu, \pi) \equiv \phi(Z; \delta, \beta) - m(\beta)$ denote the centered efficient influence function from Corollary \ref{cor:m_eif}.  With Assumptions \ref{asmp:cons} and \ref{asmp:exch}, also assume
\begin{enumerate}[label=(\alph*)]
    \item $\bbP \Big( |\widehat \mu (1, X) - \widehat \mu(0, X)| \leq C \Big) = 1$ and $\bbP \Big( | \mu^\ast (1, X) - \mu^\ast (0, X) | \leq C \Big) = 1$ for some $C < \infty$. \label{asm:boundedness}
    \item $\bbP \left\{ \left| \frac{\partial g(\beta; v)}{\partial \beta} \right| \leq C \right\} = 1$ for all $v$ \label{asm:bdd_model}
    \item The function class $\varphi(Z; \beta, \mu, \pi)$ is Donsker in $\beta$ for any fixed $\mu, \pi$. \label{asm:beta_donsker}
    \item The estimators are consistent in the sense that $\widehat{\beta} - \beta^\ast = o_{\bbP} (1)$ and $\lVert \varphi(Z; \widehat \beta; \widehat \mu, \widehat \pi) - \varphi(Z; \beta^\ast; \mu^\ast, \pi^\ast) \rVert = o_\bbP (1)$.\label{asm:beta_varphi_cons} 
    \item The map $\beta \mapsto \bbP \{ \varphi(z; \beta, \mu, \pi) \}$ is differentiable at $\beta^\ast$ uniformly in $(\mu, \pi)$, with nonsingular derivative matrix $\frac{\partial}{\partial \beta} \bbP \{ \varphi(Z; \beta, \mu, \pi) \} |_{\beta = \beta^\ast} = M(\beta^\ast, \mu, \pi)$, where $M(\beta^\ast, \widehat \mu, \widehat \pi) \overset{p}{\to} M(\beta^\ast, \mu^\ast, \pi^\ast)$. \label{asm:diff_map}
\end{enumerate}
Then
$$
\widehat{\beta} - \beta^\ast = - M(\beta^\ast, \mu^\ast, \pi^\ast)^{-1} (\bbP_n - \bbP) \left\{ \varphi(Z; \beta^\ast, \mu^\ast, \pi^\ast)  \right\} + O_{\bbP} \left( R_n + o_{\bbP} \left( \frac{1}{\sqrt{n}} \right) \right)
$$
where 
\begin{align*}
    R_n = \Big( \lVert \widehat \mu - \mu^\ast \rVert + \lVert \widehat \pi - \pi^\ast \rVert \Big) \lVert \widehat \pi - \pi^\ast \rVert
\end{align*}
\end{restatable}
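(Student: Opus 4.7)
The plan is to treat Algorithm~\ref{alg:working_mod} as a Z-estimator and push through the standard expansion, with the substantive work concentrated in controlling a second-order nuisance bias. Write $U(Z;\beta,\eta) = \frac{\partial g(V;\beta)}{\partial\beta}\{\xi(Z;\delta,\eta) - g(V;\beta)\}$ with $\eta = (\mu,\pi)$, so $\widehat\beta$ solves $\bbP_n U(Z;\widehat\beta,\widehat\eta) = 0$ and $\bbP U(Z;\beta^\ast,\eta^\ast) = 0$. Starting from this identity and adding/subtracting yields
$$
0 \;=\; (\bbP_n - \bbP)\varphi(Z;\beta^\ast,\mu^\ast,\pi^\ast) \;+\; T_{\text{ep}} \;+\; T_{\text{bias}} \;+\; T_{\text{drift}},
$$
where $T_{\text{ep}} = (\bbP_n - \bbP)\{U(Z;\widehat\beta,\widehat\eta) - U(Z;\beta^\ast,\eta^\ast)\}$, $T_{\text{bias}} = \bbP U(Z;\beta^\ast,\widehat\eta)$ (using $\bbP U(Z;\beta^\ast,\eta^\ast)=0$), and $T_{\text{drift}} = \bbP\{U(Z;\widehat\beta,\widehat\eta) - U(Z;\beta^\ast,\widehat\eta)\}$. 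A Taylor expansion of $T_{\text{drift}}$ in $\beta$ around $\beta^\ast$ via (e), combined with $M(\beta^\ast,\widehat\eta) \inprob M(\beta^\ast,\eta^\ast)$, reduces it to $M(\beta^\ast,\mu^\ast,\pi^\ast)(\widehat\beta - \beta^\ast) + o_\bbP(\|\widehat\beta - \beta^\ast\|)$; solving for $\widehat\beta - \beta^\ast$ then gives the stated expansion, provided $T_{\text{ep}} = o_\bbP(n^{-1/2})$ and $T_{\text{bias}} = O_\bbP(R_n)$.

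For the empirical-process remainder, I would split $T_{\text{ep}}$ into $(\bbP_n - \bbP)\{U(Z;\widehat\beta,\widehat\eta) - U(Z;\beta^\ast,\widehat\eta)\}$ plus $(\bbP_n - \bbP)\{U(Z;\beta^\ast,\widehat\eta) - U(Z;\beta^\ast,\eta^\ast)\}$. Because Algorithm~\ref{alg:working_mod} estimates $\widehat\eta$ on $D_1$ and evaluates on the independent $D_2$, conditional on $D_1$ the map $\beta \mapsto U(Z;\beta,\widehat\eta)$ is a fixed $\beta$-indexed class belonging to the Donsker class of (c); standard equicontinuity combined with $\widehat\beta \inprob \beta^\ast$ from (d) gives the first piece as $o_\bbP(n^{-1/2})$. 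The second piece, conditional on $D_1$, is a single centered function with squared $L^2(\bbP)$ norm $o_\bbP(1)$ by the $\varphi$-consistency in (d), so its conditional variance is $o_\bbP(1/n)$; a conditional Chebyshev bound and dominated convergence upgrade this to $o_\bbP(n^{-1/2})$ unconditionally.

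The main obstacle, and the step that uses the specific structure of the incremental EIF, is showing $T_{\text{bias}} = O_\bbP(R_n)$. A direct check on the three-term form of $\xi$ in Lemma~\ref{lem:cide_eif} shows that at $\eta^\ast$ each of the first two summands has conditional mean zero given $X$ (the $A/\pi(X)$-style residual and the $A - \pi(X)$ residual), so $\bbE[\xi(Z;\delta,\eta^\ast)\mid X] = \tau_{cide}(X;\delta)$, and iterating yields $\bbE[\xi(Z;\delta,\eta^\ast)\mid V] = \tau_{cide}(V;\delta)$. Since $\beta^\ast$ solves $\bbE[(\partial g/\partial\beta)\{\tau_{cide}(V;\delta) - g(V;\beta^\ast)\}] = 0$, we obtain
$$
T_{\text{bias}} = \bbE\!\left[\frac{\partial g(V;\beta^\ast)}{\partial\beta}\Big\{\bbE[\xi(Z;\delta,\widehat\eta)\mid V] - \tau_{cide}(V;\delta)\Big\}\right],
$$
which by (b), Cauchy--Schwarz, and Jensen (conditioning on $X \supseteq V$) is bounded up to a constant by $\|\bbE[\xi(Z;\delta,\widehat\eta)\mid X] - \tau_{cide}(X;\delta)\|$. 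Plugging the explicit form of $\xi$ into the conditional expectation and expanding the three summands around $(\mu^\ast,\pi^\ast)$, using (a) and the smoothness of the weight $w(\pi) = \pi(1-\pi)/\{\delta\pi + 1 - \pi\}^2$ as a function of $\pi$, every remaining term that carries $\widehat\mu - \mu^\ast$ is multiplied by $\widehat\pi - \pi^\ast$, while the residual bias when $\mu$ is correct is proportional to $(\widehat\pi - \pi^\ast)^2$. Full double robustness in $\mu$ fails here because $w(\widehat\pi)$ sits multiplicatively in front of the $\mu$-residual term, but collecting pieces still produces exactly $R_n = (\|\widehat\mu - \mu^\ast\| + \|\widehat\pi - \pi^\ast\|)\|\widehat\pi - \pi^\ast\|$, matching the partial double robustness established for the average incremental effect in \cite{kennedy2019incremental}.
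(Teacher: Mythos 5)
Your proposal is correct and follows essentially the same route as the paper's proof: the same Z-estimator decomposition into a CLT term, empirical-process terms handled via the Donsker condition in $\beta$ together with sample splitting, a drift term linearized through assumption (e) with consistency of $M$, and a bias term $\bbP\{\varphi(Z;\beta^\ast,\widehat\mu,\widehat\pi)-\varphi(Z;\beta^\ast,\mu^\ast,\pi^\ast)\}$ reduced, via the structure of the incremental EIF, to the second-order remainder $\big(\lVert\widehat\mu-\mu^\ast\rVert+\lVert\widehat\pi-\pi^\ast\rVert\big)\lVert\widehat\pi-\pi^\ast\rVert$. The only differences are cosmetic: you split the empirical-process term along $\beta$ first and verify the bias bound by conditioning on $X$, where the paper instead invokes the $R_2$ computation from Lemma \ref{lem:cide_eif} plus Cauchy--Schwarz, and the paper adds an explicit self-bounding step to absorb the $o_{\bbP}(\lVert\widehat\beta-\beta^\ast\rVert)$ term that you leave implicit.
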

Theorem \ref{thm:fixed_mod} provides a convergence statement for the coefficient estimate $\widehat{\beta}$ to the true projection parameter $\beta^\ast$ under relatively weak conditions.  Assumption \ref{asm:boundedness} says the CATE and the estimate of the CATE are bounded.  Assumption \ref{asm:bdd_model} says that the derivative of the model $g(v; \beta)$ with respect to $\beta$ is bounded, which is quite weak and can be enforced through choice of an appropriate model.  Assumption \ref{asm:beta_donsker} ensures the influence function $\varphi$ is not too complex as a function of $\beta$, but allows for arbitrary complexity in the nuisance functions; again, this can be enforced with appropriate choice of $g(v; \beta)$, and most reasonable choices will suffice. Assumption \ref{asm:beta_varphi_cons} requires that $\{\beta^\ast, \varphi(Z; \beta^\ast, \mu^\ast, \pi^\ast) \}$ is consistently estimated by $\{ \widehat \beta, \varphi(Z; \widehat \beta, \widehat \mu, \widehat \pi) \}$ at any rate.  Finally, Assumption \ref{asm:diff_map} requires some smoothness of $\bbP \{ \varphi (Z; \beta, \mu, \pi) \}$ in $\beta$, to allow for use of the delta method.  Assumptions \ref{asm:beta_donsker}-\ref{asm:diff_map} are standard in the literature (see, for example, \cite{van2000asymptotic} Theorem 5.31).

\medskip

The convergence statement shows that $\widehat \beta$ obtains a faster rate of convergence to $\beta^\ast$ than the nuisance function estimators $\widehat \mu$ and $\widehat \pi$ obtain to $\mu^\ast$ and $\pi^\ast$ respectively.  The first term, $M(\beta^\ast, \mu^\ast, \pi^\ast)^{-1} (\bbP_n - \bbP) \{ \varphi(Z; \beta^\ast, \mu^\ast, \pi^\ast) \}$, is a sample average scaled by a constant, and so by the central limit theorem it is asymptotically Gaussian.  Therefore, if $R_n = o_{\bbP} (n^{-1/2})$ then the remainder terms $O_{\bbP} \left( R_n + o_{\bbP} (n^{-1/2}) \right)$ will be asymptotically negligible and so $\widehat \beta - \beta^\ast$ will converge in distribution to a mean-zero Gaussian distribution with variance equal to the variance of $M(\beta^\ast, \mu^\ast, \pi^\ast)^{-1} (\bbP_n - \bbP) \{ \varphi(Z; \beta^\ast, \mu^\ast, \pi^\ast) \}$, as shown in the following result.

\begin{corollary} \label{cor:lim_dist}
Under the same assumptions as Theorem \ref{thm:fixed_mod}, if
$$
\Big( \lVert \widehat \mu - \mu^\ast \rVert + \lVert \widehat \pi - \pi^\ast \rVert \Big)  \lVert \widehat \pi - \pi^\ast \rVert  = o_{\bbP} \left( \frac{1}{\sqrt{n}} \right),
$$
then
$$
\sqrt{n} (\widehat{\beta} - \beta^\ast) \indist N \Big( 0, M^{-1} \bbE(\varphi \varphi^T) (M^{-1})^T \Big),
$$
and for any fixed $v$ we have
$$
\sqrt{n} ( g(v; \widehat \beta) - g(v; \beta^\ast) \indist N\left(0, \left( \frac{\partial g(v; \beta^\ast)}{\partial \beta} \right)^{T} M^{-1} \bbE(\varphi \varphi^T) (M^{-1})^T  \left( \frac{\partial g(v; \beta^\ast)}{\partial \beta} \right) \right),
$$
where
\begin{align*}
    M &= M(\beta^\ast, \mu^\ast, \pi^\ast), \text{and} \\
    \bbE(\varphi \varphi^T) &= \bbE \Big\{ \varphi(Z; \beta^\ast, \mu^\ast,  \pi^\ast) \times \varphi(Z; \beta^\ast, \mu^\ast, \pi^\ast)^{T} \Big\}.
\end{align*} 
\end{corollary}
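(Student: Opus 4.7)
The corollary follows almost entirely from the asymptotic expansion already established in Theorem \ref{thm:fixed_mod}, so the plan is to combine that expansion with a central limit theorem and the delta method.

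For the first claim, I would start by multiplying both sides of the expansion in Theorem \ref{thm:fixed_mod} by $\sqrt{n}$, which gives
$$
\sqrt{n}(\widehat{\beta} - \beta^\ast) = -M(\beta^\ast, \mu^\ast, \pi^\ast)^{-1} \sqrt{n}(\bbP_n - \bbP)\{\varphi(Z; \beta^\ast, \mu^\ast, \pi^\ast)\} + O_\bbP\!\left(\sqrt{n} R_n\right) + o_\bbP(1).
$$
Under the strengthened rate condition $(\lVert \widehat\mu - \mu^\ast \rVert + \lVert \widehat\pi - \pi^\ast \rVert)\lVert \widehat\pi - \pi^\ast \rVert = o_\bbP(n^{-1/2})$, we get $\sqrt{n} R_n = o_\bbP(1)$, so the remainder is asymptotically negligible. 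The leading term is a sample average of a fixed (non-random) mean-zero function, since $\varphi$ is the \emph{centered} efficient influence function from Corollary \ref{cor:m_eif}, i.e., $\bbE\{\varphi(Z; \beta^\ast, \mu^\ast, \pi^\ast)\} = 0$ by construction. Hence the classical multivariate CLT gives $\sqrt{n}(\bbP_n - \bbP)\{\varphi\} \indist N(0, \bbE(\varphi \varphi^T))$, and Slutsky's theorem together with left-multiplication by the constant matrix $-M^{-1}$ yields the stated Gaussian limit with sandwich variance $M^{-1} \bbE(\varphi \varphi^T) (M^{-1})^T$.

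For the second claim, I would apply the delta method to the mapping $\beta \mapsto g(v; \beta)$ at the fixed point $\beta^\ast$. Since $g(v; \cdot)$ is differentiable in $\beta$ with derivative $\partial g(v; \beta^\ast)/\partial \beta$ (which is bounded by Assumption \ref{asm:bdd_model}), a first-order Taylor expansion gives
$$
g(v; \widehat{\beta}) - g(v; \beta^\ast) = \left(\frac{\partial g(v; \beta^\ast)}{\partial \beta}\right)^T (\widehat\beta - \beta^\ast) + r_n,
$$
where the remainder $r_n$ is $o_\bbP(\lVert \widehat\beta - \beta^\ast \rVert) = o_\bbP(n^{-1/2})$ using the consistency of $\widehat\beta$ from Assumption \ref{asm:beta_varphi_cons} and the $\sqrt{n}$-rate from the first claim. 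Multiplying by $\sqrt{n}$ and applying Slutsky once more converts the Gaussian limit for $\sqrt{n}(\widehat{\beta} - \beta^\ast)$ into the desired Gaussian limit for $\sqrt{n}(g(v; \widehat\beta) - g(v; \beta^\ast))$, with variance given by the quadratic form in the statement.

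The main content is really all in Theorem \ref{thm:fixed_mod}; the only subtle point here is checking that $\varphi$ is genuinely mean-zero (so CLT applies without a centering adjustment) and that the gradient $\partial g(v; \beta^\ast)/\partial \beta$ can be treated as a deterministic vector (which follows because we fix $v$ and $\beta^\ast$ is a population quantity). No step poses a real obstacle, so I expect this to be a short corollary essentially consisting of ``apply CLT to the leading term, then the delta method.''
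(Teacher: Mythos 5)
Your proposal is correct and follows exactly the route the paper takes: Theorem \ref{thm:fixed_mod} gives the asymptotically linear expansion, the strengthened rate condition makes $\sqrt{n}R_n = o_{\bbP}(1)$, the CLT and Slutsky applied to the mean-zero centered influence function $\varphi(Z;\beta^\ast,\mu^\ast,\pi^\ast)$ yield the sandwich limit for $\sqrt{n}(\widehat\beta - \beta^\ast)$, and the delta method applied to $\beta \mapsto g(v;\beta)$ gives the second display. The paper treats this corollary as an immediate consequence in precisely this way, so there is nothing to add.
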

\noindent Corollary \ref{cor:lim_dist} provides a way to construct an asymptotically valid Wald-style 1-$\alpha$ confidence interval around $g(\widehat \beta, v)$ with
$$
g(v; \widehat \beta) \pm \Phi^{-1}(1-\alpha/2) \left( \frac{\widehat{\sigma}(v)}{\sqrt{n}} \right),
$$
where $\Phi(\cdot)$ is the cumulative distribution function for the standard normal, 
$$
\widehat{\sigma}^2(v) = \left( \frac{\partial g(v; \widehat \beta)}{\partial \beta} \right)^{T} \widehat{M}^{-1} \bbE(\widehat{\varphi} \widehat{\varphi}^T) (\widehat{M}^{-1})^T  \left( \frac{\partial g(v; \widehat \beta)}{\partial \beta} \right),
$$
and $\widehat{M} = \bbP_n (\partial \widehat{\varphi} / \partial \beta)$ is an estimate of the derivative matrix. Furthermore, this corollary demonstrates that $\widehat \beta$ and $g(v; \widehat \beta)$ converge at $n^{-1/2}$ rates to Gaussian distributions, centered at $\beta^\ast$ and $g(v; \beta^\ast)$ respectively, with less stringent model-agnostic convergence conditions on the nuisance function estimators $\widehat \mu(0, X), \widehat \mu(1, X)$, and $\widehat \pi(X)$.  Thus, $\widehat \beta$ and $g(v; \widehat \beta)$ still attain $n^{-1/2}$ convergence rates if both nuisance functions are estimated at $n^{-1/4}$ rates, which are attainable with nonparametric estimators under relatively realistic assumptions such as smoothness or sparsity \citep{tsybakov2009introduction, birge1995estimation, tibshirani1996regression, farrell2015robust}.

\medskip

\begin{remark}
These results are doubly-robust in spirit since the remainder bias is expressed as a product of nuisance function errors.  However, there is no ``double robustness'' in the traditional sense, which would only require $\lVert \widehat \mu - \mu^\ast \rVert \lVert \widehat \pi - \pi^\ast \lVert = o_{\bbP}(n^{-1/2})$. Instead, Corollary \ref{cor:lim_dist} requires that the propensity score is estimated well enough that $\lVert \widehat \pi - \pi^\ast \rVert^2 = o_{\bbP} (n^{-1/2})$.  Intuitively, this occurs because incremental interventions shift the observed propensity scores, and thus require a good estimate of the propensity score. By contrast, the intervention corresponding to the CATE does not depend on the propensity score, so the convergence rate for the propensity score estimator is less critical, depending on that of the outcome regression.  
\end{remark}

\medskip

As demonstrated in Theorem \ref{thm:fixed_mod} and Corollary \ref{cor:lim_dist}, the Projection-Learner can attain $n^{-1/2}$ convergence rates to the projection of the true CIDE (or CIE, or CICE) onto the chosen working model $g(v; \beta)$.  If, instead, we wish to target the true conditional effect curve, and that curve does not coincide with the projection, then we need to use a different estimator, as we describe in the next section.


\subsection{The I-DR-Learner} \label{sec:i-dr-learner}

In this section, we outline the I-DR-Learner and illustrate its convergence properties.  The I-DR-Learner targets the true conditional effects.  Since the conditional effects are not pathwise differentiable, no efficient influence function exists for them.  Instead, the I-DR-Learner makes use of the efficient influence function values for the relevant average effect by regressing them against the covariates of interest to estimate the conditional effect.  In this way, the I-DR-Learner is a two stage meta-learner, where the first stage estimates the efficient influence function values for the relevant average effect, and the second stage uses these values as pseudo-outcomes in a second stage regression against the conditioning covariates.  The I-DR-Learner is stated formally in the following algorithm:

\begin{algorithm} \emph{(I-DR-Learner).} \label{alg:i-dr-learner}
Assume as inputs $(D_{1}, D_2)$, which denote two independent samples of $n$ observations of $Z_i = (X_i, A_i, Y_i)$.

    \begin{enumerate}
        \item On the training data $D_1$, estimate the nuisance functions $\widehat \mu (0, X)$, $\widehat \mu(1, X)$ and $\widehat \pi(X)$.
        \item On the estimation data $D_2$, estimate the un-centered influence function values $\widehat \xi (Z; \delta)$ using the models $\widehat \mu(0, X)$, $\widehat \mu(1, X)$ and $\widehat \pi(X)$ from step 1, where $\widehat \xi (Z; \delta)$ is defined in \eqref{eq:xihat} if the conditional effect of interest is $\tau_{cide}$, and analogously for $\tau_{cie}$ and $\tau_{cice}$ in equations \eqref{eq:ie_eif} and \eqref{eq:ice_eif} in the appendix.
        \item In the estimation sample $D_2$, regress $\widehat \xi (Z; \delta)$ on the conditioning covariates $V$ to obtain the estimate
        $$
        \widehat{\tau}_{i-dr} (v; \delta) = \widehat \bbE_n \left\{ \widehat \xi (Z; \delta) \mid V = v \right\}.
        $$
    \end{enumerate}
    
\end{algorithm}

Like the Projection-Learner, the I-DR-Learner also uses sample splitting and estimates the nuisance functions on a separate sample to avoid imposing Donsker-type conditions on the nuisance function estimators.  The I-DR-Learner is also compatible with cross-fitting.

\medskip

The I-DR-Learner can estimate all three conditional effects - the CIE, CICE, and CIDE.  Furthermore, the error of the estimator is asymptotically equal to that of an oracle estimator under certain conditions.  Specifically, the second stage regression must satisfy the stability condition in Definition 1 of \cite{kennedy2020towards}.  This is a generalization of the classic stochastic equicontinuity condition to nonparametric regression (Lemma 19.24 \cite{van2000asymptotic}), and says that the second stage regression is \emph{stable} with respect to a distance metric $d$ if the difference between second stage regressions with estimated outcomes and true outcomes shrinks appropriately fast.  We discuss Definition 1 of \cite{kennedy2020towards} in more detail in the appendix.  The stability condition is satisfied by the class of linear smoothers (see \cite{kennedy2020towards} Theorem 1), which includes nonparametric estimators like kernel smoothers, series regression, and random forests.  It is possible that other classes of estimators also satisfy the stability condition, although examining that question is beyond the scope of this work.

\medskip

Under this stability condition, the error of the I-DR-Learner can be tied to the error of an oracle estimator, which would have access to the un-centered efficient influence function values for the relevant average effect and would estimate the conditional effect merely by running a regression of $\xi(Z; \delta)$ against $V$.  This approach was considered in \cite{kennedy2020towards} for estimating the CATE, and their Theorem 2 showed that, under certain assumptions, the error of their DR-Learner will only exceed the error of an oracle estimator by an amount that depends on the product of errors in estimating the nuisance functions.  The same logic holds for the I-DR-Learner, and we formally state the convergence result in the following theorem.  We slightly amend notation from Section \ref{sec:proj-learner}, and allow $\xi, \mu,$ and $\pi$ to denote the true efficient influence function values and nuisance functions.

\begin{restatable}{theorem}{thmdrilearner} \label{thm:i-dr-learner}
Let $\tau_{i-dr}$ stand in for $\tau_{cide}, \tau_{cie}$, or $\tau_{cice}$, and let $\xi(Z; \delta)$ denote the true influence function values of the relevant average effect.  Furthermore, let $\tilde \tau_{i-dr} (v; \delta) = \widehat \bbE_n \{ \xi(Z; \delta) \mid V = v \}$ denote an oracle estimator that regresses $\xi(Z; \delta)$ on $V$, and let $\widehat \tau_{i-dr} (v; \delta)$ denote the I-DR-Learner from Algorithm \ref{alg:i-dr-learner}.  With Assumptions \ref{asmp:cons} and \ref{asmp:exch}, and Assumption \ref{asm:boundedness} from Theorem \ref{thm:fixed_mod}, also assume that the second stage regression is stable according to Definition 1 of \cite{kennedy2020towards}. Then,
$$
\widehat \tau_{i-dr} (v; \delta) - \tau_{i-dr} (v; \delta) = \tilde \tau_{i-dr} (v; \delta) - \tau_{i-dr} (v; \delta) + \widehat \bbE_n \{ \widehat b(X) \mid V = v \} + o_{\bbP} \left( R^\ast (v; \delta) \right)
$$
for 
\begin{align*}
    \widehat b (x) &\lesssim  \Big( \big| \widehat \mu (0, x) - \mu (0, x) \big| + \big| \widehat \mu(1, x) - \mu(1, x) \big| + \big| \widehat \pi(x) - \pi(x) \big| \Big) \cdot \big| \widehat \pi(x) - \pi(x) \big|
\end{align*}
and 
$$
R^\ast (v; \delta) = \sqrt{\bbE \left[ \Big\{ \tilde \tau_{i-dr} (v; \delta) - \tau_{i-dr} (v; \delta) \Big\}^2 \right]}.
$$
\end{restatable}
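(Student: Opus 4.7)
The proof decomposes the estimator's error around the true conditional effect into an oracle error plus a correction from having to estimate the nuisance functions:
$$\widehat{\tau}_{i-dr}(v;\delta) - \tau_{i-dr}(v;\delta) = \bigl[\tilde{\tau}_{i-dr}(v;\delta) - \tau_{i-dr}(v;\delta)\bigr] + \bigl[\widehat{\tau}_{i-dr}(v;\delta) - \tilde{\tau}_{i-dr}(v;\delta)\bigr].$$
The first bracket is exactly the oracle error that already appears in the target expression, so the entire argument reduces to controlling the second bracket, which measures the cost of substituting estimated pseudo-outcomes $\widehat{\xi}(Z;\delta)$ for the oracle pseudo-outcomes $\xi(Z;\delta)$ in the second-stage regression.

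The key step invokes the stability condition of Definition 1 in \cite{kennedy2020towards}. Because sample splitting makes $\widehat{\mu}$ and $\widehat{\pi}$ fixed conditional on the training fold $D_1$, the difference $\widehat{\xi}(Z;\delta) - \xi(Z;\delta)$ is a fixed function of $Z$ on $D_2$, and stability then justifies replacing it inside the second-stage regression by its conditional mean given $X$, at the cost of an $o_\bbP(R^\ast(v;\delta))$ remainder. Setting
$$\widehat{b}(X) \equiv \bbE\bigl[\widehat{\xi}(Z;\delta) - \xi(Z;\delta) \mid X, D_1\bigr],$$
this yields
$$\widehat{\tau}_{i-dr}(v;\delta) - \tilde{\tau}_{i-dr}(v;\delta) = \widehat{\bbE}_n\bigl\{\widehat{b}(X) \mid V = v\bigr\} + o_\bbP\bigl(R^\ast(v;\delta)\bigr),$$
which matches the middle term in the theorem's expansion.

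It remains to verify the product-of-errors bound on $\widehat{b}(X)$. By the tower rule and exchangeability, the oracle pseudo-outcome satisfies $\bbE[\xi(Z;\delta) \mid X]$ equal to the pointwise identifying integrand from Propositions \ref{prop:id_t1} or \ref{prop:deriv}, since the inverse-weighted outcome residuals and the $A - \pi(X)$ correction terms all vanish in conditional expectation given $X$. Computing $\bbE[\widehat{\xi}(Z;\delta) \mid X, D_1]$ under the true law but with hat-versions of the nuisances produces an analogous expression in which those residuals no longer vanish; subtracting the two and repeatedly applying the identity $ab - \widehat{a}\widehat{b} = (a-\widehat{a})b + \widehat{a}(b-\widehat{b})$ to factor out $\widehat{\pi} - \pi$ from both the weight denominators $\{\delta\widehat{\pi} + 1 - \widehat{\pi}\}^{-k}$ and the residual numerators produces the claimed bound, with every surviving term carrying at least one factor of $|\widehat{\pi}(x) - \pi(x)|$. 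This asymmetry between $\mu$ and $\pi$ is expected because the $A - \pi(X)$ correction in the efficient influence function is itself a propensity-score residual. The main obstacle is this final algebraic step, which for the CIDE is most involved because the squared and cubed denominators in Lemma \ref{lem:cide_eif} generate numerous cross-terms whose cancellations must be tracked carefully; I would handle all three estimands uniformly by writing a generic integrand as $w\{\pi(X);\delta\}\{\mu(1,X) - \mu(0,X)\}$ plus linear-in-$(A - \pi)$ correction terms and verifying the factoring term by term, relying on the boundedness imposed in Assumption \ref{asm:boundedness}.
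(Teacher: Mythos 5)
Your proposal is correct and follows essentially the same route as the paper: the paper's proof likewise reduces to Proposition 1 of \cite{kennedy2020towards} (the stability-based oracle decomposition), identifies $\widehat b(x)$ as the conditional bias $\bbE\{\widehat\xi(Z;\delta)-\xi(Z;\delta)\mid X=x, D_1\}$ via iterated expectation, and then verifies by the same second-order-remainder algebra (as in Lemma \ref{lem:cide_eif} and the analogous results of \cite{kennedy2019incremental}) that every term in $\widehat b$ carries a factor of $\widehat\pi-\pi$, with bounded weight functions $g_0, g_1, h$ depending on the estimand.
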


Theorem \ref{thm:i-dr-learner} shows that error for the I-DR-Learner differs from the error for the oracle estimator by at most $\widehat \bbE_n \{ \widehat b(X) \mid X = x \}$ plus other terms, captured by $o_\bbP (R^\ast(v; \delta))$, that are asymptotically negligible compared to the error of the oracle estimator.  Thus, whether the I-DR-Learner achieves oracle efficiency is driven by the asymptotic behavior of the smoothed bias term $\widehat \bbE_n \{ \widehat b(X) \mid X = x \}$.  This bias term is asymptotically less than the product of errors for estimating $\mu(x)$ and $\pi(x)$, $\big| \widehat \mu (a, x) - \mu (a, x) \big|  \cdot \big| \widehat \pi(x) - \pi(x) |$, and the squared error for estimating $\pi(x)$, $\big\{ \widehat \pi(x) - \pi(x) \big\}^2$.  Therefore, the convergence rate of the I-DR-Learner is faster than the convergence rate of the nuisance function estimators.  For example, if the nuisance functions are estimated at $n^{-1/4}$ rates, then the bias term $\widehat b(X)$ will converge to zero at a $n^{-1/2}$ rate.  Importantly, Theorem \ref{thm:i-dr-learner} does not require any assumptions about how the estimators $\widehat \mu$ and $\widehat \pi$ are constructed, beyond the boundedness conditions from Assumption \ref{asm:boundedness} from Theorem \ref{thm:fixed_mod}. 

\medskip

However, the performance of the I-DR-Learner is also constrained by the oracle convergence rate for the second stage regression. For example, if $\bbE \{ \xi(Z; \delta) \mid V = v \}$  is H\"{o}lder-smooth with smoothness $s$, then the minimax rate in root mean squared error is $n^{- 1 / (2 + \frac{d}{s})}$, which is slower than $n^{-1/2}$ \citep{birge1995estimation}.  This is not surprising -- since the conditional effect is a regression function, if we are only willing to assume it lies in a large nonparametric class, then the minimax rate of convergence will be slower than $n^{-1/2}$.  One can also think of the slower oracle convergence as a positive aspect to the I-DR-Learner, since it reduces how well the nuisance functions must be estimated to achieve oracle efficiency. For example, if the oracle convergence rate is ``only'' $n^{-1/4}$, then the I-DR-Learner can estimate each nuisance function at $n^{-1/8}$ convergence rates and still attain oracle efficiency.  When the nuisance functions are estimated well enough and the I-DR-Learner is oracle efficient, confidence bands can be constructed following well-known processes for nonparametric regression \citep{wasserman2006all}.  

\medskip

Both the Projection-Learner and the I-DR-Learner can be used to estimate conditional effect curves across $\delta$ and $V$, thereby quantifying how causal effects vary across $V$.  A natural question is whether there is any treatment effect heterogeneity across $V$.  In the following section, we outline how to quantify and test for treatment effect heterogeneity.

\section{Understanding effect heterogeneity with the V-CIDE} \label{sec:heterogeneity}

There is a large literature for understanding treatment effect heterogeneity by summarizing the CATE (e.g., \cite{crump2008nonparametric, ding2016randomization, ding2019decomposing, luedtke2019omnibus}).  In this section, we demonstrate how the variance of the CIDE, the V-CIDE, defined in eq. \eqref{eq:vcide}, can be used to understand effect heterogeneity.  To ease exposition, we focus on the case where $V = X$, and examine effect heterogeneity across all covariates.  The case where $V$ is a strict subset of $X$ (i.e., $V \subset X$) is outlined in the appendix.  By Proposition \ref{prop:deriv}, the V-CIDE is identified by 
$$
\bbV \left\{ \tau_{cide}(X; \delta) \right\} = \bbV  \left( \left[ \frac{\pi(X) \{ 1 - \pi(X) \}}{ \{ \delta \pi(X) + 1 - \pi(X) \}^2} \right] \cdot \{\mu(1, X) - \mu(0, X) \} \right).
$$
When the V-CIDE is zero, the derivative is constant across $V$, and so shifting the treatment distribution has the same effect on all subjects.  If the V-CIDE is greater than zero, then there is treatment effect heterogeneity in the incremental effect. 

\medskip

We construct an estimator in two pieces by first noting that the V-CIDE is the difference between two effects since $\bbV \{ \tau_{cide} (X; \delta) \} = \bbE \{ \tau_{cide} (X; \delta)^2\} - \bbE \{ \tau_{cide} (X; \delta) \}^2$. The first effect, $\bbE \{ \tau_{cide} (X; \delta)^2\}$, admits an efficient influence function by the following lemma:

\begin{restatable}{lemma}{lemvareif}\label{lem:var_eif}
Under Assumptions \ref{asmp:cons} and \ref{asmp:exch}, the un-centered efficient influence function for $\bbE \left\{ \tau_{cide}(X; \delta)^2 \right\}$ is
$$
2 \omega (X; \delta) \Big\{ \mu(1, X) - \mu(0, X) \Big\} \left[ \omega(X; \delta) \varphi(Z) + \phi(Z; \delta) \Big\{ \mu(1, X) - \mu(0, X) \Big\} \right] + \left[ \omega(X; \delta) \Big\{ \mu(1, X) - \mu(0, X) \Big\} \right]^2 
$$
where
\begin{align}
    \omega(X; \delta) &= \frac{\pi(X) \{ 1 - \pi(X) \}}{ \{ \delta \pi(X) + 1 - \pi(X) \}^2} \label{eq:omega} \\
    \varphi(Z) &= \frac{A}{\pi(X)} \Big\{ Y - \mu(1, X) \Big\} - \frac{1-A}{1-\pi(X)} \Big\{ Y - \mu(0, X) \Big\}  \label{eq:varphi} \\
    \phi(Z; \delta) &= \left[ \frac{1}{\{ \delta \pi(X) + 1 - \pi(X) \}^2 } - \frac{2 \delta \pi(X) }{\{ \delta \pi(X) + 1 - \pi(X) \}^3} \right] \cdot \left\{ A - \pi(X) \right\} \label{eq:phi} 
\end{align}
\end{restatable}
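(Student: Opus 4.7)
The plan is to derive the efficient influence function (EIF) for $\psi = \bbE\{\tau_{cide}(X;\delta)^2\}$ by computing the pathwise derivative of $\psi$ along a one-dimensional regular submodel $\bbP_t$ with score $s(Z)$, then reading off the result as $\bbE\{\mathrm{IF}(Z)\,s(Z)\}$. Write $\tau_{cide}(X;\delta) = \omega(X;\delta)\Delta(X)$ with $\omega$ as in eq.~\eqref{eq:omega} and $\Delta(X) = \mu(1,X) - \mu(0,X)$, so that $\psi = \int [\omega(x;\delta)\Delta(x)]^2\,d\bbP(x)$. Differentiating and applying the product rule splits $\tfrac{d}{dt}\psi(\bbP_t)|_{t=0}$ into two contributions: (i) the derivative of the squared integrand $[\omega_t(x)\Delta_t(x)]^2$ at fixed $x$, integrated against $\bbP$, and (ii) the derivative of the marginal $d\bbP_t(x)$ integrated against $[\omega(x)\Delta(x)]^2$.

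For (ii), the usual formula for functionals of the form $\bbE\{g(X)\}$ delivers the uncentered contribution $[\omega(X;\delta)\Delta(X)]^2$, which is the final summand in the claimed EIF. For (i), I would leverage Lemma~\ref{lem:cide_eif}: rewriting $\xi(Z;\delta)$ there as $\omega(X;\delta)\varphi(Z) + \phi(Z;\delta)\Delta(X) + \omega(X;\delta)\Delta(X)$ exposes $\omega\varphi + \phi\Delta$ as the conditional-mean-zero score correction that captures how $\omega_t(x)\Delta_t(x)$ moves with $\pi_t$ and $\mu_t$ at a fixed $x$. Concretely, $\partial_t[\omega_t(x)\Delta_t(x)]|_0 = \omega(x)\,\partial_t\Delta_t(x)|_0 + \Delta(x)\,\omega'(\pi(x))\,\partial_t\pi_t(x)|_0$, and the short algebra check $\omega'(\pi) = \tfrac{1}{(\delta\pi+1-\pi)^2} - \tfrac{2\delta\pi}{(\delta\pi+1-\pi)^3}$ confirms that this perturbation pairs with the score exactly as $\omega(X)\varphi(Z) + \phi(Z;\delta)\Delta(X)$ does. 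Multiplying by the chain-rule factor $2\omega(X)\Delta(X)$ produces the first summand in the claim.

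A cleaner way to package the argument, which I would present as the conceptual take-away, is the following meta-identity: if $T(\bbP) = \bbE\{h(X;\bbP)^2\}$ and the uncentered EIF of $\bbE\{h(X;\bbP)\}$ decomposes as $\xi_h(Z) = h(X;\bbP) + \eta_h(Z)$ with $\bbE\{\eta_h(Z)\mid X\} = 0$, then the uncentered EIF of $T$ is $2h(X;\bbP)\,\eta_h(Z) + h(X;\bbP)^2$. Instantiating this with $h = \omega\Delta$ and $\eta_h = \omega\varphi + \phi\Delta$, read directly off Lemma~\ref{lem:cide_eif}, yields the stated formula.

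The main obstacle is the bookkeeping around cross-contributions from simultaneously perturbing $\pi$ and $\mu$: when differentiating $[\omega_t(x)\Delta_t(x)]^2$ one obtains several mixed terms, and verifying that they sum cleanly into $2\omega\Delta(\omega\varphi + \phi\Delta)$ -- without spurious second-order residuals -- relies on the conditional-mean-zero properties $\bbE\{\varphi(Z)\mid X\} = 0$ and $\bbE\{\phi(Z;\delta)\mid X\} = 0$, which are immediate from the structure of $\varphi$ and from the $\{A-\pi(X)\}$ factor in $\phi$. Appealing to the meta-identity, which already encodes these cancellations via Lemma~\ref{lem:cide_eif}, keeps the calculation compact.
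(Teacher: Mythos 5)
Your derivation is correct, but it takes a genuinely different route from the paper. You derive the influence function constructively, by differentiating $\bbE_{\bbP_t}\{h(X;\bbP_t)^2\}$ with $h=\omega\Delta$ along a smooth submodel, splitting the pathwise derivative into the marginal-of-$X$ contribution (giving the plug-in term $h^2$) and the fixed-$x$ nuisance contribution (giving $2h\{\omega\varphi+\phi\Delta\}$ via $\partial_\pi\omega$ and the conditional-mean-zero scores); your check that $\partial_\pi\omega(\pi)=\{\delta\pi+1-\pi\}^{-2}-2\delta\pi\{\delta\pi+1-\pi\}^{-3}$ is exactly right, and in a nonparametric model the resulting gradient is automatically the efficient one. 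The paper instead \emph{verifies} the stated candidate: it posits $\xi(\bbP)=2\omega\tau(\omega\varphi+\phi\tau)+(\omega\tau)^2$, computes the von Mises remainder $R_2(\overline\bbP,\bbP)=\int\{\xi(\overline\bbP)-\xi(\bbP)\}\,d\bbP$, and shows by algebra that it is a second-order product of errors, $R_2\lesssim(\lVert\overline\mu-\mu\rVert+\lVert\overline\pi-\pi\rVert)^2$, then appeals to uniqueness of gradients under the nonparametric model (as spelled out in the proof of Lemma~\ref{lem:cide_eif}). Your approach buys a derivation that does not require guessing the answer, and your ``meta-identity'' ($T(\bbP)=\bbE\{h(X;\bbP)^2\}$ has uncentered EIF $h^2+2h\eta_h$ when the EIF of $\bbE\{h(X;\bbP)\}$ is $h+\eta_h$ with $\bbE(\eta_h\mid X)=0$) is a nice reusable packaging -- though note it is not a consequence of the EIF of the mean functional alone: it additionally requires that $\eta_h$ represents the \emph{pointwise} perturbation, $\partial_t h(x;\bbP_t)|_{t=0}=\bbE\{\eta_h(Z)s(Z)\mid X=x\}$, which holds here because $h$ depends on $\bbP$ only through $\pi(x)$ and $\mu(a,x)$, and which your concrete computation supplies, so you should state the identity with that hypothesis. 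What the paper's route buys in exchange is the explicit second-order remainder bound, which is not a by-product of your argument but is reused verbatim in the proof of Theorem~\ref{thm:vcide_conv} to control the bias term $\bbP(\widehat\xi_1-\xi_1)$; if you adopt your derivation you would still need that remainder calculation separately.
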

Lemma \ref{lem:var_eif} shows that the un-centered efficient influence function for $\bbE \left\{ \tau_{cide}(X; \delta)^2 \right\}$ can be written as a weighted residual plus a plug-in. The second effect, $\bbE \{ \tau_{cide} (X; \delta)\}^2$, is also pathwise differentiable and admits an efficient influence function.   However, since it is a smooth transformation of an already pathwise differentiable function, we estimate it by squaring the estimator based on the efficient influence function for $\bbE \{ \tau_{cide} (X; \delta)\}$ provided in Lemma \ref{lem:cide_eif}.  Therefore, informed by Lemmas \ref{lem:cide_eif} and \ref{lem:var_eif}, we propose the estimator
\begin{align}
    \widehat {\bbV} \{ \tau_{cide}( X; \delta) \} &= \underbrace{\bbP_n \left[ 2 \widehat \omega \big( \widehat \mu_1 - \widehat \mu_0 \big) \left\{ \widehat \omega \widehat \varphi + \widehat \phi \big( \widehat \mu_1 - \widehat \mu_0 \big) \right\} + \left\{ \widehat \omega \big( \widehat \mu_1 - \widehat \mu_0 \big) \right\}^2 \right]}_{\text{Estimator for } \bbE \{ \tau_{cide} (X; \delta)^2 \}}  \label{eq:exp_sq_est} \\
    &- \underbrace{\left[ \bbP_n \Big\{ \widehat \omega \widehat \varphi + \widehat \phi \big( \widehat \mu_1 - \widehat \mu_0 \big) + \widehat \omega \big( \widehat \mu_1 - \widehat \mu_0 \big) \Big\} \right]^2}_{\text{Estimator for } \bbE \{ \tau_{cide} (X; \delta) \}^2}, \label{eq:sq_exp_est}
\end{align}
where we omit $\delta$, $X$, and $Z$ arguments and let $\mu_a = \mu(a, X)$ for brevity, and where $\widehat \omega, \widehat \varphi, \widehat \phi$ indicate the relevant formulae from \eqref{eq:omega}-\eqref{eq:phi}, but with the estimated nuisance functions (e.g., $\widehat \omega = \frac{\widehat \pi(X) \{ 1 -  \widehat \pi(X) \}}{ \{ \delta \widehat \pi(X) + 1 - \widehat \pi(X) \}^2}$).  Eq. \eqref{eq:exp_sq_est} is the estimator for $\bbE \{ \tau_{cide} (X; \delta)^2 \}$ motivated by Lemma \ref{lem:var_eif} - it takes the sample average of the estimated un-centered efficient influence function values for $\bbE \{ \tau_{cide} (X; \delta)^2 \}$.  Eq. \eqref{eq:sq_exp_est} is the estimator for $\bbE \{ \tau_{cide} (X; \delta) \}^2$ motivated by Lemma \ref{lem:cide_eif} - it squares the estimator for $\bbE \{ \tau_{cide} (X; \delta) \}$, which itself is just the sample average of the estimated un-centered efficient influence function values for $\bbE \{ \tau_{cide} (X; \delta) \}$ in eq. \eqref{eq:xihat}.  Formally, we outline the estimator in the following algorithm:

\begin{algorithm} \label{alg:vcide} \emph{(V-CIDE Estimator)} Assume as inputs $(D_{1}, D_2)$, which denote two independent samples of $n$ observations of $Z_i = (X_i, A_i, Y_i)$, then:
\begin{enumerate}
    \item On the training data $D_1$, estimate the nuisance functions $\widehat \mu (0, X)$, $\widehat \mu(1, X)$ and $\widehat \pi(X)$.
    \item On the estimation data $D_2$, estimate $\bbV \{ \tau_{cide} (X; \delta) \}$ per equations \eqref{eq:exp_sq_est} and \eqref{eq:sq_exp_est}, plugging in the estimates for $\widehat \mu(0, X), \widehat \mu(1, X)$ and $\widehat \pi(X)$ using the models from step 1.
\end{enumerate}
\end{algorithm}

As before, Algorithm \ref{alg:vcide} uses sample splitting to estimate the nuisance functions, which allows for estimating the nuisance functions with flexible machine learning models.  Again, this estimator could use cross-fitting by repeating the algorithm but with $D_1$ and $D_2$ reversed and then averaging the two estimates.  We establish the error guarantees of the estimator in the following result.

\begin{restatable}{theorem}{thmvcideconv}\label{thm:vcide_conv}
Let $\widehat \psi_n$ denote the estimator from Algorithm \ref{alg:vcide}.  With Assumptions \ref{asmp:cons} and \ref{asmp:exch}, and Assumption \ref{asm:boundedness} from Theorem \ref{thm:fixed_mod}, also assume that 
\begin{enumerate}[label=(\alph*)]
    \item $\bbP \Big\{ \omega (\mu_1 - \mu_0) + \omega \varphi + \phi \tau \leq C \Big\}$ and $\bbP \Big\{ \widehat \omega (\widehat \mu_1 - \widehat \mu_0) + \widehat \omega \widehat \varphi + \widehat \phi \widehat \tau \leq C \Big\} = 1$ for some $C < \infty$. \label{asmp:bounded_eif}
\end{enumerate}
If
$$
\Big( \lVert \widehat \mu - \mu \lVert + \lVert \widehat \pi - \pi \rVert \Big)^2 = o_{\bbP} \left( \frac{1}{\sqrt{n}} \right),
$$
then
$$
\sqrt{n} \Big[ \widehat \psi_n - \bbV \{ \tau_{cide}(X; \delta) \} \Big] \indist N(0, \sigma^2),
$$
where 
\begin{align}
    \sigma^2 = \bbV \bigg[ &2 \omega \big( \mu_1 - \mu_0 \big) \left\{ \omega \varphi + \phi \big( \mu_1 - \mu_0 \big) \right\} + \left\{ \omega \big( \mu_1 - \mu_0 \big) \right\}^2 \nonumber \\
    &- \bbE \Big\{ \omega \varphi + \phi \big( \mu_1 - \mu_0 \big) + \omega \big( \mu_1 - \mu_0 \big) \Big\} \cdot \Big\{ \omega \varphi + \phi \big( \mu_1 - \mu_0 \big) + \omega \big( \mu_1 - \mu_0 \big) \Big\} \bigg],  \label{eq:pop_vcide_var}
\end{align}
$\mu_a = \mu(a, X)$, and $\omega = \omega(X; \delta), \varphi = \varphi(Z)$, and $\phi = \phi(Z; \delta)$ as defined in equations \eqref{eq:omega}, \eqref{eq:varphi}, and \eqref{eq:phi}. 
\end{restatable}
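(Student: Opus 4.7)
The plan is to establish asymptotic linearity of $\widehat\psi_n$ around the V-CIDE via a one-step / von Mises expansion and then apply the CLT. Write $\psi_1 = \bbE\{\tau_{cide}(X;\delta)^2\}$ and $\psi_2 = \bbE\{\tau_{cide}(X;\delta)\}$ so the target is $\psi = \psi_1 - \psi_2^2$ and the estimator is $\widehat\psi_n = \widehat\psi_{1,n} - \widehat\psi_{2,n}^2$, where $\widehat\psi_{1,n}$ is the sample mean of the plugged-in un-centered EIF from Lemma \ref{lem:var_eif} and $\widehat\psi_{2,n}$ is the sample mean of the plugged-in un-centered EIF from Lemma \ref{lem:cide_eif}. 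Expanding the quadratic gives
\begin{equation*}
\widehat\psi_n - \psi = (\widehat\psi_{1,n} - \psi_1) - 2\psi_2(\widehat\psi_{2,n} - \psi_2) - (\widehat\psi_{2,n} - \psi_2)^2,
\end{equation*}
and the last term is $O_\bbP(1/n) = o_\bbP(n^{-1/2})$ once $\widehat\psi_{2,n} - \psi_2 = O_\bbP(n^{-1/2})$ is verified.

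For each $j \in \{1,2\}$, let $\varphi_j$ denote the corresponding un-centered EIF and decompose
\begin{equation*}
\widehat\psi_{j,n} - \psi_j = (\bbP_n - \bbP)\{\varphi_j(Z;\mu,\pi)\} + (\bbP_n - \bbP)\{\varphi_j(Z;\widehat\mu,\widehat\pi) - \varphi_j(Z;\mu,\pi)\} + R_{n,j},
\end{equation*}
with $R_{n,j} = \bbP\{\varphi_j(Z;\widehat\mu,\widehat\pi)\} - \psi_j$. The sample splitting in Algorithm \ref{alg:vcide} allows conditioning on the training fold so that $(\widehat\mu,\widehat\pi)$ is treated as fixed on $D_2$; combined with assumption (a) and Assumption \ref{asm:boundedness}, a conditional Chebyshev bound shows the middle empirical-process term is $o_\bbP(n^{-1/2})$ as soon as the nuisance $L_2$ errors vanish. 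For $R_{n,2}$, the standard incremental-effect von Mises expansion (as in \cite{kennedy2019incremental}) yields $R_{n,2} \lesssim \|\widehat\mu - \mu\|\|\widehat\pi - \pi\| + \|\widehat\pi - \pi\|^2$; a parallel but more involved expansion of the quadratic functional using Lemma \ref{lem:var_eif} gives $R_{n,1} \lesssim (\|\widehat\mu - \mu\| + \|\widehat\pi - \pi\|)^2$. Under the stated rate both remainders are $o_\bbP(n^{-1/2})$, so
\begin{equation*}
\sqrt n(\widehat\psi_n - \psi) = \sqrt n(\bbP_n - \bbP)\{\varphi_1(Z;\mu,\pi) - 2\psi_2\,\varphi_2(Z;\mu,\pi)\} + o_\bbP(1),
\end{equation*}
and the CLT delivers the Gaussian limit with variance matching the expression in \eqref{eq:pop_vcide_var}.

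The main obstacle is controlling $R_{n,1}$. Because $\bbE\{\tau_{cide}^2\}$ is a quadratic functional of $\tau_{cide}$, its EIF carries a ``$2\tau\cdot\mathrm{IF}(\tau)$'' cross-piece that must combine precisely with the plug-in $\widehat\tau^2 - \tau^2$ for the first-order errors to cancel. Concretely, one expands $\bbP\{2\widehat\tau\,\widehat{\mathrm{IF}}(\tau) + \widehat\tau^2\} - \psi_1$ and exploits the residual structure $\bbE\{\widehat\varphi \mid X\} = (\pi/\widehat\pi)(\mu_1 - \widehat\mu_1) - \{(1-\pi)/(1-\widehat\pi)\}(\mu_0 - \widehat\mu_0)$ together with $\bbE\{\widehat\phi \mid X\} \propto \pi - \widehat\pi$, then controls each surviving cross-product either by Cauchy-Schwarz against $\|\widehat\mu - \mu\|\|\widehat\pi - \pi\|$ or against $\|\widehat\pi - \pi\|^2$. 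The latter terms, arising from pieces like $(\widehat\omega - \omega)(\widehat\pi - \pi)$ and $(\widehat\tau - \tau)^2$, are exactly what force the rate condition to be stated in the squared form $(\|\widehat\mu - \mu\| + \|\widehat\pi - \pi\|)^2 = o_\bbP(n^{-1/2})$ rather than the classical product-of-errors form, paralleling the remark after Corollary \ref{cor:lim_dist}.
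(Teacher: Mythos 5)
Your proposal is mechanically sound but takes a genuinely different route from the paper. You linearize the square by the delta method, writing $\widehat\psi_n - \psi = (\widehat\psi_{1,n}-\psi_1) - 2\psi_2(\widehat\psi_{2,n}-\psi_2) - (\widehat\psi_{2,n}-\psi_2)^2$ with $\psi_2 = \bbE\{\tau_{cide}(X;\delta)\}$, and then run the usual CLT / empirical-process / bias decomposition separately for each component, controlling the two bias terms exactly as the paper's Lemmas \ref{lem:cide_eif} and \ref{lem:var_eif} do and the empirical-process terms by sample splitting plus a conditional Chebyshev bound. The paper instead forms a single combined (pseudo) influence function $\xi_3 = \xi_1 - \xi_2\,\bbE(\xi_2)$ and its estimated analogue $\widehat\xi_3 = \widehat\xi_1 - \widehat\xi_2\,\bbP_n(\widehat\xi_2)$, so that $\widehat\psi_n - \bbV\{\tau_{cide}(X;\delta)\} = \bbP_n(\widehat\xi_3) - \bbP(\xi_3)$, and applies one three-term decomposition, using Lemma 2 of \cite{kennedy2020sharp} for the empirical-process piece and the same second-order remainder bounds for the bias. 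The nuisance-rate bookkeeping is identical in the two arguments; what your route buys is that the centering of the squared-mean piece is handled transparently by the delta method rather than being folded into the bias term of a data-dependent function $\widehat\xi_3$ that itself contains $\bbP_n(\widehat\xi_2)$.

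The one place the routes genuinely diverge is the limiting variance, and you should not assert that yours ``matches'' \eqref{eq:pop_vcide_var} without comment. Your asymptotic linear term is $(\bbP_n-\bbP)\{\xi_1 - 2\psi_2\xi_2\}$, the standard delta-method influence function; this is also what Proposition \ref{prop:linear} and the proof of Proposition \ref{prop:coverage} use, since $\bbV(\zeta_1) + 4\bbV(\zeta_2) - 4\cov(\zeta_1,\zeta_2) = \bbV(\zeta_1 - 2\zeta_2)$ with $\zeta_2 = \bbE(\xi_2)\xi_2$. The display \eqref{eq:pop_vcide_var}, read literally, is $\bbV\{\xi_1 - \bbE(\xi_2)\,\xi_2\}$, i.e., without the factor $2$; the paper's proof lands there because, when bounding $\bbP(\widehat\xi_3 - \xi_3)$, it replaces the term $\bbE(\widehat\xi_2 \mid D_1)\{\bbP_n(\widehat\xi_2) - \bbE(\widehat\xi_2\mid D_1)\}$ --- an exactly mean-zero $O_\bbP(n^{-1/2})$ fluctuation that should be retained in the linear expansion --- by its expectation over the estimation fold. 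A sample-variance sanity check (take $\xi_1 = W^2$, $\xi_2 = W$, so the estimator is the empirical variance with influence function $(W-\bbE W)^2 - \bbV(W) \propto W^2 - 2(\bbE W)W$) confirms the factor $2$ belongs in the influence function. So your derivation is the correct one, but the final sentence of your proposal should either state the variance as $\bbV(\xi_1 - 2\psi_2\xi_2)$ and note the discrepancy with \eqref{eq:pop_vcide_var} (and with the estimator $\widehat\sigma^2$ in \eqref{eq:vcide_var}), or explain why the two expressions would coincide --- which they do not in general.
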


Theorem \ref{thm:vcide_conv} shows that the estimator for the V-CIDE satisfies a version of double robustness under relatively weak conditions.  Assumption \ref{asmp:bounded_eif} says that the efficient influence function for the average derivative and the estimate for the efficient influence function are bounded, which is a mild assumption.  Then, if both nuisance function estimators converge at $n^{-1/4}$ rates, the standardized difference between the estimator and the V-CIDE has a Gaussian limiting distribution.  This is a slightly stronger requirement than that of Corollary \ref{cor:lim_dist}, since both nuisance functions must be estimated at $n^{-1/4}$ rates, not just the propensity score.   This occurs due to the nonlinearity of $\bbE \{ \tau_{cide}(X; \delta)^2 \}$ in terms of $\mu$.  Nonetheless, this result is still model-agnostic about the nuisance function estimators, and the convergence requirement can be satisfied by nonparametric estimators under suitable smoothness or sparsity.  Theorem \ref{thm:vcide_conv} suggests constructing Wald-style $1-\alpha$ confidence intervals with 
\begin{equation} \label{eq:ci_alternative}
\widehat \psi_n \pm \Phi^{-1} (1 - \alpha/2) \sqrt{\frac{\widehat \sigma^2}{n}},
\end{equation}
where $\widehat \sigma^2$ is the sample variance estimator for $\sigma^2$ defined in eq. \eqref{eq:pop_vcide_var}; i.e., 
\begin{align}
    \widehat \sigma^2 = \bbV_n \bigg[ &2 \widehat \omega \big( \widehat \mu_1 - \widehat \mu_0 \big) \left\{ \widehat \omega \widehat \varphi + \widehat \phi \big( \widehat \mu_1 - \widehat \mu_0 \big) \right\} + \left\{ \widehat \omega \big( \widehat \mu_1 - \widehat \mu_0 \big) \right\}^2 \nonumber \\
    &- \bbP_n \Big\{ \widehat \omega \widehat \varphi + \widehat \phi \big( \widehat \mu_1 - \widehat \mu_0 \big) + \widehat \omega \big( \widehat \mu_1 - \widehat \mu_0 \big) \Big\} \cdot \Big\{ \widehat \omega \widehat \varphi + \widehat \phi \big( \widehat \mu_1 - \widehat \mu_0 \big) + \widehat \omega \big( \widehat \mu_1 - \widehat \mu_0 \big) \Big\} \bigg] \label{eq:vcide_var}
\end{align}
where $\bbV_n$ denotes the sample variance.  

\medskip

Unfortunately, the estimator in Algorithm \ref{alg:vcide} converges to a degenerate distribution when $\bbV \{ \tau_{cide} (X; \delta) \} = 0$ because the efficient influence function values are identically zero, and $\sigma^2 = 0$ in eq \eqref{eq:pop_vcide_var}.  So, the confidence interval in \eqref{eq:ci_alternative} would under-cover the true parameter.  Instead, we can construct a conservative estimate of the variance by noting that the efficient influence function values of $\bbE \{ \tau_{cide}(X; \delta)^2 \}$ and $\bbE \{ \tau_{cide} (X; \delta) \}^2$ have non-negative covariance when $\bbV \{ \tau_{cide} (X; \delta) \} = 0$ (which is stated formally in Proposition \ref{prop:cons_var} in the appendix).  This suggests a simple way to conservatively estimate the variance of $\widehat \psi_n$, and construct a valid confidence interval with
\begin{equation} \label{eq:ci_null}
    \widehat \psi_n \pm \Phi^{-1} ( 1 - \alpha / 2) \sqrt{ \frac{\widehat \sigma_1^2 + \widehat \sigma_2^2}{n} }
\end{equation}
where
\begin{align}
    \widehat \sigma_1^2 &= \widehat \bbV_n \bigg[ 2 \widehat \omega \big( \widehat \mu_1 - \widehat \mu_0 \big) \left\{ \widehat \omega \widehat \varphi + \widehat \phi \big( \widehat \mu_1 - \widehat \mu_0 \big) \right\} + \left\{ \widehat \omega \big( \widehat \mu_1 - \widehat \mu_0 \big) \right\}^2 \bigg] \text{, and } \label{eq:hatsigma1} \\
    \widehat \sigma_2^2 &= \bbV_n \bigg[ \bbP_n \Big\{ \widehat \omega \widehat \varphi + \widehat \phi \big( \widehat \mu_1 - \widehat \mu_0 \big) + \widehat \omega \big( \widehat \mu_1 - \widehat \mu_0 \big) \Big\} \cdot \Big\{ \widehat \omega \widehat \varphi + \widehat \phi \big( \widehat \mu_1 - \widehat \mu_0 \big) + \widehat \omega \big( \widehat \mu_1 - \widehat \mu_0 \big) \Big\} \bigg] \label{eq:hatsigma2}
\end{align} 
are, respectively, consistent estimators of the variance of the estimators in eq. \eqref{eq:exp_sq_est} and \eqref{eq:sq_exp_est} for $\bbE \{ \tau_{cide}(X; \delta)^2 \}$ and $\bbE \{ \tau_{cide}(X; \delta) \}^2$.  This confidence interval suggests the following one-sided test for treatment effect heterogeneity:
\begin{equation} \label{eq:test}
    \begin{cases}
    \text{Reject } H_0: \bbV \{ \tau_{cide} (X; \delta) \} = 0 &\text{ if } \widehat \psi_n - \Phi^{-1} ( 1 - \alpha) \sqrt{ \frac{\widehat \sigma_1^2 + \widehat \sigma_2^2}{n} } > 0, \\
    \text{ Fail to reject } H_0: \bbV \{ \tau_{cide} (X; \delta) \} = 0 &\text{otherwise.}
    \end{cases}
\end{equation}
This test controls Type I error at the appropriate level, as shown in the following result.
\begin{restatable}{proposition}{propcoverage}\label{prop:coverage}
Under Assumptions \ref{asmp:cons} and \ref{asmp:exch}, Assumption \ref{asm:boundedness} from Theorem \ref{thm:fixed_mod}, and Assumption \ref{asmp:bounded_eif} from Theorem \ref{thm:vcide_conv}, if
$$
\Big( \lVert \widehat \mu - \mu \rVert + \lVert \widehat \pi - \pi \rVert \Big)^2 = o_\bbP \left( \frac{1}{\sqrt{n}} \right),
$$
then the asymptotic Type I error rate of the test in \eqref{eq:test} is less than or equal to $\alpha$.
\end{restatable}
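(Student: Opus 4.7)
The plan is to combine the asymptotic linear expansion underlying Theorem \ref{thm:vcide_conv} with the non-negative-covariance fact from Proposition \ref{prop:cons_var} to show that $\widehat\sigma_1^2 + \widehat\sigma_2^2$ is an asymptotically conservative variance estimator for $\sqrt{n}\,\widehat\psi_n$ under $H_0$, so the standardized statistic is stochastically dominated by a standard normal in the limit.

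First I would invoke the expansion that delivers Theorem \ref{thm:vcide_conv}. Under the stated rate condition, $\sqrt{n}(\widehat\psi_n - \bbV\{\tau_{cide}(X;\delta)\}) = \sqrt{n}(\bbP_n - \bbP)(\varphi_A - \varphi_B) + o_\bbP(1)$, where $\varphi_A$ is the centered efficient influence function for $\bbE\{\tau_{cide}(X;\delta)^2\}$ from Lemma \ref{lem:var_eif} and $\varphi_B$ is the influence function for $\bbE\{\tau_{cide}(X;\delta)\}^2$ obtained from Lemma \ref{lem:cide_eif} by the delta method. Under $H_0$, $\bbV\{\tau_{cide}(X;\delta)\} = 0$, so $\sqrt{n}\,\widehat\psi_n \indist N(0,\sigma^2)$ with $\sigma^2 = \bbV(\varphi_A - \varphi_B) = \bbV(\varphi_A) + \bbV(\varphi_B) - 2\,\mathrm{Cov}(\varphi_A,\varphi_B)$.

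Next I would appeal to Proposition \ref{prop:cons_var}, which gives $\mathrm{Cov}(\varphi_A,\varphi_B) \ge 0$ under the null, to conclude $\sigma^2 \le \bbV(\varphi_A) + \bbV(\varphi_B)$. The estimators $\widehat\sigma_1^2, \widehat\sigma_2^2$ in \eqref{eq:hatsigma1}--\eqref{eq:hatsigma2} are empirical variances of the plug-in influence function values: under Assumption \ref{asm:boundedness}, Assumption \ref{asmp:bounded_eif}, and the $L_2$ rate condition on $\widehat\mu,\widehat\pi$, a standard sample-splitting argument combined with the continuous mapping theorem yields $\widehat\sigma_1^2 + \widehat\sigma_2^2 \inprob \bbV(\varphi_A) + \bbV(\varphi_B) \ge \sigma^2$. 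Finally, Slutsky's theorem gives $T_n := \sqrt{n}\,\widehat\psi_n / \sqrt{\widehat\sigma_1^2 + \widehat\sigma_2^2} \indist r\cdot Z$, where $Z \sim N(0,1)$ and $r^2 = \sigma^2/(\bbV(\varphi_A) + \bbV(\varphi_B)) \le 1$, so $\bbP(T_n > \Phi^{-1}(1-\alpha)) \to \bbP(Z > \Phi^{-1}(1-\alpha)/r) \le \alpha$; in the degenerate case $\sigma = 0$, $\sqrt{n}\,\widehat\psi_n \inprob 0$ and the rejection probability tends to $0 \le \alpha$.

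The main obstacle will be the consistency argument for $\widehat\sigma_1^2$ and $\widehat\sigma_2^2$, since these are sample variances of nonlinear functions of the estimated nuisances $\widehat\mu,\widehat\pi$; one must verify that plugging the estimates into the influence function preserves the $L_2$ limit under the boundedness and $n^{-1/4}$ rate conditions, which requires decomposing the error into oracle and nuisance-error pieces analogous to the expansion for $\widehat\psi_n$ itself. A secondary care point is the degenerate case $\sigma^2 = 0$, where the Slutsky step above must be handled separately as a direct probability bound, and invoking Proposition \ref{prop:cons_var} to confirm the key inequality $\mathrm{Cov}(\varphi_A,\varphi_B) \ge 0$ on the null.
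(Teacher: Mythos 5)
Your proposal is correct and follows essentially the same route as the paper: asymptotic linearity of $\widehat\psi_n$ as the difference of the influence-function-based estimators for $\bbE\{\tau_{cide}(X;\delta)^2\}$ and $\bbE\{\tau_{cide}(X;\delta)\}^2$, the non-negative covariance under the null from Proposition \ref{prop:cons_var}, consistency of $\widehat\sigma_1^2+\widehat\sigma_2^2$ for the conservative variance, and Slutsky. The only cosmetic difference is that the paper obtains the limiting distribution via the componentwise result in Proposition \ref{prop:linear} rather than the expansion of Theorem \ref{thm:vcide_conv} directly, and your explicit treatment of the degenerate $\sigma^2=0$ case (which is in fact the generic situation under $H_0$) makes explicit a step the paper leaves implicit.
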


\begin{remark}
In the causal inference literature, at least two other solutions have been proposed for constructing confidence intervals when an estimator converges to a degenerate distribution.  Our approach is similar to that of \cite{williamson2021general}, where they focus on testing variable importance.  \cite{luedtke2019omnibus} propose a different approach - they derive the higher order influence function for their parameter, and construct an associated estimator that achieves $n^{-1}$ convergence under $n^{-1/4}$ conditions on the nuisance function estimators.
\end{remark}


\begin{remark}
When we do not have knowledge of the true parameter value, and we want to construct a valid confidence interval (rather than conduct a test), we can combine the confidence intervals in \eqref{eq:ci_alternative} and \eqref{eq:ci_null} and construct a valid confidence interval with
$$
\widehat \psi_n \pm \Phi^{-1} (1 - \alpha / 2) \sqrt{ \frac{\max(\widehat \sigma^2, \widehat \sigma_1^2 + \widehat \sigma_2^2  )}{n} }. 
$$
\end{remark}

In the Appendix, we illustrate several simulations that demonstrate the properties of the Projection-Learner and I-DR-Learner.  In short, the Projection-Learner achieves correct coverage for the projection parameter and the I-DR-Learner achieves oracle efficiency when the nuisance functions are estimated well enough.  In the next section, we apply these estimators to real ICU data, and demonstrate how they can uncover interesting phenomena that would be obscured by looking at effects with deterministic interventions, like the ATE and the CATE.

\section{Data Analysis of the Effect of Intensive Care Unit Admission on Mortality} \label{sec:data_analysis}

In this section we illustrate the I-DR-Learner and the estimator for the V-CIDE by analyzing data from the (SPOT)light prospective cohort study in which investigators collected data on intensive care unit (ICU) transfers and mortality.  This data is a cohort study collected between November 1st, 2010 and December 31st, 2011 of $13,011$ patients with deteriorating health who were assessed for critical care unit admission across 49 National Health Service hospitals in the UK \citep{harris2018impact, keele2019does}.  

\medskip

Previous literature has considered whether admission to the ICU reduces mortality \citep{gabler2013mortality, renaud2009association}, where the relevant exposure of interest is a binary indicator for whether someone was admitted to the ICU.  Recent analyses have estimated the ATE or used ICU bed availability as an instrumental variable to estimate the local average treatment effect (LATE) \citep{keele2019does}.  Flexible estimation of the ATE finds that the ICU is harmful, whereas estimates for the LATE find a null effect, albeit with wide confidence intervals.  However, arguably, this situation is ideal for conditional effect estimation with incremental interventions.  First, the relevant counterfactual interventions where everyone is sent to the ICU or no one is sent to the ICU may not be feasible (e.g., the ICU might not have capacity to admit everyone), but an intervention where it is made more or less likely that people are sent to the ICU could be feasible.  Second, one might expect a priori that the positivity assumption is violated, in the sense that some patients - depending on their condition - may be almost certain to be admitted or never be admited to the ICU.  Indeed, this is validated by the data, as shown in Figure 2; thus, an intervention that does not require positivity is desirable for this application.  Finally, understanding effect heterogeneity would be of great interest in this application, since it may be the case that the ICU is helpful for some patients while unhelpful or even harmful for others.

\subsection{Data}

The data contains 28-day mortality as an outcome variable and a binary indicator for whether someone was admitted to the ICU.  The data also contains detailed demographic, physiological, comorbidity, and mortality information for all patients.  In terms of demographic information, the data includes age, sex, septic diagnosis (0/1), and peri-arrest (0/1).  In terms of physiology data, there are three risk scores: the ICNARC physiology score \citep{harrison2007new}, the NHS National Early Warning score \citep{williams2012national}, and the Sepsis-related Organ Failure Assessment score \citep{vincent1996sofa}.  Finally, the data also records the patient's existing level of care at assessment and recommended level of care after assessment, which were defined using the UK Critical Care Minimum Dataset levels of care.  We used all these covariates in our analysis, and also included ICU bed availability, which is a binary measure of whether $< 4$ ICU beds were available at the time of assessment.

\subsection{Method}

We consider the counterfactual 28-day mortality rate if we increased or decreased the odds of ICU admission according to an incremental intervention.  We use the I-DR-Learner to nonparametrically estimate the CIE and the CIDE over the ICNARC physiology score.  We focus on the ICNARC score because it is a measure of the health risk of the patient, and a natural question is whether the ICU affects healthier and sicker patients differently.  Then, we estimate the V-CIDE to test for treatment effect heterogeneity across a continuum of policies.  The nuisance functions $\widehat \pi$ and $\widehat \mu$ were estimated with random forests via the \texttt{ranger} package in \texttt{R} \citep{wright2017fast}.  The I-DR-Learner second stage regression was estimated with a smoothing spline via the \texttt{mgcv::gam} function in \texttt{R} \citep{wood2012mgcv}.  \texttt{R} code demonstrating how our analyses were implemented is provided in Section \ref{sec:r-code} of the Appendix.

\subsection{Results}

Figure \ref{fig:prop_count} shows estimated propensity scores by ICNARC score, which confirms prior intuition that positivity might be violated with this data, since for most ICNARC scores there are estimated propensity scores very near 0 and 1.  Figure \ref{fig:3d_results} shows that the CIE varies across $\delta$ for all ICNARC scores.  Estimated counterfactual 28-day mortality is lowest under the observed treatment process  (when $\delta = 1$), and increases when the odds of ICU admission increase ($\delta > 1$) or decrease ($\delta < 1$).  This suggests that the NHS ICU admission protocol during the study was optimal or close to optimal over the class of incremental interventions, and interventions that make it significantly more or less likely for people to be admitted to the ICU could lead to higher mortality rates.  We also see strong evidence that the CIE varies across ICNARC score, and mortality increases as the ICNARC score increases.  This agrees with what one might expect, since the ICNARC is a risk measure where a higher ICNARC score denotes a patient with a higher risk of death.  However, this does not necessarily suggest treatment effect heterogeneity, since one would need to consider a contrast between two levels of the CIE to understand effect heterogeneity.  

\medskip

Figure \ref{fig:smiles} shows the CIE across $\delta$ for four ICNARC scores (0, 15, 30, and 40).  Examining only four curves shows more clearly that the shape of the CIE at each ICNARC score is very similar, suggesting that perhaps there is little heterogeneity.  Figure \ref{fig:smiles} also illustrates a further nuance.  Previous work has estimated the ATE and found that mortality rates would be higher if everyone were admitted to the ICU versus if no one were admitted \citep{keele2019does}. Taken at face value, this suggests that hospitals ought to send fewer people to the ICU; however, due to positivity issues in the data, ATE estimates are likely invalid.  The difference between the  endpoints of the curves in Figure \ref{fig:smiles} (i.e., $\tau_{cice}(v; \delta_u = 5, \delta_l = 0.2)$) suggests a similar conclusion to that implied by ATE estimates, since the mortality rate at $\delta = 5$ is higher than at $\delta = 0.2$.  But, by examining the curve across the spectrum of interventions, one would instead conclude that sending fewer people to the ICU would increase mortality rates as compared to the status quo ($\delta = 1$).  Therefore, our analysis validates previous research - in the sense that it estimates mortality to be lower when no one is admitted to the ICU, compared to everyone is admitted - but it also suggests a different practical implication, since one would conclude from our analysis that sending no one to the ICU is worse than maintaining the status quo.  This highlights how examining a spectrum of interventions can be more informative than examining a contrast like the ATE.

\medskip

Meanwhile, Figure \ref{fig:cide} shows the CIDE across ICNARC score for five $\delta$ values, and shows that the CIDE is generally very near to zero, and is only significantly different from zero at a few points across $\delta$ and ICNARC score.  Figure \ref{fig:vcide} shows there is significant treatment effect heterogeneity across ICNARC score with $95$\% confidence intervals, but that the magnitude of the effect is very small, since the estimate for the V-CIDE is very close to zero for all $\delta$ values.

\begin{figure}
    \centering
    \includegraphics[width = 0.9\textwidth]{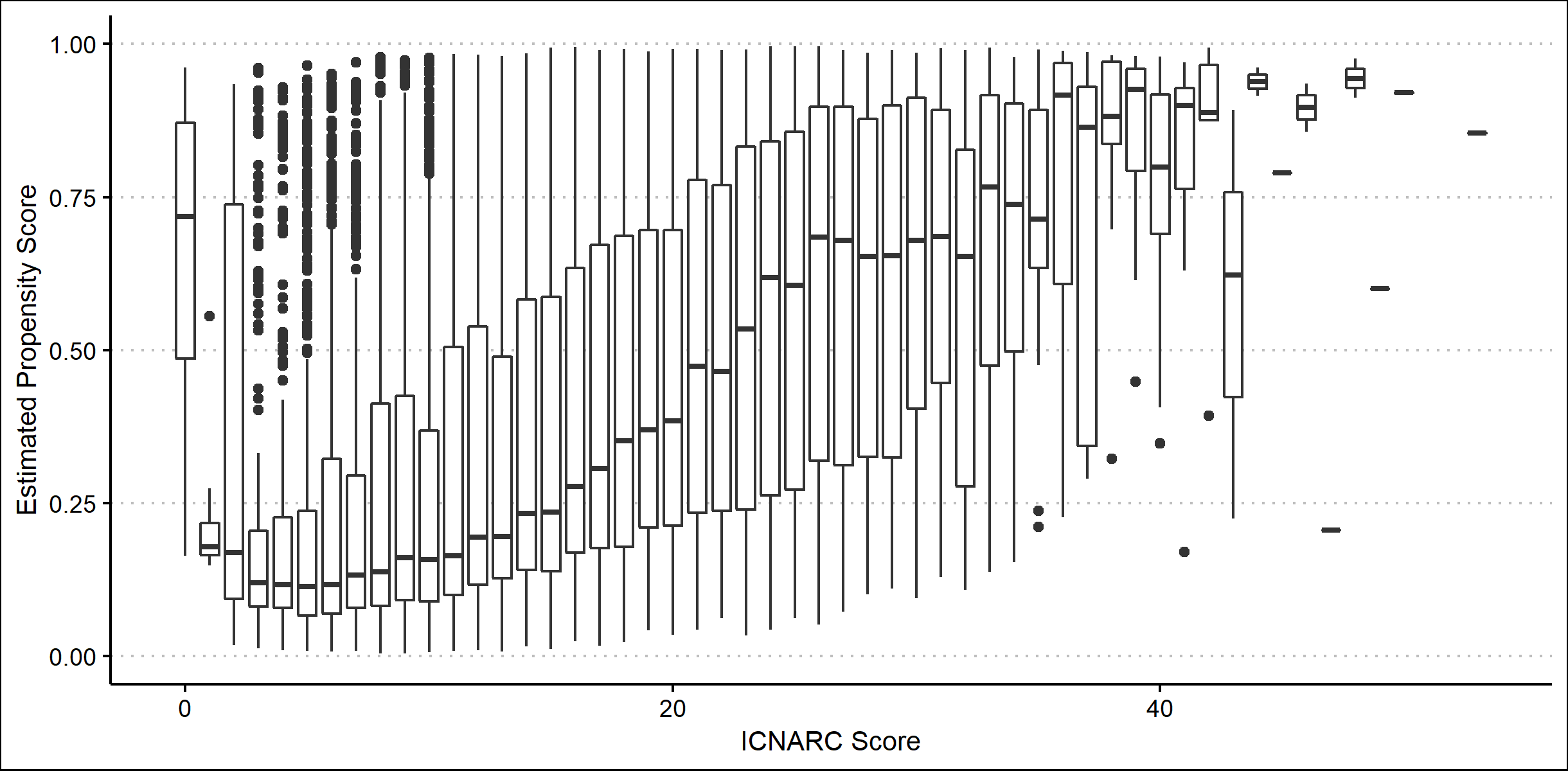}
    \caption{Propensity Scores by ICNARC Score}
    \label{fig:prop_count}
\end{figure}

\begin{figure}[ht]
    \centering
    
    \begin{subfigure}{.5\textwidth}
        \centering
        \includegraphics[width=.9\linewidth]{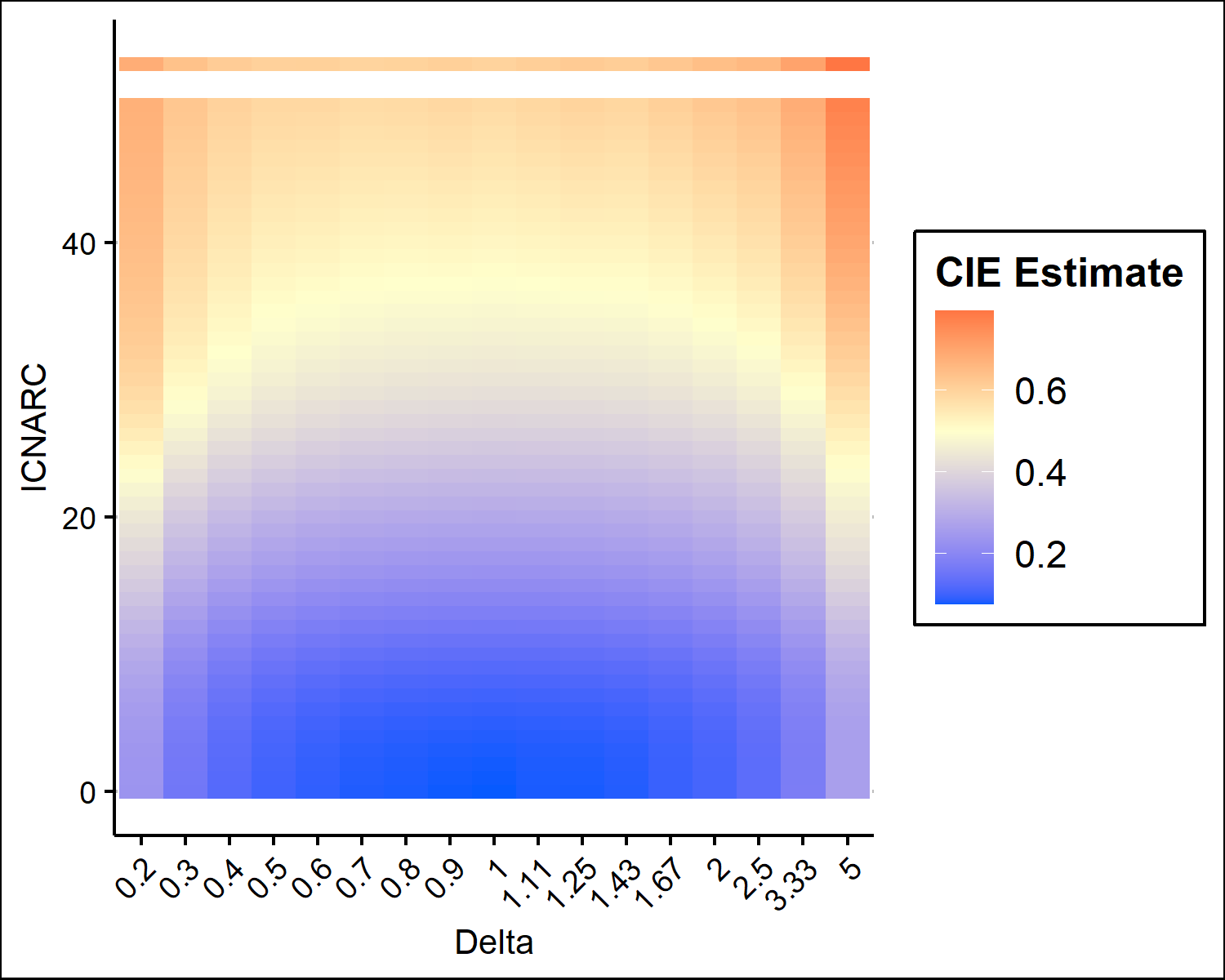}
        \caption{Predicted CIE for all ICNARC scores}
        \label{fig:3d_results}
    \end{subfigure}%
    \begin{subfigure}{.5\textwidth}
        \centering
        \includegraphics[width=.9\linewidth]{smiles_plot.png}
        \caption{Predicted CIE for select ICNARC scores}
        \label{fig:smiles}
    \end{subfigure}
    
    \caption{Predicted Conditional Incremental Effect by $\delta$ and ICNARC Score}
    \label{fig:data_analysis}
\end{figure}

\begin{figure}[h]
    \centering
    \includegraphics[width=.9\linewidth]{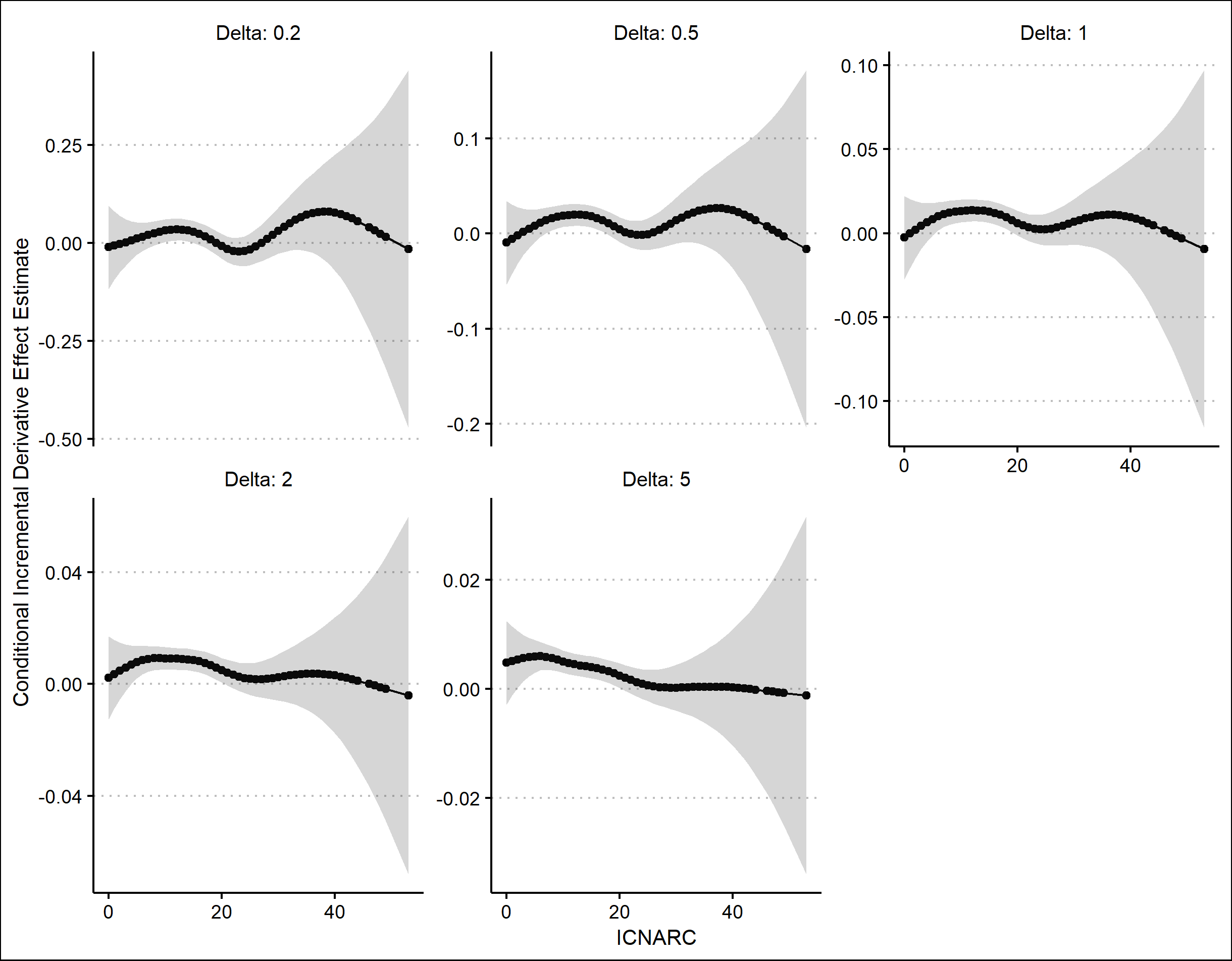}
    \caption{Predicted CIDE versus ICNARC score over $\delta$}
    \label{fig:cide}
\end{figure}

\begin{figure}[h]
    \centering
    \includegraphics[width = .9\linewidth]{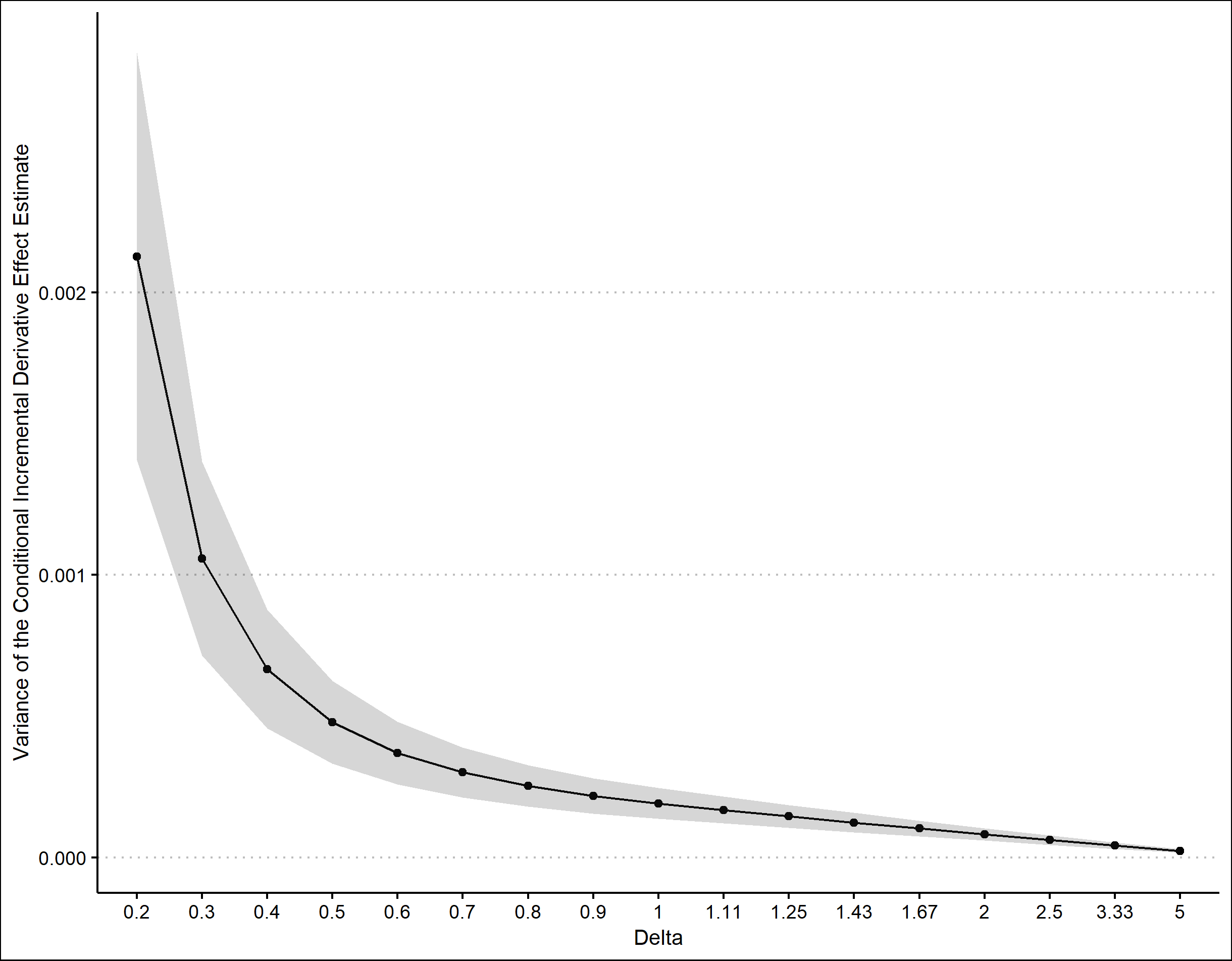}
    \caption{V-CIDE versus $\delta$}
    \label{fig:vcide}
\end{figure}

\newpage
\section{Discussion} \label{sec:discussion}

In this paper, we introduced three conditional effects based on incremental propensity score interventions - the conditional incremental effect (CIE), the conditional incremental contrast effect (CICE) and the conditional incremental derivative effect (CIDE). We proposed two estimators, the Projection-Learner and the I-DR-Learner, which can be used to estimate any of the three conditional effects.  We showed that the Projection-Learner, a projection estimator, achieves parametric efficiency under weak $n^{-1/4}$ conditions on the nuisance function estimators and that the I-DR-Learner, a nonparametric estimator, achieves oracle efficiency under similarly weak conditions.  We also proposed a fourth effect, the variance of the CIDE (V-CIDE), which is a one-dimensional summary of effect heterogeneity.  For the V-CIDE, we proposed a new estimator also with double robust style properties, and outlined methods for inference and testing for treatment effect heterogeneity.  

\medskip

Finally, we illustrated our methods with a real data analysis of the effect of ICU admission on mortality conditional on a patient's risk score.  This analysis demonstrated that estimating counterfactual mean outcomes across a spectrum of incremental interventions can be more informative than just estimating the average treatment effect.  We found evidence that the average treatment effect is positive, suggesting that sending no one to the ICU is better that sending everyone to the ICU in terms of average mortality rates.  However, by examining the spectrum of incremental interventions, we found that average mortality is lowest under the observed treatment process, and mortality would increase if patients were either more or less likely to be admitted to the ICU, suggesting that maintaining the status quo is optimal.    Further, we found that there is indeed statistically significant treatment effect heterogeneity across patient risk scores, but the magnitude of heterogeneity is small. 

\medskip

Here, we proposed conditional incremental effect estimators with the simplest data generating setup - one time point and binary treatment.  There are several natural extensions of this work to more complex frameworks, such as (i) time-varying data, (ii) incremental parameters that can depend on covariate data or past data, and (iii) multi-valued or continuous treatments with different stochastic interventions.  Since positivity violations are almost guaranteed with time-varying data or multi-valued or continuous treatment, it would also be important to understand how nonparametric estimators behave and how projection estimators might be utilized to approximate ATE-style effects when positivity is violated.

\section*{Acknowledgements}

The authors thank Kathryn Haderlein-McClean, Nick Kissel, Iv\'{a}n D\'{i}az, Eli Ben-Michael, Larry Wasserman, Matteo Bonvini, and the Causal Inference Reading Group at Carnegie Mellon University for helpful discussion and comments, and Luke Keele for guidance on the (SPOT)light study data.

\newpage

\section*{References}
\vspace{-0.5in}
\bibliographystyle{plainnat}
\bibliography{references}

\newpage
\appendix

\section{Stability Condition for Theorem \ref{thm:i-dr-learner}}

In this section, we state the stability condition invoked in Section \ref{sec:i-dr-learner} and Theorem \ref{thm:i-dr-learner}.  This stability condition is described in detail in Section 3 of \cite{kennedy2020towards}, and can be viewed as a form of stochastic equicontinuity for nonparametric regression.

\begin{definition} \label{def:stability} \emph{(Stability)} Suppose $D_1 = \{ Z_i \}_{i=1}^{n}$ and $D_2 = \{ Z_i \}_{i=n+1}^{2n}$ are independent training and estimation samples of $n$ observations where $X \subset Z$ are covariates (e.g., $Z_i = (X_i, A_i, Y_i)$.  Let 
\begin{enumerate}
    \item $\widehat f(z) = \widehat f (z; D_1)$ be an estimate of some function of the data, $f(z)$, using the training data $D_1$,
    \item $\widehat b(x) = \widehat b(x; D_1) \equiv \bbE \{ \widehat f (X) - f(X) \mid D_1, X = x \}$, the conditional bias of the estimator $\widehat f$,
    \item $\widehat{\bbE}_n (Y \mid X = x)$ denote a generic regression estimator that regresses outcomes $(Y_{n+1}, ..., Y_{2n})$ on predictors $(X_{n+1}, ..., X_{2n})$ in the estimation sample $D_2$.
\end{enumerate}
Then, the regression estimator $\widehat{\bbE}_n$ is defined as \emph{stable} (with respect to a distance metric $d$) if 
$$
\frac{\widehat{\bbE}_n \{ \widehat f(Z) \mid X = x \} - \widehat{\bbE}_n \{ f(Z) \mid X = x\} - \widehat{\bbE}_n \{ \widehat b (X) \mid X = x\} }{\sqrt{\bbE \left( \left[ \widehat{\bbE}_n \{ f(Z) \mid X = x \} - \bbE \{ f(Z) \mid X = x\} \right]^2 \right)}} \inprob 0
$$
wherever $d(\widehat f, f) \inprob 0$
\end{definition}

Definition \ref{def:stability} says that the difference between the regression estimate with estimated outcomes ($\widehat{\bbE}_n \{ \widehat f(Z) \mid X = x\}$) and the oracle regression ($\widehat{\bbE}_n \{ f(Z) \mid X = x \}$) converges to zero appropriately fast.  This definition can be viewed as a generalization of the classic stochastic equicontinuity condition
$$
\frac{(\bbP_n - \bbP)(\widehat f - f)}{1 / \sqrt{n}} \inprob 0.
$$
where $\bbP_n (\widehat f - f)$ is replaced by $\widehat{\bbE}_n \{ \widehat f(Z) \mid X = x \} - \widehat{\bbE}_n \{ f(Z) \mid X = x\}$, $\bbP (\widehat f - f)$ is replaced by the conditional bias term, $\widehat{\bbE}_n \{ \widehat b (X) \mid X = x\}$, and the denominator $1 / \sqrt{n}$ is replaced by the pointwise RMSE of the oracle estimator, $\sqrt{\bbE \left( \left[ \widehat{\bbE}_n \{ f(Z) \mid X = x \} - \bbE \{ f(Z) \mid X = x\} \right]^2 \right)}$.   This stability condition is satisfied by linear smoothers, as is demonstrated in \cite{kennedy2020towards} Theorem 1, and may be satisfied by more classes of estimators.

\section{Estimator for the variance of the conditional incremental deriative effect when the conditioning covariate is a strict subset of all covariates} 

In this section, we briefly outline an estimator for the variance of the conditional incremental derivative effect (V-CIDE) when the conditioning covariates $V$ are a strict subset of all covariates $X$ (i.e., $V \subset X$) and discuss the convergence properties of the associated estimator.  As a reminder, the variance of the CIDE is identified by
$$
\bbV \{ \tau_{cide} (V; \delta) \} = \bbV \left( \bbE \left[ \frac{\pi(X) \{ 1 - \pi(X) \}}{\{ \delta \pi(X) + 1 - \pi(X) \}^2} \{ \mu(1, X) - \mu(0, X) \} \ \Big| \ V = v \right] \right).
$$
By the definition of the variance and iterated expectation,
$$
\mathbb{V} \{ \tau_{cide}(V; \delta) \} = \bbE \{ \tau_{cide}(V; \delta)^2 \} - \bbE \{ \tau_{cide}(V; \delta) \}^2 = \bbE \{ \tau_{cide}(V; \delta)^2 \} - \bbE \{ \tau_{cide}(X; \delta) \}^2.
$$
The squared expectation term, $\bbE \{ \tau_{cide}(X; \delta) \}^2$, is the same as what appears when $V = X$, and therefore can be estimated as in eq. \eqref{eq:sq_exp_est} in Section \ref{sec:heterogeneity}.  The expected square term is new, and we derive the efficient influence function in the following result:
\begin{lemma}
    Under Assumptions \ref{asmp:cons} and \ref{asmp:exch}, the un-centered efficient influence function for $\bbE \{ \tau_{cide}(V; \delta)^2 \}$ is
    $$
    \tau_{cide}(V; \delta)^2 + 2 \tau_{cide}(V; \delta) \big\{ \omega \varphi + \phi \big( \mu_1 - \mu_0 \big) + \omega \big( \mu_1 - \mu_0 \big) - \tau_{cide}(V; \delta)^2 \big\},
    $$
    where $\mu_a = \mu(a, X)$, and $\omega = \omega(X; \delta), \varphi = \varphi(Z)$, and $\phi = \phi(Z; \delta)$ as defined in equations \eqref{eq:omega}, \eqref{eq:varphi}, and \eqref{eq:phi}.
\end{lemma}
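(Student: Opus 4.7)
The plan is to derive the efficient influence function via a Gateaux derivative computation along a regular parametric submodel $\{P_\epsilon\}$ with score $S(Z) = \partial_\epsilon \log p_\epsilon(Z)|_{\epsilon = 0}$. Write $T(X) = \omega(X; \delta)\{\mu(1,X) - \mu(0,X)\}$ and $\tau(V) = \tau_{cide}(V;\delta) = \bbE[T(X) \mid V]$, so the target is $\psi = \bbE\{\tau(V)^2\}$. Expanding the pathwise derivative gives
\[
\dot\psi = \bbE\{\tau(V)^2 \, S(V)\} + 2\,\bbE\{\tau(V)\,\dot\tau(V)\},
\]
and by centering and using $S(V) = \bbE[S(Z) \mid V]$, the first term reduces to $\bbE\{[\tau(V)^2 - \psi]\,S(Z)\}$, contributing the plug-in piece $\tau(V)^2 - \psi$ to the centered EIF.

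The next step is to decompose $\dot\tau(v)$ into a part from the conditional law $X \mid V$ and a part from the nuisance functions inside $T$. Because $V \subseteq X$, the conditional score $S(X \mid V) = S(X) - S(V)$ is well-defined, and
\[
\dot\tau(v) = \bbE\big\{[T(X) - \tau(V)]\,S(X \mid V) \mid V = v\big\} + \bbE\{\dot T(X) \mid V = v\}.
\]
Multiplying by $2\tau(V)$ and taking expectation, the first piece becomes $2\,\bbE\{\tau(V)[T(X) - \tau(V)]\,S(Z)\}$: the $S(V)$ component of $S(X \mid V)$ integrates to zero because $\bbE[T(X) - \tau(V) \mid V] = 0$, and the $S(X)$ component lifts to $S(Z)$ by iterated expectation since the integrand is $X$-measurable.

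The third step invokes the same computation underlying Lemma \ref{lem:cide_eif} to establish the key weighted-correction identity: for any fixed $X$-measurable function $h$,
\[
\bbE\{h(X)\,\dot T(X)\} = \bbE\big\{h(X)\,[\xi(Z;\delta) - T(X)]\,S(Z)\big\},
\]
where $\xi(Z;\delta) - T(X) = \omega(X;\delta)\varphi(Z) + \phi(Z;\delta)\{\mu(1,X) - \mu(0,X)\}$. This holds because $\omega\varphi$ projects out $\dot\mu_a$ via $\dot\mu_a(x) = \bbE[(Y - \mu_a)\,S \mid A = a, X = x]$, while $\phi(\mu_1 - \mu_0)$ projects out $\dot\pi$ via $\dot\pi(x) = \bbE[(A - \pi)\,S \mid X = x]$; each correction has $\bbE[\,\cdot \mid X] = 0$, so multiplication by any $X$-measurable $h$ passes through these identities unchanged. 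Taking $h(X) = \tau(V)$, which is $X$-measurable since $V \subseteq X$, gives $2\,\bbE\{\tau(V)\,\dot T(X)\} = 2\,\bbE\{\tau(V)[\xi(Z;\delta) - T(X)]\,S(Z)\}$.

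Combining the three contributions, the residuals $T(X) - \tau(V)$ and $\xi(Z;\delta) - T(X)$ telescope to $\xi(Z;\delta) - \tau(V)$, yielding
\[
\dot\psi = \bbE\Big[\big\{\tau(V)^2 - \psi + 2\tau(V)[\xi(Z;\delta) - \tau(V)]\big\}\,S(Z)\Big],
\]
so the un-centered EIF is $\tau(V)^2 + 2\tau(V)\{\xi(Z;\delta) - \tau(V)\}$, which matches the claimed expression after substituting $\xi(Z;\delta) = \omega\varphi + \phi(\mu_1 - \mu_0) + \omega(\mu_1 - \mu_0)$ (reading the final term inside the brace of the stated formula as $\tau_{cide}(V;\delta)$). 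The main obstacle is justifying the weighted-correction identity in the third step: one must verify that $\omega\varphi$ and $\phi(\mu_1 - \mu_0)$ not only recover $\dot T$ in the unweighted case (as in Lemma \ref{lem:cide_eif}) but continue to do so after pointwise multiplication by an arbitrary $X$-measurable weight, which is exactly what the conditional-mean-zero properties $\bbE[\varphi \mid X] = 0$ and $\bbE[\phi \mid X] = 0$ guarantee.
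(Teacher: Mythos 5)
Your derivation is correct, and it takes a genuinely different route from the paper's. The paper in fact states this lemma without an attached proof; its template for such results is the proof of Lemma \ref{lem:var_eif} (the $V = X$ case), which \emph{posits} the candidate influence function and then verifies a Von Mises expansion whose remainder $R_2(\bbPbar, \bbP)$ is shown, by explicit algebra, to be a second-order product of nuisance errors, with efficiency then following from pathwise differentiability plus the nonparametric tangent space argument. You instead derive the influence function \emph{constructively}: differentiate $\psi = \bbE\{\tau_{cide}(V;\delta)^2\}$ along submodels, split the score into its $V$, $X \mid V$, and nuisance pieces, and use the conditional-mean-zero corrections $\omega\varphi$ and $\phi(\mu_1-\mu_0)$ to represent $\dot T$; your weighted-correction identity with an arbitrary bounded $X$-measurable $h$ is exactly right (the coefficient multiplying $A - \pi$ in $\phi$ is $\partial \omega/\partial \pi$, which is what makes the $\dot\pi$ term come out correctly), and the telescoping of $T - \tau$ and $\xi - T$ to $\xi - \tau$ is clean. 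What the paper's route buys is that the same expansion that certifies the influence function also delivers the explicit second-order remainder used for the double-robustness statement in Theorem \ref{thm:vcide_conv_v}; your Gateaux computation would require a separate remainder analysis for that step. What your route buys is that the candidate need not be guessed in advance. One small omission: to conclude your gradient is the \emph{efficient} influence function you should add the sentence the paper uses in Lemma \ref{lem:cide_eif}, namely that under a nonparametric model the tangent space is the whole space of mean-zero finite-variance functions, so the gradient is unique and hence efficient.

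Your parenthetical reading of the stated formula is also the right one: the last term inside the braces should be $\tau_{cide}(V;\delta)$, not $\tau_{cide}(V;\delta)^2$. Since $\bbE\{\omega\varphi + \phi(\mu_1-\mu_0) + \omega(\mu_1-\mu_0) \mid V\} = \tau_{cide}(V;\delta)$, the expression as printed has mean $3\,\bbE\{\tau_{cide}(V;\delta)^2\} - 2\,\bbE\{\tau_{cide}(V;\delta)^3\}$ rather than the target parameter, and it is inconsistent with Lemma \ref{lem:var_eif} upon setting $V = X$; the square appears to be a typo, propagated to the estimator in eq. \eqref{eq:exp_sq_v}.
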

\noindent This result suggests the following estimator:
\begin{equation} \label{eq:exp_sq_v}
    \bbP_n \left[ \widehat \tau_{cide}(V; \delta)^2 + 2 \widehat \tau_{cide}(V; \delta) \left\{ \widehat \omega \widehat \varphi + \widehat \phi \big( \widehat \mu_1 - \widehat \mu_0 \big) + \widehat \omega \big( \widehat \mu_1 - \widehat \mu_0 \big) - \widehat \tau_{cide}(V; \delta)^2 \right\} \right].
\end{equation}
And, combined with the results in Section \ref{sec:i-dr-learner} and Section \ref{sec:heterogeneity}, suggests the following estimator for the V-CIDE:
\begin{algorithm} \label{alg:vcide_v} \emph{(V-CIDE Estimator when $V \subset X$)} Assume as inputs $(D_{1}, D_2)$, which denote two independent samples of $n$ observations of $Z_i = (X_i, A_i, Y_i)$, then:
\begin{enumerate}
    \item On the training data $D_1$, estimate the nuisance functions $\widehat \mu (0, X)$, $\widehat \mu(1, X)$ and $\widehat \pi(X)$.
    \item On the estimation data $D_2$, estimate the un-centered influence function values $\widehat \xi (Z; \delta)$ using the models $\widehat \mu$ and $\widehat \pi$ from step 1, where $\widehat \xi (Z; \delta)$ is defined in \eqref{eq:xihat} if the conditional effect of interest is $\tau_{cide}$, and analogously for $\tau_{cie}$ and $\tau_{cice}$ in equations \eqref{eq:ie_eif} and \eqref{eq:ice_eif}.
    \item In the estimation sample $D_2$, regress $\widehat \xi (Z; \delta)$ on the conditioning covariates $V$ to obtain the estimate
    $$
    \widehat{\tau}_{cide} (v; \delta) = \widehat \bbE_n \left\{ \widehat \xi (Z; \delta) \mid V = v \right\}.
    $$    
    \item On the estimation data, estimate $\bbV \{ \tau_{cide}(V; \delta \}$ per equations \eqref{eq:exp_sq_v} and \eqref{eq:sq_exp_est}, plugging in the estimates for $\widehat \mu$, $\widehat \pi$, $\widehat \tau_{cide}$ from above.
\end{enumerate}
\end{algorithm}

This estimator satisfies a similar double robustness condition to the estimator outlined in Section \ref{sec:heterogeneity}, but includes a dependence on $\widehat \tau_{cide}(V; \delta)$. 

\begin{restatable}{theorem}{thmvcideconv-v}\label{thm:vcide_conv_v}
Let $\widehat \psi_n$ denote the estimator from Algorithm \ref{alg:vcide_v}.  Unser Assumptions \ref{asmp:cons}, and \ref{asmp:exch}, Assumption \ref{asm:boundedness} from Theorem \ref{thm:fixed_mod}, and Assumption \ref{asmp:bounded_eif} from Theorem \ref{thm:vcide_conv}, if
$$
\lVert \widehat \pi - \pi \rVert \Big( \lVert \widehat \mu - \mu \lVert + \lVert \widehat \pi - \pi \rVert \Big) + \lVert \widehat \tau_{cide} - \tau_{cide} \rVert^2,
$$
then
$$
\sqrt{n} \Big[ \widehat \psi_n - \bbV \{ \tau_{cide}(V; \delta) \} \Big] \indist N(0, \sigma^2)
$$
where 
\begin{align}
    \sigma^2 = \bbV \bigg[ &\tau_{cide}^2 + 2  \tau_{cide} \left\{  \omega  \varphi +  \phi \big(  \mu_1 -  \mu_0 \big) +  \omega \big(  \mu_1 -  \mu_0 \big) -  \tau_{cide}^2 \right\} \nonumber \\
    &- \bbE \Big\{ \omega \varphi + \phi \big( \mu_1 - \mu_0 \big) + \omega \big( \mu_1 - \mu_0 \big) \Big\} \cdot \Big\{ \omega \varphi + \phi \big( \mu_1 - \mu_0 \big) + \omega \big( \mu_1 - \mu_0 \big) \Big\} \bigg]  \label{eq:pop_vcide_var_v}
\end{align}
where $\tau_{cide} \equiv \tau_{cide} (V; \delta)$, and $\mu_a = \mu(a, X)$, $\omega = \omega(X; \delta), \varphi = \varphi(Z)$, and $\phi = \phi(Z; \delta)$ as defined in equations \eqref{eq:omega}, \eqref{eq:varphi}, and \eqref{eq:phi}. 
\end{restatable}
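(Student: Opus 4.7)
The plan is to decompose the estimation error into two pieces corresponding to the functionals $\psi^{(1)} := \bbE\{\tau_{cide}(V;\delta)^2\}$ and $\psi^{(2)} := \bbE\{\tau_{cide}(V;\delta)\}^2$, and to analyze each via a von Mises expansion in the spirit of Theorem~\ref{thm:vcide_conv}. Let $\widehat\psi_n^{(1)}$ denote the estimator in eq.~\eqref{eq:exp_sq_v} and $\widehat\psi_n^{(2)}$ the estimator in eq.~\eqref{eq:sq_exp_est}, so that $\widehat\psi_n - \bbV\{\tau_{cide}(V;\delta)\} = \big(\widehat\psi_n^{(1)} - \psi^{(1)}\big) - \big(\widehat\psi_n^{(2)} - \psi^{(2)}\big)$.

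First I would handle $\widehat\psi_n^{(2)} - \psi^{(2)}$. Because $\bbE\{\tau_{cide}(V;\delta)\} = \bbE\{\tau_{cide}(X;\delta)\}$ by iterated expectation, and the estimator in \eqref{eq:sq_exp_est} is unchanged from the $V=X$ case, the argument from Theorem~\ref{thm:vcide_conv} gives an asymptotically linear expansion with bias of order $\|\widehat\pi-\pi\|(\|\widehat\mu-\mu\|+\|\widehat\pi-\pi\|)$, via the delta method applied to the sample average of the estimated influence function values of $\bbE\{\tau_{cide}(X;\delta)\}$ from Lemma~\ref{lem:cide_eif}.

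The novel piece is $\widehat\psi_n^{(1)} - \psi^{(1)}$. Using the EIF from the lemma preceding Algorithm~\ref{alg:vcide_v}, I would write $\widehat\psi_n^{(1)} = \bbP_n\widehat\varphi^{(1)}$ with $\widehat\varphi^{(1)}$ the plugged-in uncentered EIF involving $(\widehat\mu,\widehat\pi,\widehat\tau_{cide})$, and decompose
$$\widehat\psi_n^{(1)} - \psi^{(1)} = (\bbP_n-\bbP)\varphi^{(1)} + (\bbP_n-\bbP)(\widehat\varphi^{(1)} - \varphi^{(1)}) + \bbP(\widehat\varphi^{(1)}) - \psi^{(1)}.$$
Sample splitting lets me condition on the training fold and treat $\widehat\eta=(\widehat\mu,\widehat\pi,\widehat\tau_{cide})$ as fixed, so the empirical-process remainder is $o_\bbP(n^{-1/2})$ under Assumptions~\ref{asm:boundedness} and \ref{asmp:bounded_eif}, which give uniform boundedness and $L_2$-consistency of the integrand.

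The main obstacle is controlling the conditional bias $\bbP(\widehat\varphi^{(1)}) - \psi^{(1)}$, which is genuinely new because the EIF depends on $\tau_{cide}$ itself in a quadratic way (through a $-\tau^2$ plug-in and a $\tau\cdot\xi$ cross-term) in addition to the $(\mu,\pi)$ nuisances. I would expand this bias and isolate two sources. The first comes from plugging in $(\widehat\mu,\widehat\pi)$ in the inner factor $\widehat\xi$; by the doubly robust structure already exploited in Lemma~\ref{lem:var_eif} and Theorem~\ref{thm:vcide_conv}, this contributes $O_\bbP\{\|\widehat\pi-\pi\|(\|\widehat\mu-\mu\|+\|\widehat\pi-\pi\|)\}$. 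The second is new: by adding and subtracting $\tau_{cide}(V;\delta)$ and applying iterated expectations over $V$, the quadratic dependence on $\tau_{cide}$ produces a term of the form $\bbE\bigl[(\widehat\tau_{cide}(V;\delta) - \tau_{cide}(V;\delta))^2\bigr]$, contributing $O_\bbP(\|\widehat\tau_{cide}-\tau_{cide}\|^2)$. Cross-products between the two sources are handled by Cauchy--Schwarz and absorbed into these two rates.

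Once both biases are shown to be $o_\bbP(n^{-1/2})$ under the hypothesis $\|\widehat\pi-\pi\|(\|\widehat\mu-\mu\|+\|\widehat\pi-\pi\|) + \|\widehat\tau_{cide}-\tau_{cide}\|^2 = o_\bbP(n^{-1/2})$, what remains is a sample mean of the centered influence functions for $\psi^{(1)}$ minus that for $\psi^{(2)}$. A standard application of the CLT, combined with Slutsky's theorem to handle the multiplication by $\bbE\{\omega\varphi + \phi(\mu_1-\mu_0) + \omega(\mu_1-\mu_0)\}$ in the $\psi^{(2)}$ piece (estimated consistently by $\bbP_n$ of the same quantity), yields $\sqrt{n}[\widehat\psi_n - \bbV\{\tau_{cide}(V;\delta)\}] \indist N(0,\sigma^2)$ with $\sigma^2$ exactly the population variance of the summed integrands in \eqref{eq:pop_vcide_var_v}.
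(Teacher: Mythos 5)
Your proposal is correct and follows essentially the same route the paper takes: the paper gives no separate proof for the $V \subset X$ case, and its implicit argument is exactly your decomposition — the $\bbE\{\tau_{cide}(V;\delta)\}^2$ piece is unchanged from Theorem \ref{thm:vcide_conv}, while the $\bbE\{\tau_{cide}(V;\delta)^2\}$ piece is handled by the same CLT/empirical-process/bias split under sample splitting, with the conditional bias separating (after iterated expectations) into the doubly robust $\lVert \widehat\pi - \pi \rVert (\lVert \widehat\mu - \mu\rVert + \lVert \widehat\pi - \pi\rVert)$ term and the new $\lVert \widehat\tau_{cide} - \tau_{cide}\rVert^2$ term, matching the theorem's rate condition. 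One caveat you share with the paper's own statement: a careful delta-method/Slutsky linearization of the squared-mean piece produces the influence function $-2\,\bbE\{\omega\varphi + \phi(\mu_1-\mu_0)+\omega(\mu_1-\mu_0)\}$ times the centered CIDE influence function (consistent with the $4\bbV(\zeta_2)$ appearing in Proposition \ref{prop:linear} and Proposition \ref{prop:coverage}), so your concluding claim that the limit variance is ``exactly'' \eqref{eq:pop_vcide_var_v} inherits the same factor discrepancy present in the paper's Theorems \ref{thm:vcide_conv} and \ref{thm:vcide_conv_v}.
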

The theorem shows that the estimator for the V-CIDE satisfies a version of double robustness under relatively weak conditions.  The result shows that our estimator attains $n^{-1/2}$ convergence to the V-CIDE under model-agnostic $n^{-1/4}$ convergence rates for the nuisance function estimators and the I-DR-Learner. This is a different result from Theorem \ref{thm:i-dr-learner}, since it is required that the CIDE is estimated at $n^{-1/4}$ rates, but it is no longer required that $\widehat \mu$ is estimated at $n^{-1/4}$ rates.  As discussed in the body of the paper, $n^{-1/4}$ rates are achievable with nonparametric estimators under suitable smoothness or sparsity. 

\medskip

Like the estimator from Algorithm \ref{alg:vcide}, the estimator in Algorithm \ref{alg:vcide_v} converges to a degenerate distribution when the V-CIDE equals zero.  As discussed at the end of Section \ref{sec:heterogeneity}, we can construct a valid test for any treatment effect heterogeneity by overestimating the variance of the estimator $\widehat \psi_n$.  When $V \subset X$, we can constrcut the following asymptotically valid $1 - \alpha$ test
\begin{equation}
    \begin{cases}
    \text{Reject } H_0: \bbV \{ \tau_{cide} (V; \delta) \} = 0 &\text{ if } \widehat \psi_n - \Phi^{-1} ( 1 - \alpha) \sqrt{ \frac{\widehat \sigma_1^2 + \widehat \sigma_2^2}{n} } > 0, \\
    \text{ Fail to reject } H_0: \bbV \{ \tau_{cide} (V; \delta) \} = 0 &\text{otherwise.}
    \end{cases}
\end{equation}
where
\begin{align*}
    \widehat \sigma_1^2 &=  \widehat{\bbV}_n \bigg[ \widehat \tau_{cide}^2 + 2  \widehat \tau_{cide} \left\{  \widehat \omega \widehat \varphi + \widehat \phi \big( \widehat \mu_1 - \widehat \mu_0 \big) + \widehat \omega \big(  \widehat \mu_1 - \widehat \mu_0 \big) - \widehat \tau_{cide}^2 \right\} \bigg] \text{, and }  \\
    \widehat \sigma_2^2 &= \bbV_n \bigg[ \bbP_n \Big\{ \widehat \omega \widehat \varphi + \widehat \phi \big( \widehat \mu_1 - \widehat \mu_0 \big) + \widehat \omega \big( \widehat \mu_1 - \widehat \mu_0 \big) \Big\} \cdot \Big\{ \widehat \omega \widehat \varphi + \widehat \phi \big( \widehat \mu_1 - \widehat \mu_0 \big) + \widehat \omega \big( \widehat \mu_1 - \widehat \mu_0 \big) \Big\} \bigg].
\end{align*}

\newpage
\section{Code for ICU Data Analysis} \label{sec:r-code}

The code below was used for the ICU data analysis in Section \ref{sec:data_analysis}. For brevity, we omitted the package loading and figure generation, so this code is not perfectly reproducible.  However, the code does demonstrate how to estimate second stage regressions for the I-DR-Learner and how to calculate the V-CIDE.  The first section of the code estimates the propensity scores and the CIE to create Figures \ref{fig:prop_count} and \ref{fig:data_analysis}, while the second section of the code estimates the CIDE and the V-CIDE to create Figures \ref{fig:cide} and \ref{fig:vcide}.

\medskip

To estimate the propensity scores and the CIE, we adapted the \texttt{npcausal::ipsi()} function to output the propensity scores and the estimated influence function values \citep{kennedy2021npcausal}.

\begin{lstlisting}[language=R]
###############################################################################
### Author: Alec McClean
### Purpose: Data analysis with ICU data
###############################################################################

### Package loading (omitted)

################################
### Calculate CIE
################################

source("ipsi_updated.R") # An amended version of npcausal::ipsi() that outputs
# estimated propensity scores and influence function values.  Also omitted

# -------------------------------
### Load and clean data

icu <- read.csv("../Data/icuData.csv") 

# Use row number as ID
icu %<>% rename(id = X)

# Outcome variable: dead28
icu %<>% select(-dead7, -dead90) 

# Change variables to factors
icu %<>% mutate_at(vars(site, male, sepsis_dx:winter, v_cc1:v_cc_r5), as.factor)

# Deltas
DELTAS <- seq(0.2, 0.9, 0.1)
DELTAS <- c(DELTAS, 1, rev(1 / DELTAS))


# ----------------------------------------------------------
### Calculate influence function values for pseudo outcomes

results <- ipsi_update(y = icu$dead28,
                       a = icu$icu_bed,
                       id = icu$id,
                       x.trt = icu %>% select(-id, -dead28, -icu_bed),
                       x.out = icu %>% select(-id, -dead28, -icu_bed),
                       time = rep(1, nrow(icu)),
                       fit = "rf",
                       delta.seq = DELTAS,
                       nsplits = 2,
                       return_ifvals = TRUE)

ifvals <- as.data.frame(results$ifvals)
colnames(ifvals) <- DELTAS
ifvals$id <- icu$id 
ifvals %<>% left_join(icu %>% select(id, age, icnarc_score, news_score, sofa_score))
ifvals$split <- results$splits

ifvals %<>% 
  gather(delta, ifval, `0.2`:`5`) %>% 
  mutate(delta = round(as.numeric(delta), 2))

point_estimates <- results$res %>% 
  select(delta = increment, pt = est) %>%
  mutate(delta = round(delta, 2))

# Calculate un-centered influence function values as pseudo outcomes 
ifvals %<>% left_join(point_estimates) %>%
  mutate(pseudo = pt + ifval)

# -------------------------------------
# Second stage regressions for CIE

for (DELTA in unique(ifvals$delta)) {
  
  dat <- ifvals %>% filter(delta == DELTA)
  
  mod <- mgcv::gam(pseudo ~ s(icnarc_score), data = dat)
  dat$pred <- predict(mod)
  dat$upr <- dat$pred + 1.96 * predict(mod, se.fit = TRUE)$se.fit
  dat$lwr <- dat$pred - 1.96 * predict(mod, se.fit = TRUE)$se.fit
  ifvals %<>% filter(delta != DELTA) %>% bind_rows(dat)
    
}

# -------------------------------
# Generate Figures 2 and 3

# Code to generate Figures 2 and 3 omitted.
# Figures were generated with ggplot2

####################################
### Estimate CIDE and VCIDE 
####################################

# ----------------------------------------------------------
### Calculate influence function values for pseudo outcomes

FOLDS <- 2
icu$fold <- sample(1:FOLDS, size = nrow(icu), replace = T)
output <- data.frame()

for (FOLD in 1:FOLDS) {
  
  test <- icu %>% filter(fold == FOLD)
  train <- icu %>% filter(fold != FOLD)
  
  pimod <- ranger(icu_bed ~ ., dat = train %>% select(-id, -dead28, -fold))
  mumod <- ranger(dead28 ~ ., dat = train %>% select(-id, -fold))
  
  test$pihat <- 
    predict(pimod,
            data = test %>% select(-id, -dead28, -fold))$predictions
  
  test$mu1hat <- 
    predict(mumod,
            data = test %>% select(-id, -fold) %>% mutate(icu_bed = 1))$predictions
  
  test$mu0hat <- 
    predict(mumod,
            data = test %>% select(-id, -fold) %>% mutate(icu_bed = 0))$predictions
  
  output %<>% bind_rows(test)
}

ifvals <- data.frame()
for (DELTA in DELTAS) {
  
  temp <- output %>% mutate(delta = DELTA)
  
  temp %<>% mutate(
    omega = pihat * (1 - pihat) / ((DELTA * pihat + 1 - pihat)^2),
    tau = mu1hat - mu0hat,
    eif_omega = 
      (icu_bed - pihat) * 
      (1 / (DELTA * pihat + 1 - pihat)^3 - (2 * DELTA * pihat) /
        (DELTA * pihat + 1 - pihat)^2),
    eif_tau = (icu_bed / pihat) * 
      (dead28 - mu1hat) + ((1 - icu_bed) / (1 - pihat)) * (dead28 - mu0hat),
    plugin = omega * tau,
    eif_terms = omega * eif_tau + eif_omega * tau,
    ifval_cide = eif_terms + plugin,
  )
  
  temp$ifval_vcide_t2 <- 
    mean(temp$eif_terms + temp$plugin) * (temp$eif_terms + temp$plugin)
  
  ifvals %<>% bind_rows(temp)
}


# ------------------------------------------
### Second stage regressions for CIDE

temp <- data.frame()
for (DELTA in unique(ifvals$delta)) {
  
  for (FOLD in 1:max(ifvals$fold)) {

    dat <- ifvals %>% filter(delta == DELTA) %>% filter(fold == FOLD)
    
    mod <- mgcv::gam(ifval_cide ~ s(icnarc_score), data = dat)
    dat$pred <- predict(mod)
    dat$upr <- dat$pred + 1.96 * predict(mod, se.fit = TRUE)$se.fit
    dat$lwr <- dat$pred - 1.96 * predict(mod, se.fit = TRUE)$se.fit
    
    temp %<>% bind_rows(dat)
  }
  
}

ifvals <- temp
rm(temp)
gc()

# --------------------------------------------
### Calculate V-CIDE across delta 

### Calculate IF values 
ifvals %<>% mutate(
  ifval_vcide_t1 = (pred * pred) + 2 * pred * (eif_terms + plugin - (pred * pred)),
  ifval_vcide = ifval_vcide_t1 - ifval_vcide_t2
)

vcide <- ifvals %>% group_by(delta) %>%
  summarize(pt_est = mean(ifval_vcide),
            sd_est = sd(ifval_vcide) / sqrt(n()),
            conservative_est = sqrt((var(ifval_vcide_t1) + var(ifval_vcide_t2)) / n())) %>%
  ungroup() %>%
  mutate(lower = pt_est - 1.96 * pmax(sd_est, conservative_est),
         upper = pt_est + 1.96 * pmax(sd_est, conservative_est))


# --------------------------------------
# Generate Figures 4 and 5

# Code to generate Figures 4 and 5 omitted
# Figures were generated with ggplot2
\end{lstlisting}

\section{Simulations for the Projection-Learner and I-DR-Learner}

In this section we study the performance of the Projection-Learner and I-DR-Learner for estimating the conditional incremental contrast effect (CICE) with $\delta_u = 5$ and $\delta_l = 0.2$.  As a reminder, the CICE is defined as
$$
\tau_{cice} (v; \delta_u = 5, \delta_l = 0.2) \equiv \bbE \left( Y^{Q_5} - Y^{Q_{0.2}} \mid V = v \right),
$$
which corresponds to the difference between the counterfactual mean outcomes when the odds of treatment are multiplied by $5$ minus the counterfactual mean outcome when the odds of treatment are divided by $5$.  For all the analyses, we simulate $1,000$ times a dataset of size $n = 1,000$.  In each dataset, we have $\{ (X_i, A_i, Y_i) \}$ from $i = 1$ to $n = 1000$ where $X \in \bbR, A \in \{0, 1\}$ and $Y \in \bbR$.  In this case, we condition on the only covariate $X$, so $V = X$.  For each dataset, we specify a quadratic CICE,
$$
\tau_{cice} (\delta_u = 5, \delta_l = 0.2; X) = 1 + 0.5 X - 0.2X^2,
$$
and then the CATE is defined implicitly as
$$
\tau_{cate} (X) = \frac{\tau_{cice} (X; \delta_u = 5, \delta_l = 0.2)}{q \{ \pi(X); \delta_u = 5 \} - q \{ \pi(X); \delta_l = 0.2 \}} \\ 
$$
which follows from the identification result in Proposition \ref{prop:id_t1}.  Each simulated dataset $\{ (X_i, A_i, Y_i) \}_{i=1}^{n = 1000}$ is then constructed in the following manner:
\begin{align}
    X &\sim Unif(-4, 4) \\
    \pi(X) &= \expit \left( \frac{X}{2} \right) \\
    \mu (0, X) &= \one(X < -3) \cdot 2 + \one(X > -2) \cdot 2.55 + \one(X > 0) \cdot -2 \\
    &\hspace{0.5in} + \one(X > 2) \cdot 4 + \one(X > 3) \cdot -1 \\
    \mu (1, X) &= \mu (0, X) + \tau_{cate} (X) \\
    A &\sim Bernoulli(\pi) \\
    Y &\sim A \cdot \mu (1, X) + (1 - A) \cdot \mu (0, X) + N(0, 1)
\end{align}
The data generating process is also illustrated in Figure \ref{fig:data}. The covariate data $X$ is one dimensional and uniform over $[-4, 4]$.  The propensity score, shown in the top panel of Figure \ref{fig:data}, follows a logistic model, which remains within reasonable bounds on the support of $X$, since $\pi(-4) \approx 0.12$ and $\pi(4) \approx 0.88$.  The outcome regressions are complicated discontinuous functions, and are shown in the middle panel of Figure \ref{fig:data}.  The treatment $A$ and outcome $Y$ are defined implicitly from the propensity score and regression functions.  The second panel also shows the CATE, which is a smooth function.  This is because the CICE and the propensity scores are smooth functions.  The CICE is shown in the bottom panel of Figure \ref{fig:data}.

\medskip

We simulated estimates for the propensity scores and regression functions by adding noise, parameterized by $\alpha$, to the true nuisance functions:
\begin{align}
    \widehat \pi (X) &\sim \expit \left[  \logit \left\{ \pi(X) + N(n^{-\alpha_\pi}, n^{-2 \alpha_\pi}  ) \right\} \right]  \\
    \widehat{\mu} (a, X) &\sim \mu (a, X) + N \left[ \{ \max_x \mu(a, x) - \min_x \mu (a, x) \} \cdot n^{-\alpha_\mu}, \{ \max_x \mu(a, x) - \min_x \mu (a, x) \}^2 
    \cdot n^{-2 \alpha_\mu} \right] 
\end{align}
The $\alpha$ parameter allows us to control how well the nuisance functions are ``estimated''.  For example, when $\alpha = 0.1$, this corresponds to estimating a nuisance function with error converging at $n^{-1/10}$.  We scale the error for the regression functions $\mu$ by the range of the regression function values; this is purely a computing trick so that the error for neither nuisance function dominates the other, and this does not affect the convergence rates of the estimators.

\begin{figure}[H]
    \centering
    \includegraphics[height = 4in]{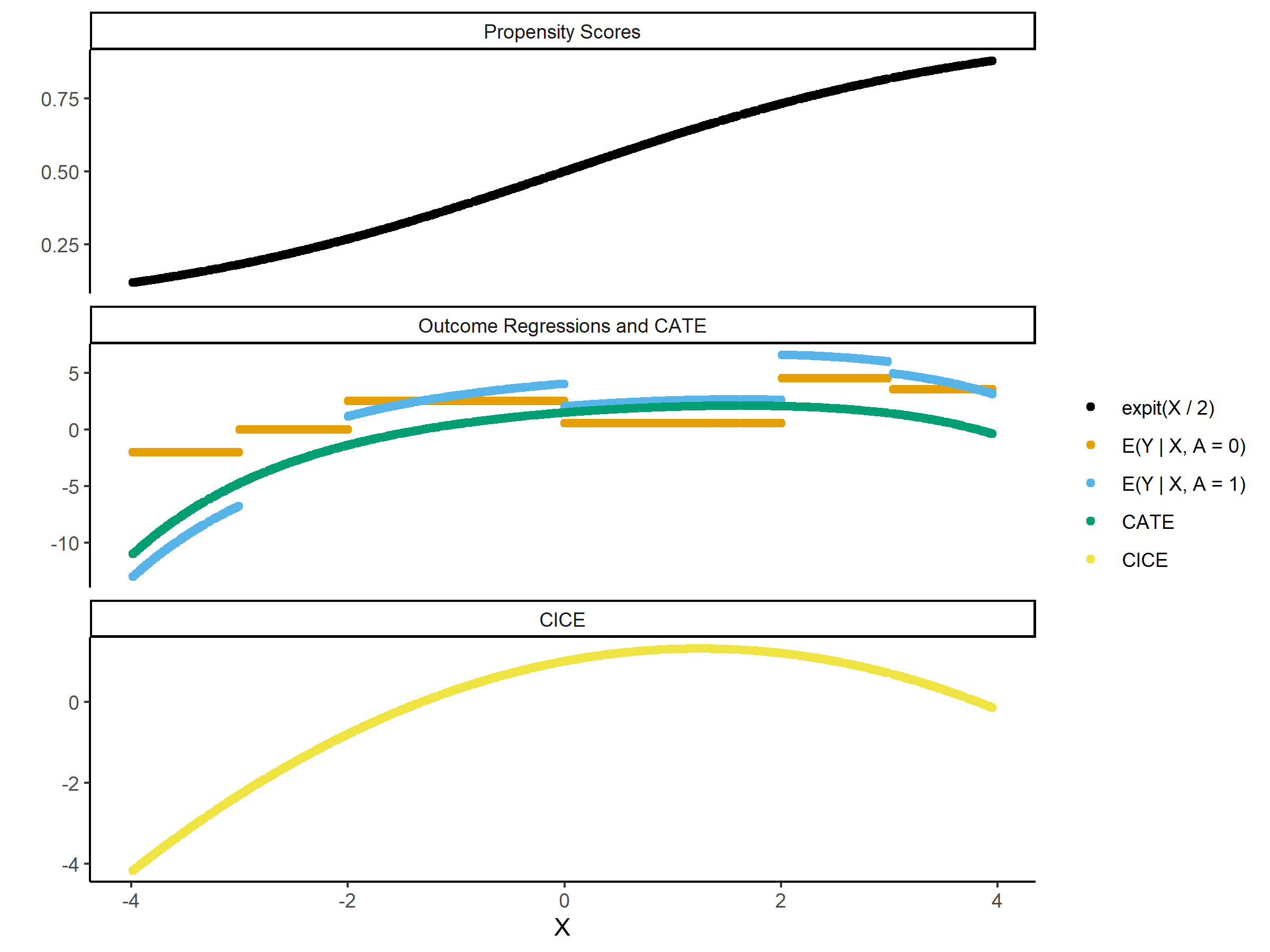}
    \caption{Data Generating Process}
    \label{fig:data}
\end{figure}

First, we compare the I-DR-Learner to the oracle estimator (``Oracle I-DR-Learner'') and a baseline learner (``Baseline CICE'') in terms of integrated mean squared error (MSE).  The oracle estimator constructs the true influence function value from $\mu (a, X)$ and $\pi(X)$ and regresses them against $X$.  Both the oracle estimator and the I-DR-Learner use the \texttt{smooth.spline} function in \texttt{R} for the second stage regression.  The baseline estimator is a plug-in estimator which calculates
$$
\widehat \tau_{cice} (X; \delta_u = 5, \delta_l = 0.2) = \{ \widehat \mu (1, X) - \widehat \mu (0, X)\} \cdot \Big[ q \{ \widehat \pi(X); \delta_u = 5 \} - q \{ \widehat \pi(X); \delta_l = 0.2 \} \Big].
$$
This is motivated by causal identification, such as the result in Proposition \ref{prop:id_t1}, and does not make use of the efficient influence function for the relevant average effect. The baseline estimator for the CATE was previously examined in the literature, and has been referred to as the `T-Learner' \citep{kunzel2019metalearners}.

\medskip

The results of these simulations are summarized in Figure \ref{fig:mse}.  Each different panel corresponds to a different convergence rate $\alpha_\mu$ for estimating $\widehat \mu$, and $\alpha_\mu$ increases from left to right.  The x-axis shows the convergence rate $\alpha_\pi$ for estimating $\widehat \pi$, and the y-axis shows the integrated MSE for each estimator.  Finally, each estimator is denoted by a different color, and the points and whiskers show the sample mean and 95\% confidence interval for the MSE over $1,000$ simulations.  The oracle estimator performs the best, which we would expect since it has access to the true nuisance functions.  The I-DR-Learner performs the next best, and its error approaches that of the oracle estimator as $\min(2 \alpha_\pi, \alpha_\pi + \alpha_\mu)$ increases.  Asymptotically, Theorem \ref{thm:i-dr-learner} dictates that the I-DR-Learner will attain the oracle convergence rate when $\min(2 \alpha_\pi, \alpha_\pi + \alpha_\mu) \geq 0.5$.  Here, we see that even when this is the case, there is still some discrepancy between the oracle estimator and the I-DR-Learner. This is because these errors were evaluated on finite samples of size $1,000$.  If we considered larger and larger sample sizes, the discrepancy between the DR learners and the oracle estimators would shrink and approach zero asymptotically.

\begin{figure}
    \centering
    \includegraphics[width = \linewidth]{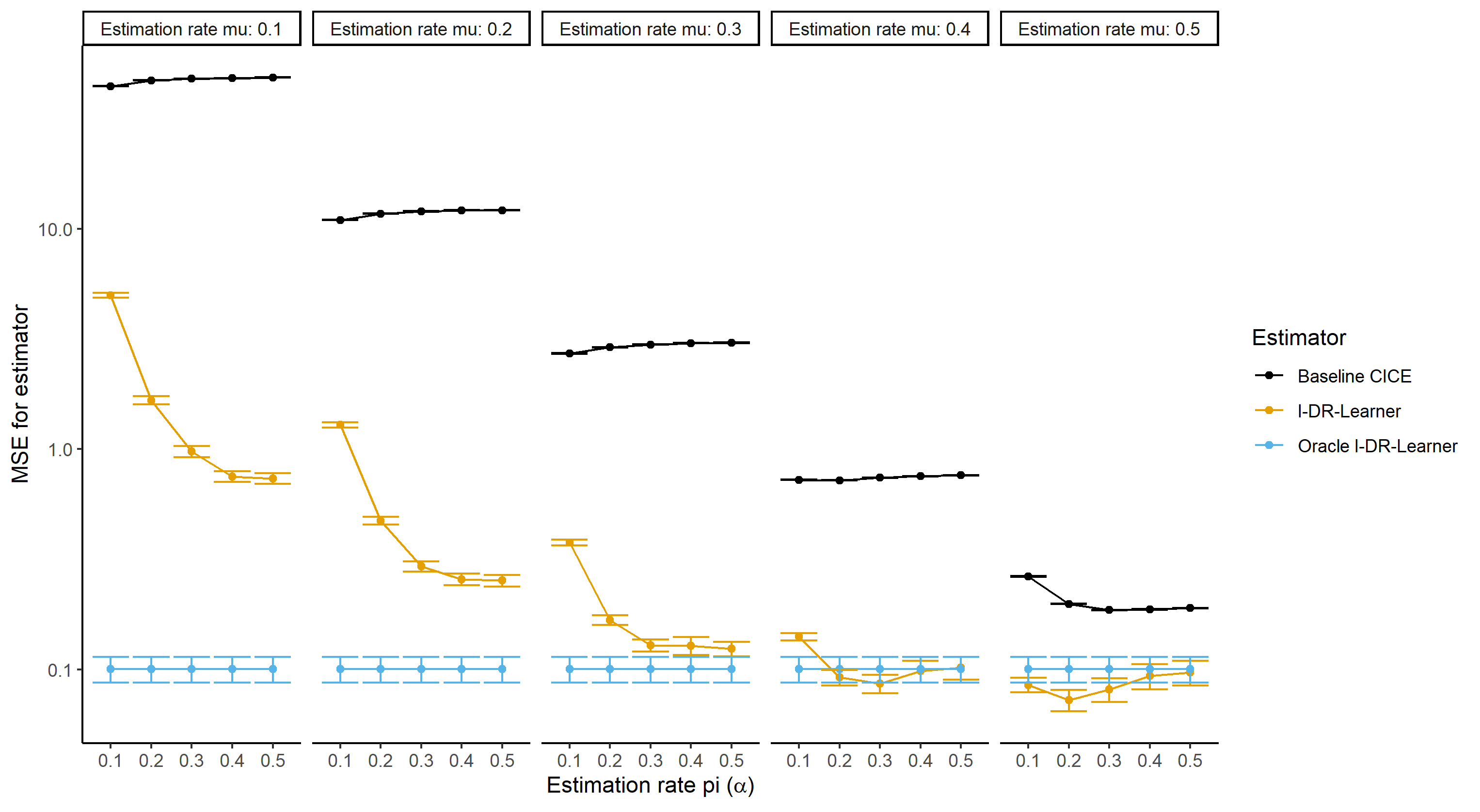}
    \caption{Comparing CICE estimators}
    \label{fig:mse}
\end{figure}

The baseline learners fare the worst.  The furthest right panel, where $\alpha_\mu = 0.5$, shows what would happen if we could estimate correct parametric models for $\mu_0$ and $\mu_1$.  In that case, we expect the baseline learner to perform as well as the oracle estimator and I-DR-Learner asymptotically, so the gap between the baseline learner and the other learners might be slightly disconcerting.  However, this is again a story of asymptotics  - if we increased $n$, the gap between the three estimators in the far right panel would decrease towards zero. For the baseline learner, we expect there to be a trend in $\pi$ as well as $\mu$, since the error of the CICE baseline learner is additive in the errors of the nuisance functions
$$
\lVert \widehat \tau_{base} - \tau_{cice} \rVert \lesssim \lVert \widehat \mu - \mu \rVert + \lVert \widehat \pi - \pi \rVert^2.
$$
However, whenever one nuisance function has much higher error, this sum effectively becomes a maximum.  In the left-most panels, the error $\lVert \widehat \mu - \mu \rVert $ is orders of magnitude larger than the error $\lVert \widehat \pi - \pi \rVert$, so any effect from changing $\alpha_\pi$ is swamped by the error from $\mu$. In the far right panel, once $\alpha_\mu = 0.5$, the error in  $\pi$ matters, and the MSE of the estimator decreases as $\alpha_\pi$ increases.

\subsection{Coverage of the Projection-Learner}

In this subsection, we outline results for the Projection-Learner.  Specifically, we show that the Projection-Learner achieves approximately correct coverage for the true coefficients in the model.  The true model is
$$
\tau_{cice} (X; \delta_u = 5, \delta_l = 0.2) = 1 + 0.5 X - 0.2X^2,
$$
and the working model is
$$
g(\beta; X) = \beta^\ast + \beta_1 X + \beta_2 X^2
$$
Since the working model is well-specified, the Projection-Learner estimates the true coefficients.  Figure \ref{fig:coverage} shows the coverage of 95\% confidence intervals constructed for each coefficient using the sandwich variance as in Corollary \ref{cor:lim_dist}.  When the nuisance function estimators have large error, such that $\alpha_\pi + \alpha_\mu < 0.5$ or $2 \alpha_\pi < 0.5$ then the confidence intervals have poor coverage.  As the nuisance function estimators improve, the coverage of the Projection-Learner becomes very close to 95\%.  Generally, the coverage is still slightly less than 95\%, and this is because the datasets are only of size $n = 1000$, so there is still some discrepancy between the true distribution of the estimator and a Gaussian distribution.

\begin{figure}
    \centering
    \includegraphics[width = \linewidth]{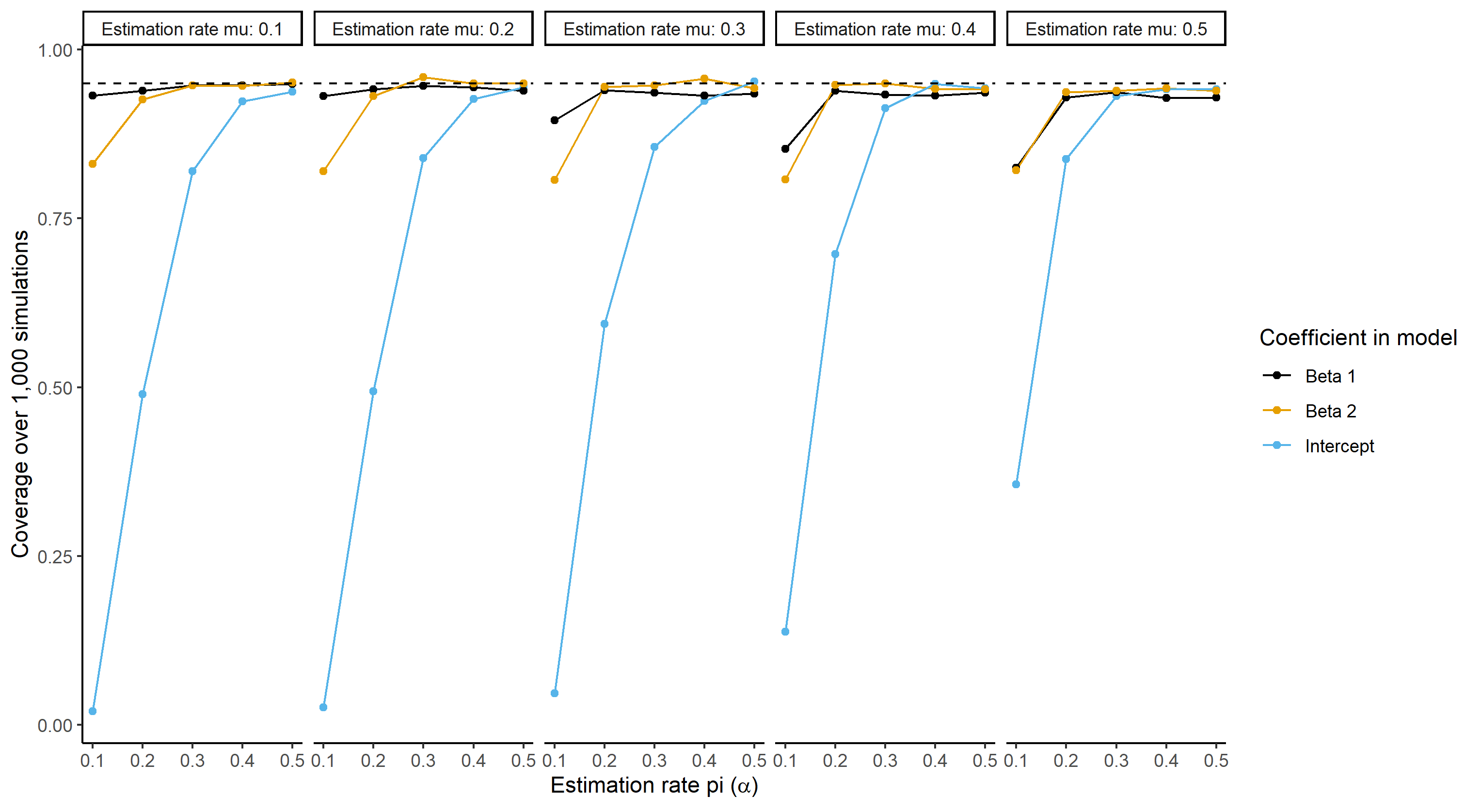}
    \caption{Coverage of Projection-Learner confidence intervals for coefficient estimates}
    \label{fig:coverage}
\end{figure}




\newpage
\section{Proofs for results in Section \ref{sec:setup}}

\propderiv*

\begin{proof}
We have
\begin{align*}
    \tau_{cide} (v; \delta) &= \frac{\partial}{\partial t} \bbE \left\{  Y^{Q_t} \mid V = v \right\} \bigg|_{t = \delta} \\
    &= \frac{\partial}{\partial t} \bbE \left[ \left\{ \frac{t \pi(X)}{t \pi(X) + 1 - \pi(X)} \right\} \mu(1, X) + \left\{ \frac{1 - \pi(X)}{t \pi(X) + 1 - \pi(X)} \right\} \mu(0, X) \mid V = v \right] \bigg|_{t = \delta} \\
    &= \bbE \left( \left[ \frac{\pi(X) \{ 1 - \pi(X) \}}{\{ \delta \pi(X) + 1 - \pi(X) \}^2} \right] \cdot \{ \mu(1, X) - \mu(0, X) \} \mid V = v \right)
\end{align*}
where the first line follows by definition, the second by Assumptions \ref{asmp:cons} and \ref{asmp:exch}, and the third and final line by exchanging expectation and derivative, taking the derivative with respect to $t$, rearranging, and setting $t = \delta$.
\end{proof}

\section{Proofs for results in Section \ref{sec:conditional_effects}}

\lemcideeif*

\begin{proof}
We prove the result by showing that $\bbE \{ \tau_{cide} (v; \delta) \}$ satisfies a Von Mises expansion where $\xi (Z; \delta)$ is the influence function and there is a second order remainder term.  At the end of the proof, we will relate this to smooth parametric submodels and submodel scores to prove that $\xi(Z; \delta)$ is the efficient influence function.

\medskip

Let $\xi(\bbP) \equiv \xi(Z; \delta)$ and $\psi(\bbP) = \bbE \{ \tau_{cide} (v; \delta) \}$.  Then, by a Von Mises expansion,
\begin{equation} \label{eq:vm_deriv}
    \psi(\overline \bbP) = \psi (\bbP) + \int_{\mathcal{Z}} \xi(\overline \bbP) d(\overline \bbP - \bbP) + R_2 (\overline \bbP, \bbP)
\end{equation}
where $\bbP$ and $\overline \bbP$ are two different distributions at which the functional $\psi$ is evaluated.  Rearranging, we can see that
\begin{equation}
    R_2(\overline \bbP, \bbP) = \psi(\overline \bbP) - \psi (\bbP) -  \int_{\mathcal{Z}} \xi(\overline \bbP) d(\overline \bbP - \bbP) = \int_{\mathcal{Z}} \xi(\overline \bbP) - \xi (\bbP) d \bbP \equiv \bbE_{\bbP} \{ \xi(\overline \bbP) - \xi (\bbP) \}
\end{equation}
where $\bbE_{\bbP}$ denotes expectation under the distribution $\bbP$.  By the definition of $\xi$,
\begin{align*}
    R_2(\overline \bbP, \bbP) &= \bbE_\bbP \left( \left[ \frac{\overline \pi(X) \{ 1 - \overline \pi(X) \}}{ \{ \delta \overline\pi(X) + 1 - \overline\pi(X) \}^2} \right] \cdot \left[ \frac{A}{\overline\pi(X)} \Big\{ Y - \overline \mu(1, X) \Big\} - \frac{1-A}{1-\overline \pi(X)} \Big\{ Y - \overline \mu(0, X) \Big\} \right] \right) \\
    &\hspace{0.2in} + \bbE_\bbP \left( \left[ \frac{1}{\{ \delta \overline \pi(X) + 1 - \overline \pi(X) \}^2} - \frac{2\delta \overline \pi (X)}{\{ \delta \overline \pi(X) + 1 - \overline \pi(X) \}^3} \right] \cdot \Big\{ A - \overline \pi(X) \Big\} \cdot \{ \overline \mu(1, X) - \overline \mu(0, X) \} \right) \\
    &\hspace{0.2in} + \bbE_\bbP \left( \left[ \frac{\overline \pi(X) \{ 1 - \overline \pi(X) \}}{ \{ \delta \overline \pi(X) + 1 - \overline \pi(X) \}^2} \right] \cdot \{ \overline \mu(1, X) - \overline \mu(0, X) \} \right) \\
    &- \bbE_\bbP \left( \left[ \frac{\pi(X) \{ 1 - \pi(X) \}}{ \{ \delta \pi(X) + 1 - \pi(X) \}^2} \right] \cdot \left[ \frac{A}{\pi(X)} \Big\{ Y - \mu(1, X) \Big\} - \frac{1-A}{1-\pi(X)} \Big\{ Y - \mu(0, X) \Big\} \right] \right) \\
    &\hspace{0.2in} - \bbE_\bbP \left( \left[ \frac{1}{\{ \delta \pi(X) + 1 - \pi(X) \}^2} - \frac{2\delta\pi (X)}{\{ \delta \pi(X) + 1 - \pi(X) \}^3} \right] \cdot \Big\{ A - \pi(X) \Big\} \cdot \{ \mu(1, X) - \mu(0, X) \} \right) \\
    &\hspace{0.2in} - \bbE_\bbP \left( \left[ \frac{\pi(X) \{ 1 - \pi(X) \}}{ \{ \delta \pi(X) + 1 -  \pi(X) \}^2} \right] \cdot \{ \mu(1, X) - \mu(0, X) \} \right) 
\end{align*}
By iterated expectations and rearranging,
\begin{align}
    R_2 (\overline \bbP, \bbP) &= \bbE_\bbP \left ( \overline \omega(X; \delta) \left[ \left\{ \frac{\pi(X) - \overline \pi(X)}{\overline \pi(X)} \right\} \{ \mu(1, X) - \overline \mu(1, X) \} + \left\{ \frac{\pi(X) - \overline \pi(X)}{1 - \overline \pi(X)}  \right\} \{\mu(0, X) - \overline \mu(0, X) \} \right] \right) \nonumber \\
    &\hspace{0.1in} + \bbE_{\bbP} \left( \{ \overline \mu(1, X) - \overline \mu(0, X) \} \Big[ \bbE (\overline \phi(Z; \delta) \mid X) + \overline \omega(X; \delta) - \omega(X; \delta) \Big]  \right) \nonumber \\
    &\hspace{0.1in} + \bbE_{\bbP} \left( \left\{ \overline \omega(X; \delta) - \omega(X; \delta) \right\} \Big[ \mu(1, X) - \overline \mu(1, X) - \{ \mu(0, X) - \overline \mu(0, X) \} \Big]  \right), \label{eq:r2_deriv}
\end{align}
where
\begin{align*}
    \omega(X; \delta) &= \frac{\pi(X) \{ 1 - \pi(X)\}}{\{ \delta \pi(X) + 1 - \pi(X) \}^2}, \text{ and} \\
    \phi(Z; \delta) &= \left[ \frac{1}{\{ \delta \pi(X) + 1 - \pi(X) \}^2 } - \frac{2 \delta \pi(X) }{\{ \delta \pi(X) + 1 - \pi(X) \}^3} \right] \cdot \left\{ A - \pi(X) \right\}.
\end{align*}
It can be shown that $R_2(\overline \bbP, \bbP)$ is second order, since
\begin{align*}
R_2(\overline \bbP, \bbP) = 
\bbE_\bbP &\left[ g_0(X) \Big\{ \pi(X) - \widehat \pi(X) \Big\} \left\{ \mu(0, X) - \widehat \mu(0, X) \right\} + g_1 (X) \Big\{ \pi(X) - \widehat \pi(X) \Big\} \Big\{ \mu(1, X) + \widehat \mu(1, X) \Big\}
\right. \\
& \left. + h(X) \left\{ \pi(X) - \widehat \pi(X) \right\}^2 \right]
\end{align*}
We show this in the postscript to this proof.  Here, we provide some intuition.   It is clear that the first line of \eqref{eq:r2_deriv} is already a second order.  The second line of \eqref{eq:r2_deriv} is second order because $\phi$ is the un-centered efficient influence function of $\omega$, and so the second multiplicand on the second line is the error term for estimating $\omega$, which we would expect to be second order.  The third line of \eqref{eq:r2_deriv} is more intuitively second order because it is the product of the errors of two plug-ins, and that product can expressed as a product of errors.

\medskip

Now, we relate $\xi(Z; \delta)$ back to scores of smooth parametric submodels.  Recall from semiparametric efficiency theory that the nonparametric efficiency bound for a functional is given by the supremum of Cramer-Rao lower bounds for that functional across smooth parametric submodels \citep{bickel1993efficient, van2002semiparametric}.  The efficient influence function is the unique mean-zero function $\xi$ that is a valid submodel score satisfying pathwise differentiability; i.e.,
\begin{equation} \label{eq:pathdiff}
    \frac{d}{d \epsilon} \psi (\bbP_{\epsilon}) \bigg|_{\epsilon = 0} = \int_{\mathcal{Z}} \xi(\bbP) \left( \frac{d}{d\epsilon} \log d\bbP_{\epsilon} \right) \bigg|_{\epsilon = 0} d\bbP
\end{equation}
for $\bbP_\epsilon$ any smooth parametric submodel.  To see that $\xi (Z; \delta)$ is the efficient influence function for $\bbE \{ \tau_{cide} (v; \delta) \}$, observe that the von Mises expansion in \eqref{eq:vm_deriv} implies
\begin{align*}
    \frac{d}{d \epsilon} \psi(\bbP_{\epsilon}) &= \frac{d}{d\epsilon} \left( \psi(\bbP) - \int_{\mathcal{Z}} \xi(\bbP) d(\bbP - \bbP_\epsilon) -  R_2 (\bbP, \bbP_\epsilon) \right) \\
    &= \frac{d}{d\epsilon} \int_{\mathcal{Z}}  \xi(\bbP) d(\bbP_\epsilon - \bbP) -  \frac{d}{d\epsilon}  R_2 (\bbP, \bbP_\epsilon)  \\
    &= \int_{\mathcal{Z}}  \xi(\bbP)  \frac{d}{d\epsilon} d\bbP_\epsilon -  \frac{d}{d\epsilon}  R_2 (\bbP, \bbP_\epsilon) \\
    &= \int_{\mathcal{Z}}  \xi(\bbP)  \left(\frac{d}{d\epsilon} \log d\bbP_\epsilon \right) d\bbP_\epsilon -  \frac{d}{d\epsilon}  R_2 (\bbP, \bbP_\epsilon)
\end{align*}
with $R_2$ defined in \eqref{eq:r2_deriv}, and where the second line follows because $\psi(\bbP)$ does not depend on $\epsilon$, the third because $\int \xi(\bbP) d\bbP = 0$, and the fourth and final line since $\frac{d}{d\epsilon} \log d\bbP_\epsilon  = \frac{1}{d\bbP_\epsilon} \frac{d}{d\epsilon} d\bbP_\epsilon$.  Evaluating this expression at $\epsilon = 0$, we have
$$
\int_{\mathcal{Z}}  \xi(\bbP)  \left(\frac{d}{d\epsilon} \log d\bbP_\epsilon \right) d\bbP_\epsilon -  \frac{d}{d\epsilon}  R_2 (\bbP, \bbP_\epsilon) \bigg|_{\epsilon = 0} = \int_{\mathcal{Z}}  \xi(\bbP)  \left(\frac{d}{d\epsilon} \log d\bbP_\epsilon \right) \bigg|_{\epsilon = 0} d\bbP -  0 
$$
since
$$
\frac{d}{d\epsilon} R_2 (\bbP, \bbP_\epsilon) \Big|_{\epsilon = 0} = 0
$$
which shows that $\xi$ satisfies the property in \eqref{eq:pathdiff}.  The last equation involving $R_2$ follows because $R_2$ consists of only second-order products of errors between $\bbP_\epsilon$ and $\bbP$.  Therefore, the derivative is composed of a sum of terms, each of which is a product of either a derivative term that may not equal zero and an error term involving the differences of components of $\bbP$ and $\bbP_\epsilon$, which will be zero when $\epsilon = 0$ since $\bbP = \bbP_\epsilon$. 

\medskip

Since the model is nonparametric, the tangent space is the entire Hilbert space of mean-zero finite-variance functions, and so there is only one influence function satisfying \eqref{eq:pathdiff} and it is the efficient one \citep{tsiatis2006semiparametric}.   Therefore, $\xi(Z; \delta)$ is the efficient influence function for $\bbE \{ \tau_{cide}(V; \delta) \}$.
\end{proof}

Below, we show the algebra for why $R_2 (\overline \bbP, \bbP)$ is second order as stated above. Starting with the third line in \eqref{eq:r2_deriv}, and omitting arguments, we have
\begin{align*}
    \overline \omega - \omega  &= \frac{\overline \pi ( 1- \overline \pi)}{(\delta \overline \pi + 1 - \overline \pi)^2 } - \frac{\pi(1 - \pi)}{(\delta \pi + 1 - \pi)^2} \\
    &= \frac{\overline \pi (1 - \overline \pi) (\delta \pi + 1 - \pi)^2 - \pi(1 - \pi)(\delta \overline \pi + 1 - \overline \pi)^2}{(\delta \pi + 1 - \pi)^2 (\delta \overline \pi + 1 - \overline \pi)^2} \\
    &= (\pi - \overline \pi) \left\{ \frac{(\delta + 1)(\delta - 1) \pi \overline \pi + \pi + \overline \pi - 1}{(\delta \pi + 1 - \pi)^2 (\delta \overline \pi + 1 - \overline \pi)^2} \right\}.
\end{align*}
Therefore, letting $\mu_a = \mu(a, X)$, we have
\begin{align*}
    \{ \overline \omega (X; \delta) - \omega (X; \delta) \} &\cdot \big[ \mu(1, X) - \overline \mu(1, X) - \{ \mu(0, X) - \overline \mu(0, x) \} \big] = \\
    &(\pi - \overline \pi) \left\{ \frac{(\delta + 1)(\delta - 1) \pi \overline \pi + \pi + \overline \pi - 1}{(\delta \pi + 1 - \pi)^2 (\delta \overline \pi + 1 - \overline \pi)^2} \right\} \cdot \big\{ \mu_1 - \overline \mu_1 - ( \mu_0 - \overline \mu_0 ) \big\} \\
    &= \left\{ \frac{(\delta + 1)(\delta - 1) \pi \overline \pi + \pi + \overline \pi - 1}{(\delta \pi + 1 - \pi)^2 (\delta \overline \pi + 1 - \overline \pi)^2} \right\} \left\{ (\pi - \overline \pi) (\mu_1 - \overline \mu_1) \right\} \\
    &+ \left\{ \frac{(\delta + 1)(1 - \delta) \pi \overline \pi + \pi + \overline \pi - 1}{(\delta \pi + 1 - \pi)^2 (\delta \overline \pi + 1 - \overline \pi)^2} \right\} \left\{ (\pi - \overline \pi) (\mu_0 - \overline \mu_0) \right\}.
\end{align*}
For the second line in \eqref{eq:r2_deriv}, we have
\begin{align*}
    \bbE ( \overline \phi \mid X) + \overline \omega - \omega &= \left\{ \frac{1}{( \delta \overline \pi + 1 - \overline \pi )^2 } - \frac{2 \delta \overline \pi }{( \delta \overline \pi + 1 - \overline \pi )^3} \right\} \cdot ( \pi - \overline \pi ) \\
    &+ (\pi - \overline \pi) \left\{ \frac{(\delta + 1)(\delta - 1) \pi \overline \pi + \pi + \overline \pi - 1}{(\delta \pi + 1 - \pi)^2 (\delta \overline \pi + 1 - \overline \pi)^2} \right\} \\
    &= ( \pi - \overline \pi ) \left\{ \frac{1}{( \delta \overline \pi + 1 - \overline \pi )^2 } - \frac{2 \delta \overline \pi }{( \delta \overline \pi + 1 - \overline \pi )^3} + \frac{(\delta + 1)(\delta - 1) \pi \overline \pi + \pi + \overline \pi - 1}{(\delta \pi + 1 - \pi)^2 (\delta \overline \pi + 1 - \overline \pi)^2} \right\} \\
    &= ( \pi - \overline \pi )^2 \left[ \frac{(1 - \delta - \delta^2 - \delta^3) \pi \overline \pi + 1 - 2\delta + (1-\delta)\{ \overline \pi + (1-\delta)\pi \}}{(\delta \overline \pi + 1 - \overline \pi)^3 (\delta \pi + 1 - \pi)^2} \right]. 
\end{align*}
Finally, revisiting the first line of \eqref{eq:r2_deriv},
\begin{align*}
    \overline \omega(X; \delta) \Bigg[ \left\{ \frac{\pi(X) - \overline \pi(X)}{\overline \pi(X)} \right\} \{ \mu(1, X) - \overline \mu(1, X) \} &+ \left\{ \frac{\pi(X) - \overline \pi(X)}{1 - \overline \pi(X)}  \right\} \{\mu(0, X) - \overline \mu(0, X) \} \Bigg] = \\
    &(\pi - \overline \pi) (\mu_1 - \overline \mu) \left( \frac{\overline \omega}{\overline \pi} \right) + (\pi - \overline \pi) (\mu_0 - \overline \mu_0) \left( \frac{\overline \omega}{1 - \overline \pi} \right),
\end{align*}
\subsection*{Efficient Influence Functions for the average incremental effect and the average incremental contrast effect}

By Corollary 2 in \cite{kennedy2019incremental}, the efficient influence function for the average incremental effect is
\begin{align} 
    \xi_{ie}(Z; \delta) &= \frac{\delta \pi(X) \mu(1, X) + \{ 1 - \pi(X) \} \mu(0, X) }{\delta \pi(X) + 1 - \pi(X)} \nonumber \\
    &\hspace{0.2in} + \left\{ \frac{\delta A + 1 - A}{\delta \pi(X) + 1 - \pi(X)} \right\} \left\{ Y - \mu(A, X) \right\} \nonumber \\
    &\hspace{0.2in} + \big\{ \mu (1, X) - \mu(0, X) \big\} \frac{\delta \{ A - \pi(X) \} }{\left\{ \delta \pi(X) + 1 - \pi(X) \right\}^2}, \label{eq:ie_eif}
\end{align}
\noindent and the efficient influence function for the average incremental contrast effect is
\begin{equation} \label{eq:ice_eif}
    \xi_{ice} (Z; \delta_u, \delta_l) = \xi_{ie} (Z; \delta_u) - \xi_{ie} (Z; \delta_l).
\end{equation}

\cormomenteif*

\begin{proof}
Let $\varphi(\bbP) \equiv \varphi(Z; \delta; \beta)$.  The functional $m(\beta)$ satisfies the following von Mises expansion
\begin{equation} \label{eq:von_mises}
    m(\beta, \overline{\bbP}) - m(\beta, \bbP) = \int_{\mathcal{Z}} \varphi(\overline{\bbP}) d(\overline{\bbP} - \bbP) + R_2 (\overline \bbP, \bbP)
\end{equation}
where 
\begin{align*}
    R_2 (\overline{\bbP}, \bbP) &= m(\beta, \overline{\bbP}) - m(\beta, \bbP) - \int_{\mathcal{Z}} \varphi(\overline{\bbP}) d(\overline{\bbP} - \bbP) \\
    &= \int \varphi(\overline{\bbP}) - \varphi (\bbP) d\bbP.
\end{align*}
Following essentially the same logic as in Lemma \ref{lem:cide_eif}, by iterated expectations and rearranging,
\begin{align*}
    R_2 (\overline{\bbP}, \bbP) = \bbE_{\bbP} \Bigg\{ \frac{\partial g(V; \beta)}{ \partial \beta} &\Bigg( \overline \omega(X; \delta) \left[ \left\{ \frac{\pi(X) - \overline \pi(X)}{\pi(X)} \right\} \{ \mu(1, X) - \mu(1, X) \} + \left\{ \frac{\pi(X) - \overline \pi(X)}{1 - \pi(X)}  \right\} \{\mu(0, X) - \overline \mu(0, X) \} \right] \\
    &\hspace{0.1in} + \{ \overline \mu(1, x) - \overline \mu(0, X) \} \Big[ \bbE (\overline \phi(Z; \delta) \mid X) + \overline \omega(X; \delta) - \omega(X; \delta) \Big] \\
    &\hspace{0.1in} + \left\{ \overline \omega(X;\delta) - \omega(X; \delta) \right\} \left[ \mu(1, X) - \overline \mu(1, X) - \{ \mu(0, X) - \overline \mu(0, X) \} \right] \Bigg) \Bigg\}.
\end{align*}
This second order term can be expressed as a product of errors, as is shown in post script to the proof for Lemma \ref{lem:cide_eif}.  Therefore, since our model is nonparametric, $\varphi(Z; \delta, \beta)$ is the efficient influence function for $m(\beta)$.
\end{proof}

\thmfixedmod*

\begin{proof}
This proof follows closely both Lemma 3 from \cite{kennedy2021semiparametric} and Theorem 5.31 of \cite{van2000asymptotic}.  Since $(\widehat{\beta}, \widehat \mu, \widehat \pi)$ is an approximate solution to the empirical moment condition, $\bbP_n \{ \varphi(Z; \widehat{\beta}, \widehat \mu, \widehat \pi) \} = o_{\bbP} (1 / \sqrt{n})$.  Since $(\beta^\ast, \mu^\ast, \pi^\ast)$ is an exact solution to the population moment condition, $\bbP \{ \varphi(Z; \beta^\ast, \mu^\ast, \pi^\ast) \} = 0$.  Combining these two facts,
$$
o_{\bbP} (1 / \sqrt{n}) = \bbP_n \{ \varphi(Z; \widehat{\beta}, \widehat \mu, \widehat \pi) \} - \bbP \{ \varphi(Z; \beta^\ast, \mu^\ast, \pi^\ast) \}.
$$
By adding and subtracting on the right hand side of the equation above, omitting $Z$, and letting $\eta^\ast = (\mu^\ast, \pi^\ast)$ and $\widehat \eta = (\widehat \mu, \widehat \pi)$:
\begin{align*}
    o_{\bbP} (1 / \sqrt{n}) = &\ (\bbP_n - \bbP) \{ \varphi(\beta^\ast, \eta^\ast) \}  \\
    &+ (\bbP_n - \bbP) \{ \varphi(\widehat{\beta}, \widehat \eta) - \varphi(\beta^\ast, \eta^\ast) \} \\
    &+ (\bbP_n - \bbP) \{ \varphi(\beta^\ast, \widehat \eta) - \varphi(\beta^\ast, \eta^\ast) \} \\
    &+ \bbP \{ \varphi(\widehat{\beta}, \widehat \eta) - \varphi(\beta^\ast, \widehat \eta) \} \\
    &+ \bbP \{ \varphi(\beta^\ast, \widehat \eta) - \varphi(\beta^\ast, \eta^\ast) \}.
\end{align*}
The first term appears directly in the statement in the theorem, so we will not manipulate it.  It is a sample average of a fixed function, and so by the central limit theorem it will be asymptotically Gaussian.  The second and third terms are empirical process terms.  The fourth term can be linearized in $(\widehat{\beta} - \beta)$, and we will use this to rearrange and solve for the statement in the theorem. The fifth term captures the nuisance estimation error, and appears implicitly in the statement in the theorem if we define $R_n = \bbP\{ \varphi(Z; \beta^\ast, \widehat \mu, \widehat \pi) - \varphi(Z; \beta^\ast, \mu^\ast, \pi^\ast) \}$.

\medskip

First, we will tackle the second and third terms.  Under the Donsker and consistency conditions in Assumptions \ref{asm:beta_donsker} and \ref{asm:beta_varphi_cons}, the second term is $o_{\bbP} (1 / \sqrt{n})$ by Lemma 19.24 of \cite{van2000asymptotic}.  Further, under the consistency of $\varphi(\widehat \beta, \widehat \eta)$ in Assumption \ref{asm:beta_varphi_cons} and by sample splitting, the third term is $o_{\bbP} (1 / \sqrt{n})$ by Lemma 2 of \cite{kennedy2020sharp}. 

\medskip

The fourth term, by the differentiability of the map $\beta \mapsto \bbP \{ \varphi(\beta, \eta) \}$ in Assumption \ref{asm:diff_map}, can be expressed as
\begin{align*}
    \bbP \{ \varphi(\widehat{\beta}, \widehat{\eta}) - \varphi(\beta^\ast, \widehat{\eta}) \} &= M(\beta^\ast, \widehat{\eta}) (\widehat{\beta} - \beta^\ast) + o_{\bbP} ( \lVert \widehat{\beta} - \beta^\ast \rVert) \\
    &= M(\beta^\ast, \eta^\ast) (\widehat{\beta} - \beta^\ast) + o_{\bbP} ( \lVert \widehat{\beta} - \beta^\ast \rVert)
\end{align*}
where the first line is a first-order Taylor expansion about $\beta^\ast$ and the second line follows by the consistency of $M(\beta^\ast, \widehat{\eta})$ in Assumption \ref{asm:diff_map}.  

\medskip

Bringing everything together,
\begin{align*}
    o_{\bbP} (1 / \sqrt{n}) = &\ (\bbP_n - \bbP) \{ \varphi(\beta^\ast, \eta^\ast) \}  \\
    &+ (\bbP_n - \bbP) \{ \varphi(\widehat{\beta}, \widehat \eta) - \varphi(\beta^\ast, \eta^\ast) \} \\
    &+ (\bbP_n - \bbP) \{ \varphi(\beta^\ast, \widehat \eta) - \varphi(\beta^\ast, \eta^\ast) \} \\
    &+ \bbP \{ \varphi(\widehat{\beta}, \widehat \eta) - \varphi(\beta^\ast, \widehat \eta) \} \\
    &+ \bbP \{ \varphi(\beta^\ast, \widehat \eta) - \varphi(\beta^\ast, \eta^\ast) \} \\
    = &\ (\bbP_n - \bbP) \{ \varphi(\beta^\ast, \eta^\ast) \}  \\
    &+ o_{\bbP} (1 / \sqrt{n}) \\
    &+ o_{\bbP} (1 / \sqrt{n}) \\
    &+ M(\beta^\ast, \eta^\ast) (\widehat{\beta} - \beta^\ast) + o_{\bbP} ( \lVert \widehat{\beta} - \beta^\ast \rVert ) \\
    &+ R_n.
\end{align*}
Re-arranging, we have that:
$$
o_{\bbP}(1 / \sqrt{n}) = (\bbP_n - \bbP) \{ \varphi(Z; \beta^\ast, \eta^\ast) \} + M(\beta^\ast, \eta^\ast) (\widehat{\beta} - \beta^\ast) + o_{\bbP} ( \lVert \widehat{\beta} - \beta^\ast \rVert ) + R_n.
$$
We can re-arrange and by the non-singularity of the derivative matrix $M$ in Assumption \ref{asm:diff_map} we can pre-multiply both sides by $M(\beta^\ast, \eta^\ast)^{-1}$ and see that:
\begin{equation}
    \widehat{\beta} - \beta^\ast = - M(\beta^\ast, \eta^\ast)^{-1} (\bbP_n - \bbP) \{ \varphi(\beta^\ast, \eta^\ast) \} + O_{\bbP} (R_n) + o_{\bbP} ( \lVert \widehat{\beta} - \beta^\ast \rVert ) + o_{\bbP} (1 / \sqrt{n}). \label{eq:penultimate}
\end{equation}
To address the $o_{\bbP} (\lVert \widehat{\beta} - \beta^\ast \rVert)$ term, notice the following: by Assumption \ref{asm:beta_varphi_cons}, $\widehat{\beta} - \beta^\ast = o_{\bbP} (1)$, and by the Central Limit Theorem, $\varphi(Z; \beta^\ast, \eta^\ast) = O_{\bbP} (1 / \sqrt{n})$.  Therefore, rearranging equation \eqref{eq:penultimate} above to put the $(\widehat{\beta} - \beta^\ast)$ terms on the same side, we have:
$$
\widehat \beta - \beta^\ast + o_{\bbP} ( \lVert \widehat \beta - \beta^\ast \rVert )  = O_{\bbP}(1 / \sqrt{n}) + O_{\bbP} (R_n) + o_{\bbP} (1 / \sqrt{n})
$$
so that 
$$
\lVert \widehat{\beta} - \beta^\ast \rVert \{ 1 + o_{\bbP} (1) \} = O_{\bbP}(1 / \sqrt{n} + R_n)
$$
and therefore
$$
\lVert \widehat{\beta} - \beta^\ast \rVert  = O_{\bbP}(1 / \sqrt{n} + R_n)
$$
and
$$
o_{\bbP} ( \lVert \widehat{\beta} - \beta^\ast \rVert ) = o_{\bbP}( O_{\bbP}(1 / \sqrt{n} + R_n)) = o_{\bbP} (1 / \sqrt{n}) + o_{\bbP} (R_n).
$$
Finally, plugging these results back into equation \eqref{eq:penultimate}, we have:
\begin{align*}
    \widehat{\beta} - \beta^\ast &= - M(\beta^\ast, \eta^\ast)^{-1} (\bbP_n - \bbP) \{ \varphi(\beta^\ast, \eta^\ast) \} + O_{\bbP} (R_n) + o_{\bbP} ( \lVert \widehat{\beta} - \beta^\ast \rVert ) + o_{\bbP} (1 / \sqrt{n}) \\
    &= - M(\beta^\ast, \eta^\ast)^{-1} (\bbP_n - \bbP) \{ \varphi(\beta^\ast, \eta^\ast) \} + O_{\bbP} (R_n) + o_{\bbP} (1 / \sqrt{n}) + o_{\bbP} (R_n) + o_{\bbP} (1 / \sqrt{n}) \\
    &= - M(\beta^\ast, \eta^\ast)^{-1} (\bbP_n - \bbP) \{ \varphi(\beta^\ast, \eta^\ast) \} + O_{\bbP} (R_n) + o_{\bbP} (1 / \sqrt{n}).
\end{align*}
And, adding back in all arguments, we can conclude
$$
\widehat \beta - \beta^\ast =  - M(\beta^\ast, \mu^\ast, \pi^\ast)^{-1} (\bbP_n - \bbP) \{ \varphi(\beta^\ast, \mu^\ast, \pi^\ast; Z) \} + O_{\bbP} \left( R_n + o_{\bbP} (1 / \sqrt{n}) \right)
$$
which provides the first statment of the theorem.

\medskip

For the second statement, recall that $R_n = \bbP\{ \varphi(\beta^\ast, \widehat \mu, \widehat \pi; Z) - \varphi(\beta^\ast, \mu^\ast, \pi^\ast; Z) \}$.
\begin{align*}
    \bbP\{ \varphi(Z; \beta^\ast, \widehat \mu, \widehat \pi) - \varphi(Z; \beta^\ast, \mu^\ast, \pi^\ast) \} &= \int_{\mathcal{Z}} 2 \frac{\partial g(v; \beta^\ast)}{\partial \beta} \{ g(v; \beta^\ast) - \widehat \xi (z; \delta) \} d \bbP(z) \\
    &- 2 \int_{\mathcal{Z}} \frac{\partial g(v; \beta^\ast)}{\partial \beta} \{ g(v; \beta^\ast) - \xi_0 (z; \delta) \}  d \bbP(z) \\
    &= \int_{\mathcal{Z}} 2 \frac{\partial g(v; \beta^\ast)}{\partial \beta} \{ \xi_0 (z; \delta) - \widehat{\xi}(z; \delta) \}  d\bbP(z).
\end{align*}
The final result follows by Cauchy-Schwartz and a boundedness condition, which depends on the target estimand.  If the estimand is a projection of the CIDE, then the result follows by equivalent logic to the proof of Lemma \ref{lem:cide_eif} and the first part of Assumption \ref{asm:boundedness}, which says that the estimated CATE is bounded.  If, instead, the estimand is a projection of the CIE or the CICE, the result follows by equivalent logic to the proofs of Lemmas 5 and 6 in the appendix of \cite{kennedy2019incremental} and the second part of Assumption \ref{asm:boundedness}, which says that the true CATE is bounded.
\end{proof}

\thmdrilearner*

\begin{proof}
This follows from Proposition 1 of \cite{kennedy2020towards}, the definition of $\widehat b (x)$, and by iterated expectation.  
\end{proof}

The bounded functions in $\widehat b(x)$ depend on the relevant estimand.  In general, 
\begin{align*}
    \widehat b (x) &= g_0(x) \Big\{ \pi(x) - \widehat \pi(x) \Big\} \left\{ \mu(0, x) - \widehat \mu(0, x) \right\} + g_1 (x) \Big\{ \pi(x) - \widehat \pi(x) \Big\} \Big\{ \mu(1, x) + \widehat \mu(1, x) \Big\} + h(x) \left\{ \pi(x) - \widehat \pi(x) \right\}^2
\end{align*}
where $g_0 (x), g_1(x), h(x)$ are all bounded functions.  Omitting arguments and letting $\mu_a = \mu(a, X)$, when $\tau_{i-dr} = \tau_{cie}$, then
\begin{align*}
    g_0 (x) \equiv g_0(x; \delta) &= \frac{1}{\delta \widehat \pi + 1 - \widehat \pi} \left( \frac{1 - \widehat \pi}{1 - \pi} - \frac{\delta}{\delta \pi + 1 - \pi} \right) \\
    g_1(x) \equiv g_1(x; \delta) &= \frac{1}{\delta \widehat \pi + 1 - \widehat \pi} \left( \frac{\delta}{\delta \pi + 1 - \pi} - \frac{\delta \widehat \pi}{\pi}\right) \\
    h(x) \equiv h(x; \delta) &= (\mu_1 - \mu_0) \left\{ \frac{\delta (1 - \delta)}{(\delta \pi + 1 - \pi)(\delta \widehat \pi+ 1 - \widehat \pi)^2} \right\}
\end{align*}
When $\tau_{i-dr} = \tau_{cice}$, then
\begin{align*}
    g_0 (x) \equiv g_0(x; \delta_u, \delta_l) &= g_0 (x; \delta_u) - g_0 (x; \delta_l) \\
    g_1(x) \equiv g_1(x; \delta_u, \delta_l) &= g_1(x; \delta_u) - g_1 (x; \delta_l) \\
    h(x) \equiv h(x; \delta_u, \delta_l) &= h(x; \delta_u) - h(x; \delta_l)
\end{align*}
where $g_0(x; \delta), g_1(x; \delta), h(x; \delta)$ are define above for the CIE.  And, when $\tau_{i-dr} = \tau_{cide}$, then
\begin{align*}
    g_0(x) &= \frac{\widehat \pi}{(\delta \widehat \pi + 1 - \widehat \pi)^2} + \left\{ \frac{(\delta + 1)(1 - \delta) \pi \overline \pi + \pi + \overline \pi - 1}{(\delta \pi + 1 - \pi)^2 (\delta \overline \pi + 1 - \overline \pi)^2} \right\} \\
    g_1(x) &= \frac{1 - \widehat \pi}{(\delta \widehat \pi + 1 - \widehat \pi)^2} + \left\{ \frac{(\delta + 1)(1 - \delta) \pi \overline \pi + \pi + \overline \pi - 1}{(\delta \pi + 1 - \pi)^2 (\delta \overline \pi + 1 - \overline \pi)^2} \right\} \\
    h(x) &= (\widehat \mu_1 - \widehat \mu_0) \cdot \left[ \frac{(1 - \delta - \delta^2 - \delta^3) \pi \widehat \pi + 1 - 2\delta + (1-\delta)\{ \widehat \pi + (1-\delta)\pi \}}{(\delta \widehat \pi + 1 - \widehat \pi)^3 (\delta \pi + 1 - \pi)^2} \right]
\end{align*}

\section{Proofs of Results in Section \ref{sec:heterogeneity}}

\lemvareif*
 
\begin{proof}
As in Lemma \ref{lem:cide_eif}, we prove this result by showing that the estimand admits a Von-Mises expansion where the second-order term is a product of errors.  

\medskip

Omitting arguments, let $\xi(\bbP) = 2 \omega \tau (\omega \varphi + \phi \tau) + (\omega \tau)^2.$  Then,
\begin{align*}
    R_2(\overline \bbP, \bbP) &= \int_{\mathcal{Z}} \xi(\overline \bbP) - \xi (\bbP) d \bbP \\
    &= \bbE \left\{ 2 \overline \omega \overline \tau (\overline \omega \overline \varphi + \overline \phi \overline \tau) + (\overline \omega \overline \tau)^2 \right\} - \bbE \left\{ 2 \omega \tau ( \omega \varphi + \phi \tau) + (\omega \tau)^2 \right\}  \\
    &= \bbE \left\{ 2 \overline \omega \overline \tau (\overline \omega \overline \varphi + \overline \phi \overline \tau) + (\overline \omega \overline \tau)^2 - (\omega \tau)^2 \right\} \\
    &= \bbE \left\{ 2 \overline \omega \overline \tau \left( \overline \omega \overline K + \overline \omega (\tau - \overline \tau) + \overline \phi \overline \tau \right) + (\overline \omega \overline \tau)^2 - (\omega \tau)^2 \right\} \\ 
    &= \bbE \left[ 2 \overline \omega \overline \tau \left\{ \overline \omega \overline K + (\overline \omega - \omega) (\tau - \overline \tau) + (\overline \phi - \omega) \overline \tau + \omega \tau \right\} + (\overline \omega \overline \tau)^2 - (\omega \tau)^2 \right] \\
    &= \bbE \left[ 2 \overline \omega \overline \tau \left\{ \overline \omega \overline K + (\overline \omega - \omega) (\tau - \overline \tau) + (\overline \phi + \overline \omega - \omega) \overline \tau - \overline \omega \overline \tau + \omega \tau \right\} + (\overline \omega \overline \tau)^2 - (\omega \tau)^2 \right] \\
    &= \bbE \left( 2 \overline \omega \overline \tau \left[ \overline \omega \overline K + (\overline \omega - \omega) (\tau - \overline \tau) + \overline \tau \Big\{  \bbE ( \overline \phi \mid X) + \overline \omega - \omega \Big\} \right] - 2 (\overline \omega \overline \tau)^2 + 2 \overline \omega \overline \tau \omega \tau + (\overline \omega \overline \tau)^2 - (\omega \tau)^2 \right] \\
    &= \bbE \left( 2 \overline \omega \overline \tau \left[ \overline \omega \overline K + (\overline \omega - \omega) (\tau - \overline \tau) + \overline \tau \Big\{  \bbE ( \overline \phi \mid X) + \overline \omega - \omega \Big\} \right] - \Big( \overline \omega \overline \tau - \omega \tau \Big)^2 \right] \\
    &= \bbE \left( 2 \overline \omega \overline \tau \left[ \overline \omega \overline K + (\overline \omega - \omega) (\tau - \overline \tau) + \overline \tau \Big\{  \bbE (\overline \phi \mid X) + \overline \omega - \omega \Big\} \right] - \Big\{ \overline \omega (\overline \tau - \tau) + \tau (\overline \omega - \omega) \Big\}^2 \right] \\
    &\lesssim \Big( \lVert \overline \mu - \mu \rVert + \lVert \overline \pi - \pi \rVert \Big)^2
\end{align*}
where
$$
\overline K = \sum_a (2a - 1) \left( \frac{\bbP_a - \overline \bbP_a}{\bbP_a} \right) (\mu_a - \overline \mu_a).
$$
and $\bbP_a = \bbP(A = a \mid X)$. The last line follows by similar logic to the post script to Lemma \ref{lem:cide_eif}. This shows that the second order term is a product of errors, and so $\xi(\bbP)$ is the un-centered efficient influence function for $\bbE \{ \tau_{cide}(X; \delta)^2 \}$.
\end{proof}

\thmvcideconv*

\begin{proof}
First, note that, by construction
\begin{align*}
    \widehat \psi_n &= \bbP_n \left\{ 2 \widehat \omega \widehat \tau (\widehat \omega \widehat \varphi + \widehat \phi \widehat \tau) + (\widehat \omega \widehat \tau)^2 - (\widehat \omega \widehat \tau + \widehat \omega \widehat \varphi + \widehat \phi \widehat \tau ) \bbP_n (\widehat \omega \widehat \tau + \widehat \omega \widehat \varphi + \widehat \phi \widehat \tau )  \right\} \\
    &\equiv \bbP_n (\widehat \xi_1) - \bbP_n(\widehat \xi_2)^2
\end{align*}
where the second line follows by defining
\begin{align*}
    \widehat \xi_1 &= 2 \widehat \omega \widehat \tau (\widehat \omega \widehat \varphi + \widehat \phi \widehat \tau) + (\widehat \omega \widehat \tau)^2,\text{ and} \\
    \widehat \xi_2 &= (\widehat \omega \widehat \tau + \widehat \omega \widehat \varphi + \widehat \phi \widehat \tau ).
\end{align*}
Similarly, by construction, $\bbV \{ \tau_{cide} (X; \delta) \} = \bbE (\xi_1) - \bbE(\xi_2)^2$ where
\begin{align*}
    \xi_1 &= 2 \omega \tau (\omega \varphi + \phi \tau) + (\omega \tau)^2,\text{ and} \\
    \xi_2 &= (\omega \tau + \omega \varphi + \phi \tau ).
\end{align*}
Therefore, if we define $\widehat \xi_3 = \widehat \xi_1 - \widehat \xi_2 \bbP_n (\widehat \xi_2)$ and $\xi_3 = \xi_1 - \xi_2 \bbE (\xi_2)$, we have
$$
\widehat \psi_n - \bbV \{ \tau_{cide}(X; \delta) \} = \bbP_n (\widehat \xi_1) - \bbP_n(\widehat \xi_2)^2 - \bbP( \xi_1) + \bbP(\xi_2)^2 = \bbP_n (\widehat \xi_3) - \bbP(\xi_3)
$$
Then, by adding and subtracting terms, we have the usual expansion
\begin{equation} \label{eq:decomp}
    \widehat \psi_n - \bbV \{ \tau_{cide}(X; \delta) \} = (\bbP_n - \bbP) \xi_3 + (\bbP_n - \bbP) (\widehat \xi_3 - \xi_3) + \bbP( \widehat \xi_3 - \xi_3)
\end{equation}
where the first term on the right hand side of the final equation will follow a central limit theorem, the second term is an empirical process term, and the third term is a bias term.

\medskip

Starting with the third term, we see
\begin{align*}
    \bbP (\widehat \xi_3 - \xi_3) &= \bbP (\widehat \xi_1 - \xi_1) + \bbP \big\{ \widehat \xi_2 \bbP_n (\widehat \xi_2) - \xi_2 \bbP( \xi_2) \big\} \\
    &\lesssim \Big( \lVert \widehat \pi - \pi \rVert + \lVert \widehat \mu - \mu \rVert \Big)^2 + \bbP \big\{ \widehat \xi_2 \bbP_n (\widehat \xi_2) - \xi_2 \bbP( \xi_2) \big\}
\end{align*}
where the second line follows by the proof of Lemma \ref{lem:var_eif}. For the second term on the final line above, we see that
\begin{align*}
    \bbP( \widehat \xi_2 \bbP_n (\widehat \xi_2) - \xi_2 \bbP( \xi_2)) &= \left[ \bbE\left\{ (\widehat \omega \widehat \tau + \widehat \omega \widehat \varphi + \widehat \phi \widehat \tau ) \bbP_n (\widehat \omega \widehat \tau + \widehat \omega \widehat \varphi + \widehat \phi \widehat \tau ) \right\} - \bbE \left\{ (\omega \tau + \omega \varphi + \phi \tau ) \bbE (\omega \tau + \omega \varphi + \phi \tau ) \right\} \right] \\
    &= \left[ \bbE\left\{ (\widehat \omega \widehat \tau + \widehat \omega \widehat \varphi + \widehat \phi \widehat \tau ) \bbP_n (\widehat \omega \widehat \tau + \widehat \omega \widehat \varphi + \widehat \phi \widehat \tau ) \right\} - \bbE ( \omega \tau )^2 \right] \\
    &= \frac1n \left[ \bbE \left\{ (\widehat \omega \widehat \tau + \widehat \omega \widehat \varphi + \widehat \phi \widehat \tau )^2 \right\} - \bbE \left( \widehat \omega \widehat \tau + \widehat \omega \widehat \varphi + \widehat \phi \widehat \tau \right)^2 \right] + 2 \bbE \left( \widehat \omega \widehat \tau + \widehat \omega \widehat \varphi + \widehat \phi \widehat \tau  \right)^2 - 2 \bbE (\omega \tau)^2 \\
    &= o_{\bbP}(1/n) +  \bbE \left\{ (\widehat \omega \widehat \tau + \widehat \omega \widehat \varphi + \widehat \phi \widehat \tau ) \right\}^2 - \bbE (\omega \tau)^2 \\
    &= o_{\bbP}(1/n) + \bbE \left(\widehat \omega \widehat \tau + \widehat \omega \widehat \varphi + \widehat \phi \widehat \tau + \omega \tau \right) \bbE \left( \widehat \omega \widehat \tau + \widehat \omega \widehat \varphi + \widehat \phi \widehat \tau - \omega \tau \right)  \\
    &\lesssim o_{\bbP}(1/n) + \Big( \lVert \widehat \pi - \pi \rVert + \lVert \widehat \mu - \mu \rVert \Big) \lVert \widehat \pi - \pi \rVert
\end{align*}
where the fourth line follows by Assumption \ref{asmp:bounded_eif}, the fifth line because $a^2 - b^2 = (a + b)(a-b)$, the sixth by Assumption \ref{asmp:bounded_eif}, which implies $\omega \tau$ is bounded, and by the proof of Lemma \ref{lem:cide_eif}.  Therefore, if $\Big( \lVert \widehat \pi - \pi \rVert + \lVert \widehat \mu - \mu \rVert \Big)^2 = o_{\bbP} (n^{-1/2})$ then $\sqrt{n} \bbP( \widehat \xi_3 - \xi_3)  = o_{\bbP}(1)$.

\medskip

The second term from eq. \eqref{eq:decomp}, the empirical process term, is simpler to bound.  By Lemma 2 of \cite{kennedy2020sharp} and by sample splitting, we have
$$
(\bbP_n - \bbP) (\widehat \xi_3 - \xi_3) = O_{\bbP} \left( \frac{\lVert \widehat \xi_3 - \xi_3 \rVert }{n} \right). 
$$
By the triangle inequality
\begin{align*}
    \lVert \widehat \xi_3 - \xi_3 \rVert &\leq \lVert \widehat \xi_1 - \xi_1 \rVert + \lVert \widehat \xi_2 \bbP_n (\widehat \xi_2) - \xi_2 \bbP(\xi_2) \rVert \\
    &= \lVert \widehat \xi_1 - \xi_1 \rVert + \lVert \bbP_n (\widehat \xi_2) (\widehat \xi_2  - \xi_2) + \xi_2 \big\{ \bbP_n (\widehat \xi_2) -  \bbP(\xi_2) \big\} \rVert \\
    &\lesssim \lVert \widehat \xi_1 - \xi_1 \rVert + \lVert \widehat \xi_2 - \xi_2 \rVert + \lVert \bbP_n (\widehat \xi_2) -  \bbP(\xi_2) \rVert
\end{align*}
where the last line follows by assumption \ref{asmp:bounded_eif}, which says that $\xi_2$ and $\widehat \xi_2$ are bounded.  All three terms are $o_{\bbP}(1)$ since $\Big( \lVert \widehat \pi - \pi \rVert + \lVert \widehat \mu - \mu \rVert \Big)^2 = o_{\bbP} (n^{-1/2})$.  Thus,
$$
\psi_n - \bbV \{ \tau_{cide} (X; \delta) \} = (\bbP_n - \bbP)(\xi_3) + o_{\bbP} (n^{-1/2})
$$
Therefore, by the central limit theorem,
$$
\sqrt{n} \Big[ \psi_n - \bbV \{ \tau_{cide} (X; \delta) \} \Big] \indist N( 0, \bbV (\xi_3) ) = N(0, \sigma^2)
$$
with $\sigma^2$ as defined in the theorem.
\end{proof}

\begin{restatable}{proposition}{proplinear} \label{prop:linear}
Let 
\begin{align*}
    \widehat \psi_{n,1} &= \bbP_n \left[ 2 \widehat \omega \big( \widehat \mu_1 - \widehat \mu_0 \big) \left\{ \widehat \omega \widehat \varphi + \widehat \phi \big( \widehat \mu_1 - \widehat \mu_0 \big) \right\} + \left\{ \widehat \omega \big( \widehat \mu_1 - \widehat \mu_0 \big) \right\}^2 \right], \text{ and } \\
    \widehat \psi_{n,2} &= \bbP_n \Big\{ \widehat \omega \widehat \varphi + \widehat \phi \big( \widehat \mu_1 - \widehat \mu_0 \big) + \widehat \omega \big( \widehat \mu_1 - \widehat \mu_0 \big) \Big\}^2.
\end{align*}
Under Assumptions \ref{asmp:cons} and \ref{asmp:exch}, Assumption \ref{asm:boundedness} from Theorem \ref{thm:fixed_mod}, and Assumption \ref{asmp:bounded_eif} from Theorem \ref{thm:vcide_conv}.  If
$$
\Big( \lVert \widehat \pi - \pi \lVert + \lVert \widehat \mu - \mu \rVert \Big)^2 = o_\bbP \left( \frac{1}{\sqrt n} \right),
$$
then
\begin{align}
    \sqrt{n} \Big[ \widehat \psi_{n,1} - \bbE \{ \tau_{cide}(X; \delta)^2 \} \Big] &\indist N \Big( 0, \bbV ( \zeta_1) \Big), \text{and} \\
    \sqrt{n} \Big[ \widehat \psi_{n,2} - \bbE \{ \tau_{cide}(X; \delta) \}^2 \Big] &\indist N \Big( 0, 4 \bbV ( \zeta_2) \Big)
\end{align}
where
\begin{align}
    \zeta_1 &= 2 \omega \big( \mu_1 - \mu_0 \big) \left\{ \omega \varphi + \phi \big( \mu_1 - \mu_0 \big) \right\} + \left\{ \omega \big( \mu_1 - \mu_0 \big) \right\}^2, \text{and} \label{eq:zeta_1} \\
    \zeta_2 &= \bbE \Big\{ \omega \varphi + \phi \big( \mu_1 - \mu_0 \big) + \omega \big( \mu_1 - \mu_0 \big) \Big\} \cdot \Big\{ \omega \varphi + \phi \big( \mu_1 - \mu_0 \big) + \omega \big( \mu_1 - \mu_0 \big) \Big\} \label{eq:zeta_2}.
\end{align}
\end{restatable}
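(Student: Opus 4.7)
The plan is to prove the two convergence statements separately: the first via the standard one-step / von Mises decomposition of $\widehat \psi_{n,1}$ around the un-centered efficient influence function identified in Lemma~\ref{lem:var_eif}, and the second via the delta method applied to the one-step estimator of the average incremental derivative effect from Lemma~\ref{lem:cide_eif}.

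For $\widehat \psi_{n,1}$, let $\widehat \zeta_1$ denote the plug-in version of $\zeta_1$ defined in \eqref{eq:zeta_1}. I would write the familiar decomposition
\[
\widehat \psi_{n,1} - \bbE\{\tau_{cide}(X;\delta)^2\} \;=\; (\bbP_n - \bbP)\zeta_1 \;+\; (\bbP_n - \bbP)(\widehat \zeta_1 - \zeta_1) \;+\; \bbP(\widehat \zeta_1 - \zeta_1).
\]
The first term is asymptotically $N(0, \bbV(\zeta_1))$ by the central limit theorem, where Assumption~\ref{asmp:bounded_eif} furnishes the finite second moment. The second term is $o_\bbP(n^{-1/2})$ by sample splitting together with Lemma~2 of \cite{kennedy2020sharp}, using Assumptions~\ref{asm:boundedness} and~\ref{asmp:bounded_eif} plus the nuisance consistency to ensure $\lVert \widehat \zeta_1 - \zeta_1 \rVert = o_\bbP(1)$. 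The third term is precisely the second-order remainder $R_2(\widehat\bbP, \bbP)$ that was already controlled inside the proof of Lemma~\ref{lem:var_eif} by $\lesssim (\lVert \widehat \mu - \mu \rVert + \lVert \widehat \pi - \pi \rVert)^2$, which is $o_\bbP(n^{-1/2})$ by hypothesis. Combining the three pieces yields the first limit.

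For $\widehat \psi_{n,2}$, set $\theta = \bbE\{\tau_{cide}(X;\delta)\}$ and $\xi_{\mathrm{ade}} = \omega \varphi + \phi(\mu_1 - \mu_0) + \omega(\mu_1 - \mu_0)$, so that $\widehat \psi_{n,2} = \bigl(\bbP_n \widehat \xi_{\mathrm{ade}}\bigr)^{2}$ and the target equals $\theta^2$. Applying the same one-step decomposition to the one-dimensional parameter $\theta$ and invoking Lemma~\ref{lem:cide_eif} for the second-order bias bound gives $\sqrt{n}\bigl(\bbP_n \widehat \xi_{\mathrm{ade}} - \theta\bigr) \indist N(0, \bbV(\xi_{\mathrm{ade}}))$ under the stated nuisance rate. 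The delta method with $g(t) = t^2$ and $g'(\theta) = 2\theta$ then delivers $\sqrt{n}(\widehat \psi_{n,2} - \theta^2) \indist N(0, 4\theta^2 \bbV(\xi_{\mathrm{ade}}))$. Since $\zeta_2 = \theta \cdot \xi_{\mathrm{ade}}$ by inspection of \eqref{eq:zeta_2}, we have $\bbV(\zeta_2) = \theta^2 \bbV(\xi_{\mathrm{ade}})$, so the stated asymptotic variance $4 \bbV(\zeta_2)$ matches exactly.

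The main obstacle is bookkeeping rather than any new analytic ingredient: the hard work of bounding the nonlinear remainders was already performed in the proofs of Lemmas~\ref{lem:cide_eif} and~\ref{lem:var_eif}, so the only step requiring genuine care is reconciling the algebraic form of $\zeta_2$ in the proposition with the delta-method output $4 \theta^2 \bbV(\xi_{\mathrm{ade}})$, which reduces to the observation $\zeta_2 = \theta \cdot \xi_{\mathrm{ade}}$ above.
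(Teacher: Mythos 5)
Your proposal is correct and follows essentially the same route as the paper, whose proof is simply the one-step/von Mises argument from Lemmas \ref{lem:var_eif} and \ref{lem:cide_eif} combined with the delta method for the second limit; you have just made the bookkeeping explicit. Your reconciliation $\zeta_2 = \bbE\{\xi_{\mathrm{ade}}\}\cdot\xi_{\mathrm{ade}}$, so that $4\bbV(\zeta_2) = 4\,\bbE\{\tau_{cide}(X;\delta)\}^2\,\bbV(\xi_{\mathrm{ade}})$ matches the delta-method variance, is exactly the step the paper leaves implicit, and it also covers the degenerate case $\bbE\{\tau_{cide}(X;\delta)\}=0$ consistently.
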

\begin{proof}
This result follows by Lemmas \ref{lem:cide_eif} and \ref{lem:var_eif}, the conditions of the Proposition, and the Delta method (for the second convergence result).
\end{proof}

\begin{restatable}{proposition}{propconsvar}\label{prop:cons_var}
Let $\zeta_1, \zeta_2$ be as defined in equations \eqref{eq:zeta_1} and \eqref{eq:zeta_2}. Then, when $\bbV \{ \tau_{cide} (X; \delta) \} = 0$,
\begin{equation} \label{eq:pos_cov}
    \cov (\zeta_1, \zeta_2) \geq 0,
\end{equation}
so that
\begin{equation} \label{eq:var_small}
    \bbV (\zeta_1 - \zeta_2) = \bbV (\zeta_1) + \bbV (\zeta_2) - 2 \cov (\zeta_1, \zeta_2) \leq \bbV (\zeta_1) + \bbV (\zeta_2).
\end{equation}
\end{restatable}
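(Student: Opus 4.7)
The plan is to exploit the fact that, under $\bbV\{\tau_{cide}(X;\delta)\}=0$, both $\zeta_1$ and $\zeta_2$ collapse into affine functions of a single random variable, making the covariance trivial to evaluate.

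First I would introduce the shorthand $\xi = \omega\varphi + \phi(\mu_1 - \mu_0) + \omega(\mu_1 - \mu_0)$, which is the un-centered efficient influence function for the average derivative effect $\bbE\{\tau_{cide}(X;\delta)\}$ from Lemma \ref{lem:cide_eif}. Since $\bbE(\varphi \mid X) = 0$ and $\bbE(\phi \mid X) = 0$ by a direct conditioning argument against the definitions in \eqref{eq:varphi}--\eqref{eq:phi}, we have $\bbE(\xi) = \bbE\{\omega(\mu_1-\mu_0)\}$. With this notation, $\zeta_1$ and $\zeta_2$ rewrite as
\begin{equation*}
\zeta_1 = 2\,\omega(\mu_1-\mu_0)\,\xi - \{\omega(\mu_1-\mu_0)\}^2, \qquad \zeta_2 = \bbE(\xi)\cdot\xi.
\end{equation*}

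Next I would invoke the identification $\tau_{cide}(X;\delta) = \omega(X;\delta)(\mu_1 - \mu_0)$ from Proposition \ref{prop:deriv} evaluated at $V = X$, so that $\bbV\{\tau_{cide}(X;\delta)\} = 0$ is equivalent to $\omega(\mu_1-\mu_0) = c$ almost surely for some constant $c$; in particular $\bbE(\xi) = c$. Substituting yields $\zeta_1 = 2c\xi - c^2$ and $\zeta_2 = c\xi$, and therefore
\begin{equation*}
\cov(\zeta_1,\zeta_2) = \cov(2c\xi - c^2,\,c\xi) = 2c^2\,\bbV(\xi) \geq 0,
\end{equation*}
which is eq.~\eqref{eq:pos_cov}. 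The bound in eq.~\eqref{eq:var_small} then follows immediately from the bilinear expansion $\bbV(\zeta_1-\zeta_2) = \bbV(\zeta_1) + \bbV(\zeta_2) - 2\,\cov(\zeta_1,\zeta_2)$.

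There is no serious technical obstacle: the only conceptual step is recognizing the rewriting of $\zeta_1$ in terms of $\xi$ and $\omega(\mu_1-\mu_0)$, which turns the statement into a one-line covariance calculation. All of the ingredients — the mean-zero property of $\varphi$ and $\phi$ conditional on $X$, and the identification of $\tau_{cide}(X;\delta)$ with $\omega(X;\delta)(\mu_1 - \mu_0)$ when $V = X$ — are already in place from earlier in the paper.
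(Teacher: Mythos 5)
Your proposal is correct and follows essentially the same route as the paper: both reduce the null hypothesis $\bbV\{\tau_{cide}(X;\delta)\}=0$ to $\omega(\mu_1-\mu_0)=c$ almost surely, rewrite $\zeta_1$ and $\zeta_2$ as affine functions of the same random variable (your $\xi$, the paper's $\omega\varphi+\phi(\mu_1-\mu_0)$, which differ only by the additive constant $c$), and read off the nonnegative covariance. Incidentally, your value $\cov(\zeta_1,\zeta_2)=2c^2\,\bbV(\xi)$ is the correct constant; the paper's displayed equation carries a harmless factor-of-two slip that does not affect the sign or the conclusion.
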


\begin{proof}
By the assumption that $\bbV \{ \tau_{cide}(X; \delta) \} = 0$, it follows that $\tau_{cide} (X; \delta) \equiv \omega (\mu_1 - \mu_0) = C$ for some constant $C$.  As a reminder, 
\begin{align*}
    \zeta_1 &= 2 \omega \big( \mu_1 - \mu_0 \big) \left\{ \omega \varphi + \phi \big( \mu_1 - \mu_0 \big) \right\} + \left\{ \omega \big( \mu_1 - \mu_0 \big) \right\}^2, \text{ and} \label{eq:zeta_1} \\
    \zeta_2 &= \bbE \Big\{ \omega \varphi + \phi \big( \mu_1 - \mu_0 \big) + \omega \big( \mu_1 - \mu_0 \big) \Big\} \cdot \Big\{ \omega \varphi + \phi \big( \mu_1 - \mu_0 \big) + \omega \big( \mu_1 - \mu_0 \big) \Big\}.
\end{align*}
Therefore, when $\omega (\mu_1 - \mu_0) = C$, then
\begin{align*}
    \zeta_1 &= 2 C \left\{ \omega \varphi + \phi \big( \mu_1 - \mu_0 \big) \right\} + C^2, \text{ and } \\
    \zeta_2 &= \bbE \{ \omega \varphi + \phi (\mu_1 - \mu_0) + C \} \{ \omega \varphi + \phi (\mu_1 - \mu_0) + C \} \\
    &= \bbE \{ C \}  \{ \omega \varphi + \phi (\mu_1 - \mu_0) + C \}  \\
    &= C \{ \omega \varphi + \phi (\mu_1 - \mu_0) \} + C^2.
\end{align*}
And so,
$$
2 \cov (\zeta_1, \zeta_2) = 2C^2 \bbV \{ \omega \varphi + \phi (\mu_1 - \mu_0) \} \geq 0,
$$
which implies the result.
\end{proof}

\propcoverage*

\begin{proof}
By definition,
$$
\widehat \psi_n = \widehat \psi_{n,1} - \widehat \psi_{n,2},
$$
where $\widehat \psi_{n,1}$ and $\widehat \psi_{n,2}$ are defined in Proposition \ref{prop:linear}.  By Proposition \ref{prop:linear}, $\widehat \psi_{n,1}$ and $\widehat \psi_{n,2}$ converge to non-degenerate distributions,
\begin{align*}
    \sqrt{n} \Big[ \widehat \psi_{n,1} - \bbE \{ \tau_{cide}(X; \delta)^2 \} \Big] &\indist N \Big( 0, \bbV ( \zeta_1) \Big), \text{ and} \\
    \sqrt{n} \Big[ \widehat \psi_{n,2} - \bbE \{ \tau_{cide}(X; \delta) \}^2 \Big] &\indist N \Big( 0, 4 \bbV ( \zeta_2) \Big),
\end{align*}
where $\zeta_1$ and $\zeta_2$ are defined in equations \eqref{eq:zeta_1} and \eqref{eq:zeta_2}.  And so, 
$$
\sqrt{n} \Big[ \widehat \psi_n - \bbV \{ \tau_{cide}(X; \delta) \} \Big] \indist N \Big( 0, \bbV(\zeta_1) + 4\bbV(\zeta_2) - 4\cov( \zeta_1, \zeta_2) \Big).
$$
By Proposition \ref{prop:cons_var}, when $\bbV \{ \tau_{cide}(X; \delta) \} = 0$, 
$$
\bbV(\zeta_1) + 4\bbV(\zeta_2) - 4\cov( \zeta_1, \zeta_2) \leq \bbV(\zeta_1) + 4\bbV(\zeta_2).
$$
Therefore, since $\widehat \sigma_1^2$ and $\widehat \sigma_2^2$ in equations \eqref{eq:hatsigma1} and \eqref{eq:hatsigma2} are consistent estimators for $\bbV(\zeta_1)$ and $4\bbV(\zeta_2)$ respectively by the weak law of large numbers, $\widehat \sigma_1^2 + \widehat \sigma_2^2$ is a consistent estimator for $\bbV(\zeta_1) + 4\bbV(\zeta_2)$.  And so, when $\bbV \{ \tau_{cide}(X; \delta) \} = 0$, 
$$
\lim_{n \to \infty} \mathbb{P} \left( \sqrt{n} \widehat \psi_n \leq \Phi^{-1}(1-\alpha) \sqrt{\widehat \sigma_1^2 + \widehat \sigma_2^2} \right) \leq \alpha,
$$
which implies that the asymptotic Type I error of the test in \eqref{eq:test} is less that or equal to $\alpha$.
\end{proof}

\end{document}